\documentclass[11pt,a4paper]{article}

\usepackage[utf8]{inputenc}
\usepackage[T1]{fontenc}
\usepackage[left=2.5cm,right=2.5cm,top=3cm,bottom=3cm]{geometry}
\usepackage[toc,page]{appendix}
\usepackage{amsmath,amsfonts,amssymb,amsthm,graphicx,dsfont,color,stmaryrd,subcaption}
\usepackage{enumitem}
\usepackage{hyperref}
\usepackage{float}
\usepackage[dvipsnames]{xcolor}
\usepackage{pifont}
\usepackage[normalem]{ulem}
\usepackage{wrapfig}
\usepackage{mathabx}
\usepackage{array, booktabs}
\usepackage{diagbox}

\usepackage[english]{babel}

\DeclareMathOperator{\e}{e}

\DeclareMathOperator{\Proj}{\mathrm{Proj}}

\DeclareMathOperator{\A}{\mathcal{A}}
\DeclareMathOperator{\F}{\mathcal{F}}
\DeclareMathOperator{\N}{\mathcal{N}}

\DeclareMathOperator{\R}{\mathds{R}}

\DeclareMathOperator{\V}{\mathds{V}\mathrm{ar}}
\DeclareMathOperator{\E}{\mathds{E}}
\DeclareMathOperator{\Cov}{\mathds{C}\mathrm{ov}}
\DeclareMathOperator{\Corr}{\mathds{C}\mathrm{orr}}
\DeclareMathOperator{\Prob}{\mathds{P}}
\DeclareMathOperator{\Q}{\mathds{Q}}
\DeclareMathOperator{\Law}{\mathcal{L}}

\DeclareMathOperator{\1}{\mathds{1}}

\DeclareMathOperator\supp{supp}

\DeclareMathOperator{\Distortion}{\mathcal{Q}_{2,N}}

\newcommand\indep{\protect\mathpalette{\protect\indeT}{\perp}}
\def\indeT#1#2{\mathrel{\rlap{$#1#2$}\mkern2mu{#1#2}}}

\newcommand{\vertiii}[1]{{\left\vert\kern-0.25ex\left\vert\kern-0.25ex\left\vert #1
\right\vert\kern-0.25ex\right\vert\kern-0.25ex\right\vert}}

\theoremstyle{plain}
\newtheorem{theorem}{Theorem}[section]
\newtheorem{lemme}[theorem]{Lemma}
\newtheorem{proposition}[theorem]{Proposition}

\theoremstyle{definition}
\newtheorem{definition}[theorem]{Definition}

\newtheorem{remark}[theorem]{Remark}
\newtheorem{remarks}[theorem]{Remarks}


\numberwithin{equation}{section}

\renewenvironment{abstract}
 {\small
    \begin{center}
        \bfseries \abstractname\vspace{-.5em}\vspace{0pt}
    \end{center}
    \list{}{%
    \setlength{\leftmargin}{20mm}
    \setlength{\rightmargin}{\leftmargin}%
 }%
 \item\relax}
 {\endlist}

\providecommand{\keywords}[1]
{
  {
  \small
  \textbf{\textit{Keywords---}} #1
  }
}

\newcommand*\samethanks[1][\value{footnote}]{\footnotemark[#1]}

\begin{document}

\graphicspath{{graphics/}}

\title{\textbf{Quantization-based Bermudan option pricing in the $FX$ world}}
\author{
	\sc Jean-Michel Fayolle
	\thanks{The Independent Calculation Agent, ICA, 112 Avenue Kleber, 75116 Paris, France.}
	\and
	\sc Vincent Lemaire
	\thanks{Sorbonne Université, Laboratoire de Probabilités, Statistique et Modélisation, LPSM, Campus Pierre et Marie Curie, case 158, 4 place Jussieu, F-75252 Paris Cedex 5, France.}
	\and
	\sc Thibaut Montes
	\samethanks[1]
	\samethanks[2]
	\and
	\sc Gilles Pagès
	\samethanks[2]
}
\maketitle


\begin{abstract}
	This paper proposes two numerical solution based on Product Optimal Quantization for the pricing of Foreign Exchange (FX) linked long term Bermudan options e.g. Bermudan Power Reverse Dual Currency options, where we take into account stochastic domestic and foreign interest rates on top of stochastic FX rate, hence we consider a 3-factor model. For these two numerical methods, we give an estimation of the $L^2$-error induced by such approximations and we illustrate them with market-based examples that highlight the speed of such methods.
\end{abstract}

\keywords{Foreign Exchange rates; Bermudan Options; Numerical method; Power Reverse Dual Currency; Product Optimal Quantization.}
~ \\



\section*{Introduction}

Persistent low levels of interest rates in Japan in the latter decades of the 20th century were one of the core sources that led to the creation of structured financial products responding to the need of investors for coupons higher than the low yen-based ones. This started with relatively simple dual currency notes in the 80s where coupons were linked to foreign (i.e. non yen-based) currencies enabling payments of coupons significantly higher. As time (and issuers’ competition) went by, such structured notes were iteratively “enhanced” to reverse dual currency, power reverse dual currency (PRDC), cancellable power reverse dual currency etc., each version adding further features such as limits, early repayment options, etc. Finally, in the early 2000s, the denomination xPRD took root to describe those structured notes typically long-dated (over 30y initial term) and based on multiple currencies (see \cite{wystup2017fx}). The total notional invested in such notes is likely to be in the hundreds of billions of USD.
The valuation of such investments obviously requires the modeling of the main components driving the key risks, namely the interest rates of each pair of currencies involved as well as the corresponding exchange rates. In its simplest and most popular version, that means 3 sources of risk: domestic and foreign rates and the exchange rate. The 3-factor model discussed herein is an answer to that problem.

Gradually, as the note's features became more and more complex, further refinements to the modeling were needed, for instance requiring the inclusion of the volatility smile, the dependence of implied volatilities on both the expiry and the strike\footnote{In the case of the FX, the implied volatility is expressed in function of the delta.} of the option, prevalent in the $FX$ options market. Such more complete modeling should ideally consist in successive refinements of the initial modeling enabling consistency across the various flavors of xPRDs at stake.

The model discussed herein was one of the answers popular amongst practitioners for multiple reasons: it was accounting for the main risks -- interest rates in the currencies involved and exchange rates -- in a relatively simple manner and the numerical implementations proposed at that time were based on simple extensions of well-known single dimensional techniques such as 3 dimensional trinomial trees, PDE based method (see \cite{piterbarg2005multi}) or on Monte Carlo simulations.

Despite the qualities of these methods, the calculation time could be rather slow (around 20 minutes with a trinomial tree for one price), especially when factoring in the cost for hedging (that is, measuring the sensitivities to all the input parameters) and even more post 2008, where the computation of risk measures and their sensitivities to market values became a central challenge for the financial markets participants. Indeed, even though these products were issued towards the end of the 20th century, they are still present in the banks's books and need to be considered when evaluating counterparty risk computations such as Credit Valuation Adjustment (CVA), Debt Valuation Adjustment (DVA), Funding Valuation Adjustment (FVA), Capital Valuation Adjustment (KVA), ..., in short xVA's (see \cite{brigo2013counterparty,crepey2014counterparty,gregory2015xva} for more details on the subject). Hence, a fast and accurate numerical method is important for being able to produce the correct values in a timely manner. The present paper aims at providing an elegant and efficient answer to that problem of numerical efficiency based on Optimal Quantization. Our novel method allows us reach a computation time of 1 or 2 seconds at the expense of a systematic error that we quantify in Section \ref{3F:section:bermudan_evaluation_quantization}.

\medskip

Let $P(t, T)$ be the value at time $t$ of one unit of the currency delivered (that is, paid) at time $T$, also known as a zero coupon price or discount factor. A few iterations were needed by researchers and practitioners before the seminal family of Heath-Jarrow-Morton models came about. The general Heath-Jarrow-Morton (HJM) family of yield curve models can be expressed as follows -- although originally expressed by its authors in terms of rates dynamics, the two are equivalent, see \cite{heath1992bond} -- in a $n$-factor setting, we have for the curve $P(t, T)$ that
\begin{equation}
	\frac{dP(t, T)}{P(t, T)} = r_t dt + \sum_i \sigma_i \big( t, T, P(t, T) \big)dW^i_t
\end{equation}
where $r_t$ is the instantaneous rate at time $t$ (therefore a random variable), $W^i, \, i = 1, \cdots, n$ are $n$ correlated Brownian motions and $\sigma_i \big( t, T, P(t, T) \big)$ are volatility functions in the most general settings (with the obvious constraint that $\sigma_i \big( T, T, P(T, T) \big) = 0$). Indeed, the general HJM framework allows for the volatility functions $\sigma_i \big( t, T, P(t, T) \big)$ to also depend on the yield curve’s (random) levels up to $t$ -- actually through forward rates -- and therefore be random too. However, it has been demonstrated in \cite{el1992arbitrage} that, to keep a tractable version (i.e. a finite number of state variables), the volatility functions must be of a specific form, namely, of the mean-reverting type (where the mean reversion can also depend on time). We use this way of expressing the model as a mean to recall that such model is essentially the usual and well-known Black Scholes model applied to all and any zero-coupon prices, with various enhancements regarding number of factors and volatility functions, to keep the calculations tractable. For further details and theory, one can refer to some of the following articles \cite{elkaroui1996noteonbehavior,el1992arbitrage,heath1992bond,black1973pricing}.
Of course, such a framework can be applied to any yield curve. In its simplest form (i.e. flat volatility and one-factor), we have under the risk-neutral measure
\begin{equation}\label{3F:GYC}
	\frac{dP(t,T)}{P(t,T)} = r_t dt + \sigma (T-t) dW_t
\end{equation}
where $W$ is a standard Brownian motion under the risk-neutral probability. In that case, $\sigma$ is the flat volatility, which means the volatility of (zero-coupon) interest rates. That is often referred to as a Hull-White model without mean reversion (see \cite{hull1993one}) or a continuous-time version of the Ho-Lee model. In the rest of the paper, we work with the model presented in \eqref{3F:GYC} for the diffusion of the zero coupon although the extension to non-flat volatilities is easily feasible.

About the Foreign Exchange ($FX$) rate, we denote by $S_t$ the value at time $t>0$ of one unit of foreign currency in the domestic one. The diffusion is that of a standard Black-Scholes model with the following equation
\begin{equation}
	\frac{ dS_t }{ S_t } = ( r^d_t - r^f_t ) dt + \sigma_S dW^S_t
\end{equation}
where $r^d_t$ is the instantaneous rate of the domestic currency at time $t$, $r^f_t$ is the instantaneous rate of the foreign currency at time $t$, $\sigma_S$ is the volatility of the $FX$ rate and $W^S$ is a standard Brownian motion under the risk-neutral probability.

\medskip

Let us briefly recall the principle of the adopted numerical method, Optimal quantization. Optimal Quantization is a numerical method whose aim is to approximate optimally, for a given norm, a continuous random signal by a discrete one with a given cardinality at most $N$. \cite{sheppard1897calculation} was the first to work on it for the uniform distribution on unit hypercubes. Since then, it has been extended to more general distributions with applications to Signal transmission in the 50's at the Bell Laboratory (see \cite{gersho1982special}). Formally, let $Z$ be an $\R^d$-valued random vector with distribution $\Prob_{_{Z}}$ defined on a probability space $( \Omega, \A, \Prob )$ such that $Z \in L^2(\Prob)$. We search for $\Gamma_N$, a finite subset of $\R^d$ defined by $\Gamma_N := \{ z_1^N, \dots, z_N^N \} \subset \R^d$, solution to the following problem
\begin{equation*}
	\min_{\Gamma_N \subset \R^d, \vert \Gamma_N \vert \leq N } \Vert Z - \widehat Z^N \Vert_{_2}
\end{equation*}
where $\widehat Z^N$ denotes the nearest neighbour projection of $Z$ onto $ \Gamma_N $. This problem can be extended to the $L^p$-optimal quantization by replacing the $L^2$-norm by the $L^p$-norm but this not in the scope of this paper. In our case, we mostly consider quadratic one-dimensional optimal quantization, i.e $d=1$ and $p=2$. The existence of an optimal quantizer at level $N$ goes back to \cite{cuesta1997existence} (see also \cite{pages1998space,graf2000foundations} for further developments). In the one-dimensional case, if the distribution of $Z$ is absolutely continuous with a \textit{log-concave} density, then there exists a unique optimal quantizer at level $N$, see \cite{kieffer1983uniqueness}. We scale to the higher dimension using Optimal Product Quantization which deals with multi-dimensional quantizers built by considering the cartesian product of one-dimensional optimal quantizers.

Considering again $Z = (Z^{\ell})_{\ell=1:d}$, a $\R^d$-valued random vector. First, we look separately at each component $Z^{\ell}$ independently by building a one-dimensional optimal quantization $\widehat Z^{\ell}$ of size $N^{\ell}$, with quantizer $ \Gamma_{\ell}^{N_{\ell}} = \big\{ z_{i_{\ell}}^{\ell}, i_{\ell} \in \{ 1, \cdots, N_{\ell} \} \big\} $ and then, by applying the cartesian product between the one-dimensional optimal quantizers, we build the product quantizer $\Gamma^{N}= \prod_{\ell=1}^d \Gamma_{\ell}^{N_{\ell}}$ with cardinality $N = N^{1} \times \cdots \times N^{d}$ by
\begin{equation}
	\Gamma^{N} = \big\{ ( z_{i_1}^{1}, \cdots, z_{i_{\ell}}^{\ell}, \cdots, z_{i_d}^{d} ),\quad i_{\ell} \in \{ 1, \cdots, N_{\ell} \},\quad \ell \in \{ 1, \cdots, d \} \big\}.
\end{equation}

Then, in the 90s, \cite{pages1998space} developed quantization-based cubature formulas for numerical integration purposes and expectation approximations. Indeed, let $f$ be a continuous function $f:\R^d \longrightarrow \R$ such that $f(Z) \in L^1 ( \Prob )$, we can define the following quantization-based cubature formula using the discrete property of the quantizer $\widehat Z^N$
\begin{equation*}
	\E \big[ f(\widehat Z^N) \big] = \sum_{i=1}^{N} p_i f(z_i^N)
\end{equation*}
where $p_i = \Prob ( \widehat Z^N = z_i^N)$. Then, one could want to approximate $\E \big[ f(Z) \big]$ by $\E \big[ f(\widehat Z^N) \big]$ when the first expression cannot be computed easily. For example, this case is exactly the problem one encounters when trying to price European options. We know the rate of convergence of the weak error induced by this cubature formula, i.e $\exists \alpha \in (0,2]$, depending on the regularity of $f$ such that
\begin{equation}
	\lim_{N \rightarrow + \infty} N^{\alpha} \big\vert \E \big[ f(Z) \big] - \E \big[ f ( \widehat Z^N ) \big] \big\vert \leq C_{f, X} < + \infty.
\end{equation}
For more results on the rate of convergence, the value of $\alpha$, we refer to \cite{pages2018numerical} for a survey in $\R^d$ and to \cite{lemaire2019weakerror} for recent improved results in the one-dimensional case.

Later on, in a series of papers, among them \cite{bally2003quantization,printems2005quantization} extended this method to the computation of conditional expectations allowing to deal with nonlinear problems in finance and, more precisely, to the pricing and hedging of American/Bermudan options, which is the part we are interested in. These problems are of the form
\begin{equation*}
	\sup_{\tau} \E \big[ \e^{- \int_{0}^{\tau} r_s^d ds } \psi_{\tau} ( S_{\tau} ) \big]
\end{equation*}
where $\big( \e^{- \int_{0}^{t_k} r_s^d ds } \psi_{t_k} ( S_{t_k} ) \big)_{k=0, \dots, n}$ is the obstacle function and $\tau : \Omega \rightarrow \{ t_0, t_1, \dots, t_n \}$ is a stopping time for the filtration $(\F_{t_k})_{k \geq 0}$ where $\F_{t} = \sigma \big( S_s, P^d(s,T), P^f(s,T), s \leq t \big)$ is the natural filtration to consider because the foreign exchange rate and the zero-coupon curves are observables in the market.

\bigskip

In this paper, we will present two numerical solutions, motivated by the works described above, to the problem of the evaluation of Bermudan option on Foreign Exchange rate with stochastic interest rates.
The paper is organised as follows. First, in Section \ref{3F:section:diffusion_models}, we introduce the diffusion models for the zero coupon curves and the foreign exchange rate we work with. In Section \ref{3F:section:bermudan_options}, we describe in details the financial product we want to evaluate: Bermudan option on foreign exchange rate. In this Section, we express the \textit{Backward Dynamic Programming Principle} and study the regularity of the obstacle process and the value function. Then, in Section \ref{3F:section:bermudan_evaluation_quantization}, we propose two numerical solutions for pricing the financial product defined above based on Product Quantization and we study the $L^2$-error induced by these numerical approximations. In Section \ref{3F:section:numerics}, several examples are presented in order to compare the two methods presented in Section \ref{3F:section:bermudan_evaluation_quantization}. First, we begin with plain European option, this test is carried out in order to benchmark the methods because a closed-form formula is known for the price of a European Call/Put in the 3-factor model. Then, we compare the two methods in the case of a Bermudan option with several exercise dates. Finally, in Appendix \ref{3F:section:proof_foreignBM}, we make some change of numéraire and in Appendix \ref{3F:section:closed_form_EU_price}, we give the closed-form formula for the price of an European Call, in the 3-factor model, used in Section \ref{3F:section:numerics} as a benchmark.

\section{Diffusion Models} \label{3F:section:diffusion_models}

\paragraph{Interest Rate Model.}

We shall denote by $P(t, T)$ the value at time $t$ of one unit of the currency delivered (that is, paid) at time $T$, also known as a zero coupon price or discount factor. When $t$ is today, this function can usually be derived from the market price of standard products, such as bonds and interest rate swaps in the market, along with an interpolation scheme (for the dates different than the maturities of the market rates used). In a simple single-curve framework, the derivation of the initial curve, that is, the zero coupons $P(0, T)$ for $T > 0$ is rather simple, through relatively standard methods of curve stripping. In more enhanced frameworks accounting for multiple yield curves such as having different for curves for discounting and forward rates, those methods are somewhat more demanding but still relatively straightforward. We focus herein on the simple single-curve framework.

In our case we are working with financial products on Foreign Exchange ($FX$) rates between the domestic and the foreign currency, hence we will be working with zero coupons in the domestic currency denoted by $P^d(t, T)$ and zero coupons in the foreign currency denoted by $P^f(t, T)$. The diffusion of the domestic zero-coupon curve under the domestic risk-neutral probability $\Prob$ is given by
\begin{equation*}
	\frac{ d P^d(t,T) }{ P^d(t,T) } = r_t^d dt + \sigma_d (T-t) d W_t^d
\end{equation*}
where $W^d$ is a $\Prob$-Brownian Motion, $r_t^d$ is the domestic instantaneous rate at time $t$ and $\sigma_d$ is the volatility for the domestic zero coupon curve. For the foreign zero-coupon curve, the diffusion is given, under the foreign neutral probability $\widetilde \Prob$, by
\begin{equation*}
	\frac{ d P^f(t,T) }{ P^f(t,T) } = r_t^f dt + \sigma_f (T-t) d \widetilde{W}_t^f
\end{equation*}
where $\widetilde{W}^f$ is a $\widetilde \Prob$-Brownian Motion, $r_t^f$ is the foreign instantaneous rate at time $t$ and $\sigma_f$ is the volatility for the foreign zero coupon curve. The two probabilities $\widetilde \Prob$ and $\Prob$ are supposed to be equivalent, i.e $\widetilde \Prob \sim \Prob$ and it exists $\rho_{df}$ defined as limit of the quadratic variation $\langle W^d, \widetilde{W}^f \rangle_t = \rho_{df} t$.

\begin{remarks}
	Such a framework to model random yield curves has been quite popular with practitioners due to its elegance, simplicity and intuitive understanding of rates dynamics through time yet providing a comprehensive and consistent modelling of an entire yield curve through time. Indeed, it is mathematically and numerically easily tractable. It carries no path dependency and allows the handling of multiple curves for a given currency as well as multiple currencies -- and their exchange rates -- as well as equities (when one wishes to account for random interest rates). It allows negative rates and can be refined by adding factors (Brownian motions).

	However, it cannot easily cope with smile or non-normally distributed shocks or with internal curve ”oddities” or specifics such as different volatilities for different swap tenors within the same curve dynamics. Nonetheless, our aim being to propose a model and a numerical method which make possible to produce risk computations (such as xVA's) in an efficient way, these properties are of little importance. That said, when it comes to deal with accounting for random rates in long-dated derivatives valuations, its benefits far outweigh its limitations and its use for such applications is popular, see \cite{nunes2014pricingswaption} for the pricing of swaptions, \cite{piterbarg2005multi} for PRDCs...
\end{remarks}

\paragraph{Foreign Exchange Model.}

The diffusion of the foreign exchange ($FX$) rate defined under the domestic risk-neutral probability is
\begin{equation*}
	\frac{ dS_t }{ S_t } = ( r^d_t - r^f_t ) dt + \sigma_S dW^S_t
\end{equation*}
with $W_t^S$ a $\Prob$-Brownian Motion under the domestic risk-neutral probability such that their exist $\rho_{Sd}$ and $\rho_{Sf}$ defined as limit of the quadratic variation $\langle W^S, W^d \rangle_t = \rho_{Sd} t$ and $\langle W^S, \widetilde{W}^f \rangle_t = \rho_{Sf} t$, respectively.

Finally, the processes, expressed in the domestic risk-neutral probability $\Prob$, are
\begin{equation} \label{3F:3FModelFX} \left\{
	\begin{aligned}
		\frac{ d P^d(t,T) }{P^d(t,T) } & = r_t^d dt + \sigma_d (T-t) dW_t^d                                                 \\
		\frac{ d S_t }{S_t}            & = ( r^d_t - r^f_t ) dt + \sigma_S dW^S_t                                           \\
		\frac{ d P^f(t,T) }{P^f(t,T) } & = \big( r_t^f - \rho_{Sf} \sigma_S \sigma_f (T-t) \big) dt + \sigma_f (T-t) dW_t^f
	\end{aligned} \right.
\end{equation}
where $W^f$, defined by $d W^f_s = d \widetilde W^f_s + \rho_{Sf} \sigma_S ds$, is a $\Prob$-Brownian motion, as shown in Appendix \ref{3F:section:proof_foreignBM}.
Using Itô's formula, we can explicitly express the processes
\begin{equation*} \left\{
	\begin{aligned}
		P^d(t,T) & = P^d(0,T) \exp \Bigg( \int_0^t \bigg(r_s^d - \frac{\sigma_d^2 (T-s)^2}{2} \bigg) ds + \sigma_d \int_0^t (T-s) dW_s^d \Bigg)                                     \\
		S_t      & = S_{0} \exp \Bigg( \int_0^t \bigg( r^d_s - r^f_s - \frac{\sigma^2_S}{2} \bigg) ds + \sigma_S W_t^S \Bigg)                                                       \\
		P^f(t,T) & = P^f(0,T) \exp \Bigg( \int_0^t \bigg(r_s^f - \rho_{Sf} \sigma_S \sigma_f (T-s) - \frac{\sigma_f^2 (T-s)^2}{2} \bigg) ds + \sigma_f \int_0^t (T-s) dW_s^f \Bigg)
	\end{aligned} \right. .
\end{equation*}
From these equations, we deduce $\exp \bigg(- \int_0^t r_s^d ds \bigg)$ and $\exp \bigg( - \int_0^t r_s^f ds \bigg)$, by taking $T=t$ and using that $P^d(t,t) = P^f(t,t) = 1$, it follows that
\begin{equation*}\left\{
	\begin{aligned}
		\exp \bigg( - \int_0^t r_s^d ds \bigg) & = \varphi_d(t) \exp \bigg( \sigma_d \int_0^t (t-s) dW_s^d \bigg)  \\
		\exp \bigg( - \int_0^t r_s^f ds \bigg) & = \varphi_f(t) \exp \bigg( \sigma_f \int_0^t (t-s) dW_s^f \bigg),
	\end{aligned}\right.
\end{equation*}
where
\begin{equation}
	\varphi_d(t) = P^d(0,t) \exp \bigg( - \sigma_d^2 \int_0^t \frac{(t-s)^2}{2} ds \bigg)
\end{equation}
and
\begin{equation}
	\varphi_f(t) = P^f(0,t) \exp \Bigg( - \int_0^t \bigg(\rho_{Sf} \sigma_S \sigma_f (t-s) + \frac{\sigma_f^2 (t-s)^2}{2} \bigg) ds \Bigg).
\end{equation}
These expressions for the domestic and the foreign discount factors will be useful in the following sections of the paper.

\section{Bermudan options} \label{3F:section:bermudan_options}

\subsection{Product Description}

Let $(\Omega, \A, \Prob)$ our domestic risk neutral probability space. We want to evaluate the price of a Bermudan option on the $FX$ rate $S_t$ defined by
\begin{equation*}
	S_t = \frac{1}{\exp \bigg( - \int_0^t r_s^d ds \bigg) } S_{0} \varphi_f(t) \exp \bigg( - \frac{\sigma^2_S}{2} t + \sigma_S W^S_t + \sigma_f \int_0^t (t-s) dW_s^f \bigg)
\end{equation*}
with
\begin{equation*}
	\exp \bigg( - \int_0^t r_s^d ds \bigg) = \varphi_d(t) \exp \bigg( \sigma_d \int_0^t (t-s) dW_s^d \bigg)
\end{equation*}
where the owner of the financial product can exercise its option at predetermined dates $t_0, t_1, \cdots, t_n$ with payoff $\psi_{t_k}$ at date $t_k$, where $t_0 = 0$.

At a given time $t$, the observables in the market are the foreign exchange rate $S_t$ and the zero-coupon curves $\big( P^d(t,T) \big)_{T \geq t}$ and $\big( P^f(t,T) \big)_{T \geq t}$, hence the natural filtration to consider is
\begin{equation}
	\F_{t} = \sigma \big( S_s, P^d(s,T), P^f(s,T), s \leq t \big)  = \sigma \big( W^S_s, W^d_s, W^f_s, s \leq t \big).
\end{equation}

Let $\tau : \Omega \rightarrow \{ t_0, t_1, \dots, t_n \}$ a stopping time for the filtration $(\F_{t_k})_{k \geq 0}$ and $\mathcal{T}$ the set of all stopping times for the filtration $(\F_{t_k})_{k \geq 0}$. In this paper, we consider problems where the horizon is finite then we define $\mathcal{T}_k^n$, the set of all stopping times taking finite values
\begin{equation}
	\mathcal{T}_k^n = \big\{ \tau \in \mathcal{T}, \Prob ( t_k \leq \tau \leq t_n ) = 1 \big\}.
\end{equation}

Hence, the price at time $t_k$ of the Bermudan option is given by
\begin{equation*}
	V_k = \sup_{\tau \in \mathcal{T}_k^n } \E \big[ \e^{- \int_{0}^{\tau} r_s^d ds } \psi_{\tau} ( S_{\tau} ) \mid \F_{t_k} \big]
\end{equation*}
and $V_k$ is called the \textit{Snell envelope} of the obstacle process $\big( \e^{- \int_{0}^{t_k} r_s^d ds } \psi_{t_k} ( S_{t_k} ) \big)_{k=0:n}$ such that
\begin{equation}
	\E \big[ \psi_{t_k} ( S_{t_k} )^2 \big] < + \infty, \quad \forall k = 0, \dots, n.
\end{equation}

\begin{remark}
	The financial products we consider in the applications are PRDC. Their payoffs (see Figure \ref{3F:fig:prdc}) have the following expression
	\begin{equation}\label{3F:payoffPRDC}
		\psi_{t_k} ( x ) = \min \Bigg( \max \bigg( \dfrac{C_f(t_k)}{S_{0}} x - C_d(t_k), \textrm{Floor}(t_k) \bigg) , \textrm{Cap}(t_k) \Bigg)
	\end{equation}
	where $\textrm{Floor}(t_k)$ and $\textrm{Cap}(t_k)$ are the floor and cap values chosen at the creation of the product, as well as $C_f(t_k)$ and $C_d(t_k)$ that are the coupons value we wish to compare to the foreign and the domestic currency, respectively.
	\begin{figure}[H]
		\centering
		\includegraphics[width=0.47\textwidth]{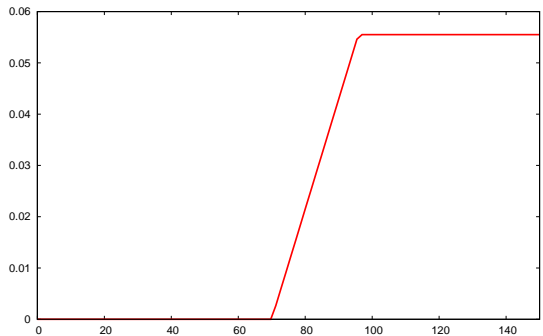}
		\caption[Example of a PRDC payoff]{\textit{Example of a PRDC payoff $\psi_{t_k} ( S_{t_k} ) =\min \Big(  \Big(0.189 \frac{S_{t_k}}{88.17} -0.15 \Big)_+,0.0555\Big)$ at time $t_k$.}}
		\label{3F:fig:prdc}
	\end{figure}
	The interesting feature of such functions is that their (right) derivative have a compact support.
\end{remark}

\subsection{Backward Dynamic Programming Principle}

$V_k$ can also be defined recursively by
\begin{equation}\label{3F:BDPP}
	\left\{
	\begin{aligned}
		 & V_n = \e^{- \int_{0}^{t_n} r_s^d ds } \psi_n( S_{t_n} ),                                                                                  \\
		 & V_k = \max \Big( \e^{- \int_{0}^{t_k} r_s^d ds } \psi_k( S_{t_k} ), \E [ V_{k+1} \mid \F_{t_k} ] \Big) \mathrm{,\qquad} 0 \leq k \leq n-1
	\end{aligned} \right.
\end{equation}
and this representation is called the \textit{Backward Dynamic Programming Principle} (BDPP).

First, noticing that the obstacle process $\e^{- \int_{0}^{t} r_s^d ds } \psi_{t} ( S_{t} )$ can be rewritten as a function $h_t$ of two processes $X_t$ and $Y_t$ such that
\begin{equation*}
	h_t (X_t,Y_t) = \e^{- \int_{0}^{t} r_s^d ds } \psi_{t} ( S_{t} )
\end{equation*}
where $h$ is given by
\begin{equation}\label{3F:payoff_function_of_XY}
	h_t (x,y) = \varphi_d(t) \e^{-y} \psi_t \bigg( S_{0} \frac{\varphi_f(t)}{\varphi_d(t)} \e^{- \sigma^2_S t / 2 + x + y } \bigg)
\end{equation}
and $(X,Y)$ is defined by
\begin{equation}
	(X_t,Y_t) = \bigg( \sigma_S W^S_t + \sigma_f \int_0^t (t-s) dW_s^f, - \sigma_d \int_0^t (t-s) dW_s^d \bigg).
\end{equation}

Now, in order to alleviate notations, we denote by $X_k = X_{t_k}$, $W_k^f = W_{t_k}^f$, $Y_k = Y_{t_k}$, $W_k^d = W_{t_k}^d$, $W_k^S = W_{t_k}^S$ and $h_k = h_{t_k}$.

Using this new form, the Snell envelope becomes
\begin{equation*}
	V_k = \sup_{\tau \in \mathcal{T}_k^n} \E \big[ h_{\tau} (X_{\tau}, Y_{\tau}) \mid \F_{t_k} \big]
\end{equation*}
and the \textit{Backward Dynamic Programming Principle} \eqref{3F:BDPP} rewrites
\begin{equation}\label{3F:BDPP_XY}
	\left\{
	\begin{aligned}
		 & V_n = h_n (X_n, Y_n), \\
		 & V_k = \max \Big( h_k (X_k, Y_k), \E \big[ V_{k+1} \mid \F_{t_k} ] \Big) \mathrm{,\qquad} 0 \leq k \leq n-1.
	\end{aligned} \right.
\end{equation}

Second, in order to solve the problem theoretically by dynamic programming it is required to associate a $\F_t$-Markov process to this problem and in our case, the simplest of them (i.e. of minimal dimension) is $(X_t, W^f_t, Y_t, W_t^d)$ which is $\F_t$-adapted and a Markov process because
\begin{equation*}
	\left\{
	\begin{aligned}
		X_{k+1}   & = X_k + \sigma_f \delta W_k^f + \sigma_S \int_{t_k}^{t_{k+1}} dW^S_s + \sigma_f \int_{t_k}^{t_{k+1}} (t_{k+1}-s) dW_s^f \\
		W_{k+1}^f & = W_k^f + \int_{t_k}^{t_{k+1}} dW_s^f                                                                                   \\
		Y_{k+1}   & = Y_k - \sigma_d \delta W_k^d - \sigma_d \int_{t_k}^{t_{k+1}} (t_{k+1}-s) dW_s^d                                        \\
		W_{k+1}^d & = W_k^d + \int_{t_k}^{t_{k+1}} dW_s^d                                                                                   \\
	\end{aligned}
	\right.
\end{equation*}
where $\delta = \frac{T}{n}$ and can be written as
\begin{equation}\label{3F:eq:markov_property_tuple4}
	\left\{
	\begin{aligned}
		X_{k+1}   & = X_k + \sigma_f \delta W_k^f + G^1_{k+1} \\
		W_{k+1}^f & = W_k^f + G^2_{k+1}                       \\
		Y_{k+1}   & = Y_k - \sigma_d \delta W_k^d + G^3_{k+1} \\
		W_{k+1}^d & = W_k^d + G^4_{k+1},                      \\
	\end{aligned}
	\right.
\end{equation}
where the increments are normally distributed
\begin{equation}
	\begin{pmatrix}
		G^1_{k+1} \\ G^2_{k+1} \\ G^3_{k+1} \\ G^4_{k+1}
	\end{pmatrix}
	\sim
	\N \Big( \mu_{k+1}, \Sigma_{k+1} \Big)
\end{equation}
with
\begin{equation}
	\mu_{k+1} =
	\begin{pmatrix}
		0 \\ 0 \\ 0 \\ 0
	\end{pmatrix}
	\qquad \textrm{and} \qquad
	\Sigma_{k+1} = \bigg( \Cov \big( G_{k+1}^{i}, G_{k+1}^{j} \big) \bigg)_{i,j = 1:4}.
\end{equation}
One notices that $\big( (G^1_{k}, G^2_{k}, G^3_{k}, G^4_{k}) \big)_{k=1 \dots, n}$ are i.i.d. Based on Equation \eqref{3F:eq:markov_property_tuple4}, we deduce the Markov process transition of $(X_k, W_k^f, Y_k, W_k^d)$, for any integrable function $f:\R^4 \rightarrow \R$, given by
\begin{equation}\label{3F:transition_markov_process}
	P f(x,u,y,v) = \E \big[ f ( x + \sigma_f \delta u + G^1_{k+1}, u + G^2_{k+1}, y - \sigma_d \delta v + G^3_{k+1}, v + G^4_{k+1}) \big].
\end{equation}

\begin{remark}
	Using the Markov process $(X, W^f, Y, W^d)$ newly defined, we rewrite the filtration $\F_t$ as
	\begin{equation}
		\F_{t} = \sigma \big( W^S_s, W^d_s, W^f_s, s \leq t \big)
		= \sigma \big( X_s, W^f_s, Y_s, W^d_s, s \leq t \big).
	\end{equation}
\end{remark}
Then, using the new expression for the filtration and the Markov property of $(X_k, W_k^f, Y_k, W_k^d)$, the BDPP \eqref{3F:BDPP_XY} reads as follows,
\begin{equation}\label{3F:BDPP_markovian}
	\left\{
	\begin{aligned}
		 & V_n = h_n (X_n, Y_n),                                                                                                           \\
		 & V_k = \max \Big( h_k (X_k, Y_k), \E \big[ V_{k+1} \mid (X_k, W_k^f, Y_k, W_k^d) \big] \Big) \mathrm{,\qquad} 0 \leq k \leq n-1.
	\end{aligned} \right.
\end{equation}

Moreover, by backward induction we get $V_k = v_k (X_k, W_k^f, Y_k, W_k^d)$ where
\begin{equation}\label{3F:BDPP_markovian_smallv}
	\left\{
	\begin{aligned}
		 & v_n (X_n, W_n^f, Y_n, W_n^d) = h_n (X_n, Y_n),                                                                                         \\
		 & v_k (X_k, W_k^f, Y_k, W_k^d) = \max \Big( h_k (X_k, Y_k), P v_{k+1} (X_k, W_k^f, Y_k, W_k^d) \Big) \mathrm{,\qquad} 0 \leq k \leq n-1.
	\end{aligned} \right.
\end{equation}

\paragraph{Payoff regularity.} First, we look at the regularity of the payoff. The next proposition will then allow us to study the regularity of the value function through the propagation of the local Lipschitz property by the transition of the Markov process.
\begin{proposition} \label{3F:regularity_prdc}
	If $\psi_{t_k}$ is are Lipschitz continuous with Lipschitz coefficient $[\psi_{t_k}]_{_{Lip}}$ with compactly supported (right) derivative (such as the payoff defined in \eqref{3F:payoffPRDC}) then $h_k( x, y )$ given by \eqref{3F:payoff_function_of_XY} is locally Lipschitz continuous, for every $x, x', y, y' \in \R$
	\begin{equation}
		\begin{aligned}
			\vert h_k(x,y) - h_k(x',y') \vert
			 & \leq \e^{ \vert y \vert \vee \vert y' \vert } \big( [\widebar \psi_k]_{_{Lip}} \vert x - x' \vert + ( \varphi_d(t_k) \Vert \psi_{t_k} \Vert_{_{\infty}} + [\widebar \psi_k]_{_{Lip}} ) \vert y - y' \vert \big)
		\end{aligned}
	\end{equation}
	with $[\widebar \psi_k]_{_{Lip}} = [\psi_{t_k}]_{_{Lip}} S_{0} \varphi_f(t_k) \e^{ - \sigma^2_S t_{k} / 2 } \Vert \psi_{t_k}' \Vert_{_{\infty}} \e^c$ with $\psi_{t_k}'$ the right derivative of $\psi_{t_k}$.
\end{proposition}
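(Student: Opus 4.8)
The plan is to write $h_k$ as $\psi_{t_k}$ composed with a simple map and then to control its $x$-- and $y$--dependence separately; the decisive ingredient throughout is that the compact support of the right derivative $\psi_{t_k}'$ converts the \emph{a priori} only exponential control of the inner factor $\e^{x+y}$ into a genuine Lipschitz bound. First I would write $h_k(x,y)=\varphi_d(t_k)\,\e^{-y}\,\psi_{t_k}\!\big(\lambda_k\e^{x+y}\big)$ with $\lambda_k:=S_{0}\,\varphi_f(t_k)\,\varphi_d(t_k)^{-1}\,\e^{-\sigma^2_S t_{k}/2}$, and record two consequences of the hypothesis on $\psi_{t_k}$: being Lipschitz with compactly supported derivative it is constant outside a compact set, hence bounded, so $\Vert \psi_{t_k} \Vert_{_{\infty}}<+\infty$; and there is $K_k<+\infty$ with $\psi_{t_k}'(w)\neq 0\Rightarrow w\le K_k$, so defining $c$ by $\e^{c}:=K_k/\lambda_k$ we have $\lambda_k\e^{c}=K_k$ (this is the origin of the constant $c$ in $[\widebar \psi_k]_{_{Lip}}$). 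Then I would split
\[
\vert h_k(x,y)-h_k(x',y')\vert \ \le\ \vert h_k(x,y)-h_k(x',y)\vert \ +\ \vert h_k(x',y)-h_k(x',y')\vert .
\]

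For the first term (with $y$ frozen), using that $\psi_{t_k}$ is absolutely continuous and the change of variable $u=\lambda_k\e^{z+y}$,
\[
\psi_{t_k}\!\big(\lambda_k\e^{x'+y}\big)-\psi_{t_k}\!\big(\lambda_k\e^{x+y}\big)\ =\ \int_x^{x'}\psi_{t_k}'\!\big(\lambda_k\e^{z+y}\big)\,\lambda_k\e^{z+y}\,dz ;
\]
the integrand vanishes unless $\lambda_k\e^{z+y}\le K_k$, hence is at most $\Vert \psi_{t_k}' \Vert_{_{\infty}}\,\lambda_k\e^{c}$ at every $z$, so the left-hand side is bounded in modulus by $\Vert \psi_{t_k}' \Vert_{_{\infty}}\,\lambda_k\e^{c}\,\vert x-x'\vert$. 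Multiplying by $\varphi_d(t_k)\e^{-y}$ and bounding $\e^{-y}\le\e^{\vert y\vert\vee\vert y'\vert}$ gives $\vert h_k(x,y)-h_k(x',y)\vert\le\e^{\vert y\vert\vee\vert y'\vert}\,[\widebar \psi_k]_{_{Lip}}\,\vert x-x'\vert$ with $[\widebar \psi_k]_{_{Lip}}=\Vert \psi_{t_k}' \Vert_{_{\infty}}\,S_{0}\,\varphi_f(t_k)\,\e^{-\sigma^2_S t_{k}/2}\,\e^{c}$, i.e. the announced constant (recall $[\psi_{t_k}]_{_{Lip}}=\Vert \psi_{t_k}' \Vert_{_{\infty}}$). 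One may also avoid the change of variable: since $\psi_{t_k}$ is constant on $[K_k,+\infty)$, $\psi_{t_k}(u)=\psi_{t_k}(u\wedge K_k)$ for all $u$, and $t\mapsto\lambda_k\e^{t+y}\wedge K_k$ is Lipschitz with constant $K_k=\lambda_k\e^{c}$.

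For the second term (with $x'$ frozen), I would differentiate $F(y):=\varphi_d(t_k)\,\e^{-y}\,\psi_{t_k}(\lambda_k\e^{x'+y})$, which is locally Lipschitz in $y$ hence absolutely continuous, with
\[
F'(y)\ =\ -\varphi_d(t_k)\,\e^{-y}\,\psi_{t_k}(\lambda_k\e^{x'+y})\ +\ \varphi_d(t_k)\,\e^{-y}\,\psi_{t_k}'(\lambda_k\e^{x'+y})\,\lambda_k\e^{x'+y}
\]
almost everywhere. The first summand is at most $\varphi_d(t_k)\e^{-y}\Vert \psi_{t_k} \Vert_{_{\infty}}$ in modulus, and the second --- supported where $\lambda_k\e^{x'+y}\le K_k$ --- at most $\varphi_d(t_k)\e^{-y}\Vert \psi_{t_k}' \Vert_{_{\infty}}\lambda_k\e^{c}=\e^{-y}[\widebar \psi_k]_{_{Lip}}$, so $\vert F'(y)\vert\le\e^{-y}\big(\varphi_d(t_k)\Vert \psi_{t_k} \Vert_{_{\infty}}+[\widebar \psi_k]_{_{Lip}}\big)$. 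Integrating $F'$ over the interval with endpoints $y$ and $y'$ and bounding $\e^{-\xi}\le\e^{\vert y\vert\vee\vert y'\vert}$ there yields $\vert h_k(x',y)-h_k(x',y')\vert\le\e^{\vert y\vert\vee\vert y'\vert}\big(\varphi_d(t_k)\Vert \psi_{t_k} \Vert_{_{\infty}}+[\widebar \psi_k]_{_{Lip}}\big)\vert y-y'\vert$, and adding the two estimates is precisely the claim.

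I expect the only genuinely delicate step to be the first one: a naive use of the Lipschitz property of $\psi_{t_k}$ produces $\vert\e^{x+y}-\e^{x'+y}\vert$, which is not controlled by $\vert x-x'\vert$ --- not even up to the factor $\e^{\vert y\vert\vee\vert y'\vert}$ --- uniformly in $x$. It is exactly the compact support of $\psi_{t_k}'$, exploited either through the change of variable or through the truncation at $K_k$, that rescues the estimate and at the same time determines the constant $c$. The $y$--estimate (a one-dimensional absolute-continuity argument) and the recombination by the triangle inequality are then routine.
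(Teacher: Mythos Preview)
Your proof is correct and rests on the same key observation as the paper's: because $\psi_{t_k}'$ has compact support, the map $z\mapsto\psi_{t_k}(\lambda_k\e^{z})$ is globally Lipschitz with constant $\Vert\psi_{t_k}'\Vert_\infty\,\lambda_k\e^{c}$. The difference is purely in the splitting. The paper introduces the auxiliary function $g_k(x,y)=\psi_{t_k}(\lambda_k\e^{x+y})$, proves once that $g_k$ is Lipschitz in $(x,y)$ with constant $[\widebar\psi_k]_{_{Lip}}/\varphi_d(t_k)$, and then decomposes the \emph{product} $\e^{-y}g_k(x,y)$ as
\[
\e^{-y}g_k(x,y)-\e^{-y'}g_k(x',y') \ =\ (\e^{-y}-\e^{-y'})\,g_k(x,y)\ +\ \e^{-y'}\big(g_k(x,y)-g_k(x',y')\big),
\]
bounding the first term via $\Vert\psi_{t_k}\Vert_\infty$ and $\vert\e^{-y}-\e^{-y'}\vert\le\e^{\vert y\vert\vee\vert y'\vert}\vert y-y'\vert$. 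You instead split \emph{coordinate-wise}, $h_k(x,y)-h_k(x',y)$ then $h_k(x',y)-h_k(x',y')$, and for the $y$-part differentiate the full product $\e^{-y}\psi_{t_k}(\lambda_k\e^{x'+y})$, which produces exactly the same two contributions. The paper's route is marginally cleaner (one Lipschitz estimate on $g_k$ reused twice, no explicit product-rule computation), while yours is more self-contained and makes the origin of $c$ more transparent; the final constants are identical.
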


\begin{proof}
	Let $g_k$ be defined by
	\begin{equation}
		g_k (x,y) = \psi_{t_k} \bigg( S_{0} \frac{\varphi_f(t_k)}{\varphi_d(t_k)} \e^{- \sigma^2_S t_k / 2 + x + y } \bigg).
	\end{equation}
	As $\psi'_{t_k}$ has a compact support, then it exists $c \in \R$ such that
	\begin{equation}
		\vert (\psi_{t_k} (\e^x))' \vert = \vert \e^x \psi_{t_k}' (\e^x) \vert \leq \Vert \psi_{t_k}' \Vert_{_{\infty}} \sup_{x \in \supp \psi_{t_k}' }\e^x \leq \Vert \psi_{t_k}' \Vert_{_{\infty}} \e^c.
	\end{equation}
	Hence
	\begin{equation}
		\begin{aligned}
			\vert g_k(x,y) - g_k(x',y') \vert
			 & \leq \frac{[\widebar \psi_k]_{_{Lip}}}{\varphi_d(t_k)} \big( \vert x - x' \vert + \vert y - y' \vert \big)
		\end{aligned}
	\end{equation}
	with $[\widebar \psi_k]_{_{Lip}} = [\psi_{t_k}]_{_{Lip}} S_{0} \varphi_f(t_k) \e^{ - \sigma^2_S t_{k} / 2 } \Vert \psi_{t_k}' \Vert_{_{\infty}} \e^c $. Then for every $ x, x', y, y' \in \R $, we have
	\begin{equation}
		\begin{aligned}
			\vert h_k(x,y) - h_k(x',y') \vert
			 & = \varphi_d(t_k) \big\vert \e^{-y} g_k(x,y) - \e^{-y'} g_k(x',y') \big\vert                                                                                                                                      \\
			 & \leq \varphi_d(t_k) \Big( \big\vert \e^{-y} g_k(x,y) - \e^{-y'} g_k(x,y) \big\vert + \big\vert \e^{-u'} g_k(x,y) - \e^{-y'} g_k(x',y') \big\vert \Big)                                                           \\
			 & \leq \varphi_d(t_k) \Big( \big\vert \e^{-y} - \e^{-y'} \big\vert \cdot \Vert \psi_{t_k} \Vert_{_{\infty}} + \e^{-y'} \big\vert g_k(x,y) - g_k(x',y') \big\vert \Big)                                             \\
			 & \leq \e^{ \vert y \vert \vee \vert y' \vert } \big( [\widebar \psi_k]_{_{Lip}} \vert x - x' \vert + ( \varphi_d(t_k) \Vert \psi_{t_k} \Vert_{_{\infty}} + [\widebar \psi_k]_{_{Lip}} ) \vert y - y' \vert \big).
		\end{aligned}
	\end{equation}
\end{proof}

The next Lemma shows that the transition of the Markov process propagates the local Lipschitz continuity of a function $f$. This result will be helpful to estimate the error induced by the numerical approximation \eqref{3F:BDPP_markovian_smallv}.

\begin{lemme}\label{3F:KindOfLip_Markov}
	Let $P f(x,u,y,v) = \E \big[ f ( x + \sigma_f \delta u + G^1, u + G^2, y - \sigma_d \delta v + G^3, v + G^4) \big]$ be a Markov kernel.
	If the function $f$ satisfies the following local Lipschitz property,
	\begin{equation}
		\begin{aligned}
			\vert f(x,u,y,v) - f(x',u',y',v') \vert
			 & \leq \big(A \vert x - x' \vert + B \vert u - u' \vert + C\vert y - y' \vert + D \vert v - v' \vert \big)          \\
			 & \qquad \qquad \qquad \qquad \times \e^{ \vert y \vert \vee \vert y' \vert + b \vert v \vert \vee \vert v' \vert } \\
		\end{aligned}
	\end{equation}
	then
	\begin{equation}
		\begin{aligned}
			\vert Pf(x,u,y,v) - Pf(x',u',y',v') \vert
			 & \leq \big(\widetilde A \vert x - x' \vert + \widetilde B \vert u - u' \vert + \widetilde C \vert y - y' \vert + \widetilde D \vert v - v' \vert \big) \\
			 & \qquad \qquad \qquad \qquad \times \e^{ \vert y \vert \vee \vert y' \vert + \widetilde b \vert v \vert \vee \vert v' \vert }.
		\end{aligned}
	\end{equation}
\end{lemme}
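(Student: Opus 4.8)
The plan is to apply the assumed local Lipschitz bound on $f$ pointwise under the expectation defining $Pf$, and then to separate the resulting random upper bound into a deterministic polynomial prefactor and a Gaussian exponential factor with finite mean.

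First, since $\vert \E[\xi] \vert \leq \E[\vert \xi \vert]$,
\[
\vert Pf(x,u,y,v) - Pf(x',u',y',v') \vert \leq \E \big[ \vert f(Z) - f(Z') \vert \big],
\]
where $Z = \big( x + \sigma_f \delta u + G^1,\, u + G^2,\, y - \sigma_d \delta v + G^3,\, v + G^4 \big)$ and $Z'$ is the same vector with $(x,u,y,v)$ replaced by $(x',u',y',v')$ and the \emph{same} Gaussian increments $G^1,\dots,G^4$. The four coordinatewise differences between $Z$ and $Z'$ are $(x-x') + \sigma_f \delta (u-u')$, $u-u'$, $(y-y') - \sigma_d \delta (v-v')$ and $v-v'$: the Gaussian terms cancel. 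Substituting into the hypothesis and using the triangle inequality on these differences, the polynomial prefactor is at most
\[
A \vert x - x' \vert + (B + A \sigma_f \delta) \vert u - u' \vert + C \vert y - y' \vert + (D + C \sigma_d \delta) \vert v - v' \vert ,
\]
which is deterministic and factors out of the expectation.

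It remains to control the exponential weight evaluated at $Z$ and $Z'$. Since the third coordinate of $Z$ is $y - \sigma_d \delta v + G^3$ and the fourth is $v + G^4$ (and likewise for $Z'$), one has, assuming $b \geq 0$ as in the applications,
\[
\vert y - \sigma_d \delta v + G^3 \vert \vee \vert y' - \sigma_d \delta v' + G^3 \vert + b \big( \vert v + G^4 \vert \vee \vert v' + G^4 \vert \big) \leq \vert y \vert \vee \vert y' \vert + (b + \sigma_d \delta) \big( \vert v \vert \vee \vert v' \vert \big) + \vert G^3 \vert + b \vert G^4 \vert .
\]
Exponentiating, the deterministic factor $\e^{\vert y \vert \vee \vert y' \vert + \widetilde b (\vert v \vert \vee \vert v' \vert)}$ with $\widetilde b := b + \sigma_d \delta$ comes out of the expectation, leaving $K := \E\big[ \e^{\vert G^3 \vert + b \vert G^4 \vert} \big]$, which is finite because $(G^3, G^4)$ is Gaussian (bound $\e^{\vert G^3 \vert} \leq \e^{G^3} + \e^{-G^3}$, similarly for $G^4$, and use the Gaussian moment generating function). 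Setting $\widetilde A = A K$, $\widetilde B = (B + A \sigma_f \delta) K$, $\widetilde C = C K$ and $\widetilde D = (D + C \sigma_d \delta) K$ then gives the claimed estimate.

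The argument is essentially bookkeeping; the two points that actually matter are that the common Gaussian increments cancel in the coordinate differences, so the polynomial prefactor stays deterministic, and that $x$ and $u$ never enter the exponential weight, so the exponent of $\vert y \vert \vee \vert y' \vert$ remains exactly $1$ while only the $v$-exponent increases, by $\sigma_d \delta$. The only implicit hypothesis worth flagging is $b \geq 0$ (satisfied throughout the backward induction where this lemma is invoked), which makes $\e^{b \vert v \vert \vee \vert v' \vert}$ and $\e^{b \vert G^4 \vert}$ legitimate bounds; integrability of $f(Z)$, needed for $Pf$ to be well defined, follows from the same exponential growth control together with the finiteness of all exponential moments of the Gaussian vector.
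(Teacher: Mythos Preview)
Your proof is correct and follows essentially the same route as the paper: move the absolute value inside the expectation, apply the local Lipschitz bound on $f$ with the common Gaussian increments cancelling in the differences, and split the exponential weight into the deterministic factor $\e^{\vert y\vert\vee\vert y'\vert+(b+\sigma_d\delta)\vert v\vert\vee\vert v'\vert}$ times $\E[\e^{\vert G^3\vert+b\vert G^4\vert}]$. The resulting constants $\widetilde A,\widetilde B,\widetilde C,\widetilde D,\widetilde b$ you obtain coincide with the paper's, and your explicit remark that $b\geq 0$ is being used is a worthwhile clarification the paper leaves implicit.
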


\begin{proof}
	It follows from Jensen's inequality and our assumption on $f$
	\begin{equation}
		\begin{aligned}
			\vert Pf(x,u,y,v) & - Pf(x',u',y',v') \vert                                                                                                                                                                       \\
			                  & \leq \E \Big[ \big\vert f ( x + \sigma_f \delta u + G^1, u + G^2, y - \sigma_d \delta v + G^3, v + G^4)                                                                                       \\
			                  & \qquad \qquad \qquad - f ( x' + \sigma_f \delta u' + G^1, u' + G^2, y' - \sigma_d \delta v' + G^3, v' + G^4) \big\vert \Big]                                                                  \\
			                  & \leq \big( A \vert x - x' \vert + (B +  A \sigma_f \delta) \vert u - u' \vert + C \vert y - y' \vert + (D + C \sigma_d \delta) \vert v - v' \vert \big)                                       \\
			                  & \qquad \qquad \qquad \qquad \times \e^{ \vert y \vert \vee \vert y' \vert + (b + \sigma_d \delta) \vert v \vert \vee \vert v' \vert } \E \big[ \e^{\vert G^3 \vert + b \vert G^4 \vert} \big] \\
			                  & \leq \big(\widetilde A \vert x - x' \vert + \widetilde B \vert u - u' \vert + \widetilde C \vert y - y' \vert + \widetilde D \vert v - v' \vert \big)                                         \\
			                  & \qquad \qquad \qquad \qquad \times \e^{ \vert y \vert \vee \vert y' \vert + \widetilde b \vert v \vert \vee \vert v' \vert }                                                                  \\
		\end{aligned}
	\end{equation}
	where
	\begin{equation}\label{3F:coeffsLemme_1_prdcpayoff}
		\widetilde A = A \E [ \kappa ], \qquad \widetilde B = ( B+A\sigma_f \delta ) \E [ \kappa ]
	\end{equation}
	and
	\begin{equation}\label{3F:coeffsLemme_2_prdcpayoff}
		\widetilde C = C \E [ \kappa ], \qquad \widetilde D = ( D + C \sigma_d \delta ) \E [ \kappa ], \qquad \widetilde b = b + \sigma_d \delta
	\end{equation}
	with $\kappa = \exp(\vert G^3 \vert + b \vert G^4 \vert ) $ and $\E [ \kappa] < +\infty$.
\end{proof}

\paragraph{Value function regularity.}

If the functions $(\psi_{t_k})_{k=0:n}$ are defined as in Equation \eqref{3F:payoffPRDC} then $v_n(x,u,y,v)$ preserves a local Lipschitz property. Hence, for every $x, x', u, u', y, y', v, v' \in \R$,
\begin{equation}
	\begin{aligned}
		\vert v_n(x,u,y,v) - v_n(x',u',y',v') \vert
		 & \leq \big( A_n \vert x - x' \vert + B_n \vert u - u' \vert + C_n \vert y - y' \vert + D_n \vert v - v' \vert \big)  \\
		 & \qquad \qquad \qquad \qquad \times \e^{ \vert y \vert \vee \vert y' \vert + b_n \vert v \vert \vee \vert v' \vert } \\
	\end{aligned}
\end{equation}
where
\begin{equation}
	\begin{aligned}
		A_n=[\widebar \psi_{n}]_{_{Lip}}, \qquad B_n=0, \qquad C_n = \varphi_d(t_n) \Vert \psi_n \Vert_{_{\infty}} + [\widebar \psi_{n}]_{_{Lip}}, \qquad D_n=0, \qquad b_n = 0 \\
	\end{aligned}
\end{equation}
with $[\widebar \psi_{n}]_{_{Lip}} = [\psi_{t_n}]_{_{Lip}} S_{0} \varphi_f(t_n) \exp ( - \sigma^2_S t_{n} / 2 ) \Vert \psi_{t_n}' \Vert_{_{\infty}} \e^c$. Using now Lemma \ref{3F:KindOfLip_Markov} recursively and the elementary inequality $\max(a, b+c) \leq \max (a,b) + c$ (as $x \mapsto \max(a,x)$ is $1$-Lipschitz), we have
\begin{equation} \label{3F:kindoflip_valuefunction_prdc}
	\begin{aligned}
		\vert v_k(x,u,y,v) & - v_k(x',u',y',v') \vert                                                                                                                                                                                                            \\
		                   & \leq \max \big( \vert h_k( x,y ) - h_k( x', y' ) \vert, \vert P v_{k+1}(x,u,y,v) - P v_{k+1}(x',u',y',v') \vert \big)                                                                                                               \\
		                   & \leq \max \bigg( \e^{ \vert y \vert \vee \vert y' \vert } \big( [\widebar \psi_k]_{_{Lip}} \vert x - x' \vert + \big( \varphi_d(t_k) \Vert \psi_{t_k} \Vert_{_{\infty}} + [\widebar \psi_k]_{_{Lip}} \big) \vert y - y' \vert \big) \\
		                   & \qquad \qquad, \big(\widetilde A_k \vert x - x' \vert + \widetilde B_k \vert u - u' \vert + \widetilde C_k \vert y - y' \vert + \widetilde D_k \vert v - v' \vert \big)                                                             \\
		                   & \qquad \qquad \qquad \qquad \times \e^{ \vert y \vert \vee \vert y' \vert + \widetilde b_k \vert v \vert \vee \vert v' \vert } \bigg)                                                                                               \\
		                   & \leq \big( A_k \vert x - x' \vert + B_k \vert u - u' \vert + C_k \vert y - y' \vert + D_k \vert v - v' \vert \big)                                                                                                                  \\
		                   & \qquad \qquad \qquad \qquad \times \e^{ \vert y \vert \vee \vert y' \vert + b_k \vert v \vert \vee \vert v' \vert }                                                                                                                 \\
	\end{aligned}
\end{equation}
where
\begin{equation}
	A_k = [\widebar \psi_k]_{_{Lip}} \vee \big( A_{k+1} \E [ \kappa_{k+1} ] \big), \qquad B_k = ( B_{k+1}+A_{k+1}\sigma_f \delta ) \E [ \kappa_{k+1} ], \qquad b_k = b_{k+1} + \sigma_d \delta
\end{equation}
and
\begin{equation}
	C_k = \big( \varphi_d(t_k) \Vert \psi_{t_k} \Vert_{_{\infty}} + [\widebar \psi_k]_{_{Lip}} \big) \vee \big( C_{k+1} \E [ \kappa_{k+1} ] \big), \qquad D_k = (D_{k+1} + C_{k+1} \sigma_d \delta) \E [ \kappa_{k+1} ]
\end{equation}
with $\kappa_{k+1} = \exp(\vert G^3_{k+1} \vert + b_{k+1} \vert G^4_{k+1} \vert ) $. Or equivalently
\begin{equation}\label{3F:usefullCoeffs1_prdc}
	A_k = \max_{l \geq k} \bigg( [\widebar \psi_{l}]_{_{Lip}} \prod_{j=k+1}^{l} \E [ \kappa_{j} ] \bigg), \qquad B_k = \sigma_f \frac{T}{n} \sum_{l=k+1}^{n} \max_{ l \leq i \leq n} \bigg( [\widebar \psi_{i}]_{_{Lip}} \prod_{j=k+1}^{i} \E [ \kappa_{j} ] \bigg)
\end{equation}
and
\begin{equation}\label{3F:usefullCoeffs2_prdc}
	\begin{aligned}
		C_k & = \max_{l \geq k} \bigg( \big(\varphi_d(t_l) \Vert \psi_l \Vert_{_{\infty}} + [\widebar \psi_{l}]_{_{Lip}} \big) \prod_{j=k+1}^{l} \E [ \kappa_{j} ] \bigg) ,                                              \\
		D_k & = \sigma_d \frac{T}{n} \sum_{l=k+1}^{n} \max_{ l \leq i \leq n} \bigg( \big( \varphi_d(t_i) \Vert \psi_i \Vert_{_{\infty}} + [\widebar \psi_{i}]_{_{Lip}} \big) \prod_{j=k+1}^{i} \E [ \kappa_{j} ] \bigg)
	\end{aligned}
\end{equation}
with
\begin{equation}
	b_k = \sigma_d T \Big( 1-\frac{k-1}{n} \Big).
\end{equation}

\section{Bermudan pricing using Optimal Quantization} \label{3F:section:bermudan_evaluation_quantization}

In this section, we propose two numerical solutions based on Product Optimal Quantization for the pricing of Bermudan options on the $FX$ rate $S_t$. First, we remind briefly what is an optimal quantizer and what we mean by a product quantization tree. Second, we present a first numerical solution, based on quantization of the Markovian tuple $(X,W^f,Y,W^d)$, to solve the numerical problem \eqref{3F:BDPP_markovian} and detail the $L^2$-error induced by this approximation. However, remember that we are looking for a method that makes possible to compute xVA's risk measures in a reasonable time but this solution can be too time consuming in practice due to the dimensionality of the quantized processes. That is why we present a second numerical solution which reduces the dimensionality of the problem by considering an approximate problem, based on quantization of the non-Markovian couple $(X,Y)$, introducing a systematic error induced by the non-markovianity and we study the $L^2$-error produced by this approximation.

\subsection{About Optimal Quantization} \label{3F:subsection:optimal_quantiz}

\paragraph{Theoretical background (the one-dimensional case).}

The aim of Optimal Quantization is to determine $\Gamma_N$, a set with cardinality at most $N$, which minimises the quantization error among all such sets $\Gamma$. We place ourselves in the one-dimensional case. Let $Z$ be an $\R$-valued random variable with distribution $\Prob_{_{Z}}$ defined on a probability space $( \Omega, \A, \Prob )$ such that $Z \in L^2_{\R}$.

\begin{definition}
	Let $\Gamma_N = \{ z_1, \dots, z_N \} \subset \R$ be a subset of size $N$, called $N$-quantizer. A Borel partition $\big( C_i (\Gamma_N) \big)_{i \in \llbracket 1, N \rrbracket}$ of $\R$ is a Voronoï partition of $\R$ induced by the $N$-quantizer $\Gamma_N$ if, for every $i = \{ 1, \cdots, N \}$,
	\begin{equation*}
		C_i \big( \Gamma_N \big) \subset \Big\{ \xi \in \R, \vert \xi - z_i \vert \leq \min_{j \neq i }\vert \xi - z_j \vert \Big\}.
	\end{equation*}
	The Borel sets $C_i (\Gamma_N)$ are called Voronoï cells of the partition induced by $\Gamma_N$.
\end{definition}

One can always consider that the quantizers are ordered: $z_1 < z_2 < \cdots < z_{N-1} < z_{N} $ and in that case the Voronoï cells are given by
\begin{equation*}
	C_{k} \big( \Gamma_N \big) = \big( z_{k - 1/2}, z_{k + 1/2} \big], \qquad k \in \llbracket 1, N-1 \rrbracket, \qquad C_{N} \big( \Gamma_N \big) = \big( z_{N - 1/2}, z_{N + 1/2} \big)
\end{equation*}
where $\forall k \in \{ 2, \cdots, N \}, \quad z_{k-1/2} = \frac{z_{k-1} + z_k}{2}$ and $z_{1/2} = \inf \big( \supp (\Prob_{_{Z}}) \big)$ and $z_{N+1/2} = \sup \big( \supp (\Prob_{_{Z}}) \big)$.

\begin{definition}
	Let $\Gamma_N = \{ z_1, \dots, z_N \}$ be an $N$-quantizer. The nearest neighbour projection $\Proj_{\Gamma_N} : \R \rightarrow \{ z_1, \dots, z_N \} $ induced by a Voronoï partition $\big( C_i (\Gamma_N) \big)_{i \in \{ 1, \cdots, N \} }$ is defined by
	\begin{equation*}
		\forall \xi \in \R, \qquad \Proj_{\Gamma_N} (\xi) = \sum_{i = 1}^N z_i \1_{\xi \in C_i (\Gamma_N) }.
	\end{equation*}
	Hence, we can define the quantization of $Z$ as the nearest neighbour projection of $Z$ onto $\Gamma_N$ by composing $\Proj_{\Gamma_N}$ and $X$
	\begin{equation*}
		\widehat{Z}^{\Gamma_N} = \Proj_{\Gamma_N} (Z) = \sum_{i = 1}^N z_i \1_{Z \in C_i (\Gamma_N) }.
	\end{equation*}
\end{definition}

In order to alleviate notations, we write $\widehat{Z}^N$ from now on in place of $\widehat{Z}^{\Gamma_N}$.

Now that we have defined the quantization of $Z$, we explain where does the term "optimal" comes from in the term optimal quantization. First, we define the quadratic distortion function.

\begin{definition}
	The $L^2$-mean quantization error induced by the quantizer $\widehat Z^N$ is defined as
	\begin{equation}\label{3F:L2quantizerror}
		\Vert Z - \widehat{Z}^N \Vert_{_2} = \bigg( \E \Big[ \min_{i \in \{ 1, \cdots, N \} } \vert Z - z_i \vert^2 \Big] \bigg)^{1/2} = \bigg( \int_{\R} \min_{i \in \{ 1, \cdots, N \} } \vert \xi - z_i \vert^2 \Prob_{_{Z}} ( d \xi ) \bigg)^{1/2}.
	\end{equation}

	It is convenient to define the quadratic distortion function at level $N$ as the squared mean quadratic quantization error on $(\R)^N$:
	\begin{equation*}
		\Distortion : z = \big(z_1, \dots, z_N \big) \longmapsto \E \Big[ \min_{i \in \{ 1, \cdots, N \} } \vert Z - z_i \vert^2 \Big] = \Vert Z - \widehat{Z}^N \Vert_{_2}^2.
	\end{equation*}
\end{definition}

\begin{remark}
	All these definitions can be extended to the $L^p$ case. For example the $L^p$-mean quantization error induced by a quantizer of size $N$ is
	\begin{equation}\label{3F:Lpquantizerror}
		\Vert Z - \widehat{Z}^N \Vert_{_p} = \bigg( \E \Big[ \min_{i \in \{ 1, \cdots, N \} } \vert Z - z_i \vert^p \Big] \bigg)^{1/p} = \bigg( \int_{\R} \min_{i \in \{ 1, \cdots, N \} } \vert Z - z_i \vert^p \Prob_{_{Z}} ( d \xi ) \bigg)^{1/p}.
	\end{equation}
\end{remark}

The existence of a $N$-tuple $z^{(N)} = ( z_1, \dots, z_N )$ minimizing the quadratic distortion function $\Distortion$ at level $N$ has been shown and its associated quantizer $\Gamma_N = \{ z_i, i \in \{ 1, \cdots, N \} \}$ is called an optimal quadratic $N$-quantizer, see e.g. \cite{pages2018numerical} for further details and references. We now turn to the asymptotic behaviour in $N$ of the quadratic mean quantization error. The next Theorem, known as Zador's Theorem, provides the sharp rate of convergence of the $L^p$-mean quantization error.

\begin{theorem}{(Zador's Theorem)}\label{3F:zador} Let $p \in (0, + \infty)$.
	\begin{enumerate}[label=(\alph*)]
		\item {\sc Sharp rate}. Let $Z \in L^{p+ \delta}_{\R}(\Prob)$ for some $\delta > 0$. Let $\Prob_{_{Z}} (d \xi) = \varphi(\xi) \cdot \lambda ( d \xi ) + \nu ( d \xi ) $, where $\nu ~ \bot ~ \lambda$ i.e. denotes the singular part of $\Prob_{_{Z}}$ with respect to the Lebesgue measure $\lambda$ on $\R$. Then,
		      \begin{equation}
			      \lim_{N \rightarrow + \infty} N \min_{\Gamma_N \subset \R, \vert \Gamma_N \vert \leq N } \Vert Z - \widehat{Z}^N \Vert_{_p} = \frac{1}{2^p (p+1)} \bigg[ \int_{\R} \varphi^{\frac{1}{1+p}} d \lambda \bigg]^{1+\frac{1}{p} }.
		      \end{equation}

		\item {\sc Non asymptotic upper-bound}. Let $\delta > 0$. There exists a real constant $C_{1,p,\delta} \in (0, +\infty )$ such that, for every $\R$-valued random variable $Z$,
		      \begin{equation}
			      \forall N \geq 1, \qquad \min_{\Gamma_N \subset \R, \vert \Gamma_N \vert \leq N } \Vert Z - \widehat{Z}^N \Vert_{_p} \leq C_{1,p,\delta} \sigma_{\delta+p} (Z) N^{- 1}
		      \end{equation}
		      where, for $r \in (0, + \infty),\sigma_r(Z) = \min_{a \in \R} \Vert Z - a \Vert_{_r} < + \infty$.
	\end{enumerate}
\end{theorem}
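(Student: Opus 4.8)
This is Zador's classical theorem, so one route is simply to invoke \cite{graf2000foundations,pages2018numerical}; what follows is the structure of a self-contained argument, everything specialised to $d=1$. The plan is to prove the non-asymptotic bound (b) first, because it is the tool that lets one pass from compactly supported to unbounded distributions in (a), and then to build (a) out of the explicit uniform case via a patching (``firefly'') argument.

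\emph{Part (b).} By translation invariance of the quantization error (translate $Z$ and the quantizer together) and its $1$-homogeneity in $Z$, it suffices to prove the bound after recentering $Z$ at a point $a^{\star}$ nearly realising $\sigma_{p+\delta}(Z)=\min_{a}\Vert Z-a\Vert_{p+\delta}$, reducing to $\sigma_{p+\delta}(Z)\leq 1$. Then use a \emph{random quantizer}: choose a probability density $g$ on $\R$ with suitable polynomial tails and set $\Gamma_N=\{Y_1,\dots,Y_N\}$ with the $Y_i$ i.i.d.\ of density $g$. Conditioning, $\E\big[\int_\R \min_{i\leq N}|z-Y_i|^p\,\Prob_{_{Z}}(dz)\big]=\int_\R \big(\int_0^\infty p t^{p-1}(1-G_z(t))^N\,dt\big)\,\Prob_{_{Z}}(dz)$ with $G_z(t)=\int_{z-t}^{z+t}g$; bounding $(1-G_z(t))^N\leq \e^{-NG_z(t)}$ and estimating $G_z$ from below for the chosen $g$ gives $\E[\min_i|z-Y_i|^p]\leq C\,(1+|z|)^{\theta}N^{-p}$ for a suitable exponent $\theta$, and integrating against $\Prob_{_{Z}}$ using $Z\in L^{p+\delta}$ yields an averaged cost $\leq C_{1,p,\delta}^{p}\,\sigma_{p+\delta}(Z)^p N^{-p}$; hence some deterministic $\Gamma_N$ does at least as well, which is (b).

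\emph{Part (a).} Step 1: for $Z\sim U([0,1])$ the optimal $N$-quantizer is $\{(2i-1)/(2N)\}_{i=1}^N$, all Voronoï cells have length $1/N$, and a direct integration produces the asserted constant (this explicit case goes back to \cite{sheppard1897calculation}; here $\varphi=\mathds{1}_{[0,1]}$ and $\int\varphi^{1/(1+p)}=1$). Step 2 (upper bound, $\varphi$ with compact support): cut the support into $M$ equal intervals $I_1,\dots,I_M$, allocate to $I_j$ a count $N_j\approx N\,\big(\int_{I_j}\varphi^{1/(1+p)}\big)/\sum_k\int_{I_k}\varphi^{1/(1+p)}$ with $\sum_j N_j\leq N$, quantize $\varphi\mathds{1}_{I_j}$ on each piece near-optimally (close to the uniform case as $M\to\infty$), sum the local contributions, and let $M\to\infty$ then $N\to\infty$; Hölder's inequality shows this allocation is the right one and the Riemann sums converge to $\frac{1}{2^p(p+1)}\big(\int\varphi^{1/(1+p)}\big)^{1+p}$ in the relevant normalisation, giving the $\limsup$. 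Step 3 (matching lower bound, same setting): for any $\Gamma_N$, restrict to each $I_j$ and bound its local distortion below by that of the best quantizer of $\varphi\mathds{1}_{I_j}$ using the $|\Gamma_N\cap I_j|$ points it contains; summing and minimising over allocations is a convexity problem (Jensen applied to $x\mapsto x^{-p}$) whose optimum is again the $\varphi^{1/(1+p)}$-proportional allocation, yielding the matching $\liminf$, hence the limit. Step 4 (general a.c.\ $\Prob_{_{Z}}=\varphi\lambda$): for $R>0$ split the budget into $\lfloor(1-\varepsilon)N\rfloor$ points for $Z$ restricted to $\{|Z|\leq R\}$ (Steps 2--3) and $\lceil\varepsilon N\rceil$ points for $\{|Z|>R\}$, the latter contributing $\leq C\,\sigma_{p+\delta}\!\big(Z\mathds{1}_{\{|Z|>R\}}\big)(\varepsilon N)^{-1}\to 0$ as $R\to\infty$ by (b); then let $R\to\infty$ and $\varepsilon\to 0$, using $\int_{\{|x|\leq R\}}\varphi^{1/(1+p)}\uparrow\int_\R\varphi^{1/(1+p)}$. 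Step 5 (singular part): $\nu$ is carried by a $\lambda$-null Borel set $A$; cover $A$ by an open set $O$ with $\lambda(O)$ arbitrarily small, devote $o(N)$ (but $\to\infty$) points to $O$ so the whole $\Prob_{_{Z}}$-mass of $O$ is quantized at vanishing cost since $O$ is $\lambda$-small, and treat $\varphi\mathds{1}_{O^c}$ with the remaining $\sim N$ points via Steps 2--4; letting $\lambda(O)\downarrow 0$ shows $\nu$ affects neither the $\limsup$ nor the $\liminf$.

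The main obstacle is the two-scale analysis of Steps 2--4: one must control everything uniformly as the partition is refined (so the Riemann sums converge) and simultaneously as $N\to\infty$, and in particular make the convexity argument of Step 3 fully rigorous, since this is what pins the exponent $1/(1+p)$ and hence the exact constant. A secondary subtlety is that the passage to unbounded support in Step 4 genuinely exploits part (b) with its sharp dependence on $\sigma_{p+\delta}(Z)$, which is why (b) must be established first and carefully.
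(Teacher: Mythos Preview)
The paper does not prove this theorem at all: Zador's Theorem is stated as a classical result from the quantization literature (with implicit reference to \cite{graf2000foundations,pages2018numerical} mentioned elsewhere in the paper) and is used as a black box. Your opening remark that ``one route is simply to invoke \cite{graf2000foundations,pages2018numerical}'' is therefore exactly what the paper does, and would have sufficed.

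Your additional self-contained sketch is nonetheless a faithful outline of the standard proof found in those references: the random-quantizer argument for the non-asymptotic bound, the explicit uniform case, the firefly/patching argument with the $\varphi^{1/(1+p)}$-proportional allocation, the convexity/Jensen lower bound, and the tail-splitting that uses part (b) to handle unbounded support and the singular part. The structure and the order (proving (b) first to feed into (a)) are correct; the two places you flag as delicate --- uniform control in the two-scale limit and the rigorous convexity lower bound --- are indeed where the work lies, but there is no gap in the plan itself.
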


The next result answers to the following question: what can be said about the convergence rate of $\E \big[ \vert Z - \widehat{Z}^N \vert^{2 + \beta} \big]$, knowing that $\widehat Z^N$ is a quadratic optimal quantization?

This problem is known as the distortion mismatch problem and has been first addressed by \cite{graf2008distortion} and the results have been extended in Theorem 4.3 of \cite{pages2018improved}.

\begin{theorem}\label{3F:LrLsdistortionmismatch}[$L^r$-$L^s$-distortion mismatch]
	Let $Z:(\Omega, \A, \Prob ) \rightarrow \R$ be a random variable and let $r \in (0, + \infty)$. Assume that the distribution $\Prob_{_{Z}}$ of $Z$ has a non-zero absolutely continuous component with density $\varphi$. Let $(\Gamma_N)_{N \geq 1}$ be a sequence of $L^r$-optimal grids.
	Let $s \in (r, r+1)$. If
	\begin{equation}
		Z \in L^{\frac{s}{1+r-s}+\delta}(\Omega, \A, \Prob)
	\end{equation}
	for some $\delta>0$, then
	\begin{equation}
		\limsup_N N \Vert Z - \widehat{Z}^N \Vert_{_{s}} < + \infty.
	\end{equation}
\end{theorem}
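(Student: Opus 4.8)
Here is how I would organise the proof (it is the one-dimensional case of the distortion mismatch theorem, cf. \cite{graf2008distortion,pages2018improved}). Order the grid $\Gamma_N=\{z_1<\dots<z_N\}$, with interval Voronoï cells $C_i=C_i(\Gamma_N)$, so that
\[
N^{s}\,\Vert Z-\widehat Z^{N}\Vert_{_s}^{s}=N^{s}\sum_{i=1}^{N}\int_{C_i}\vert\xi-z_i\vert^{s}\,\Prob_{_Z}(d\xi).
\]
The hypotheses are coherent: $s<r+1$ makes $\tfrac{s}{1+r-s}$ finite, while $\tfrac{s}{1+r-s}>r$ since $s>r$, so $Z\in L^{r+\eta}$ for some $\eta>0$ and Zador's Theorem~\ref{3F:zador} applies to the $L^{r}$-optimal grids. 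Setting $w_i:=N^{r}\int_{C_i}\vert\xi-z_i\vert^{r}\,\Prob_{_Z}(d\xi)$ one gets $\sum_i w_i=(N\,\Vert Z-\widehat Z^{N}\Vert_{_r})^{r}=O(1)$. Since $\vert\xi-z_i\vert^{s}\le\vert\xi-z_i\vert^{r}\operatorname{diam}(C_i)^{s-r}$ on $C_i$, setting $\theta_i:=N\operatorname{diam}(C_i)$ reduces the whole problem to the uniform bound
\[
\sum_{i=1}^{N}\theta_i^{\,s-r}\,w_i=O(1),
\]
the two outermost, possibly unbounded cells being handled separately. As $s-r<1$, knowing only $\sum_i w_i=O(1)$ is not enough: the mesh factors $\theta_i$ must be controlled, and the geometry of $L^{r}$-optimal grids is what does it.

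\emph{Bulk.} Fix a compact interval $K$ carrying positive absolutely continuous mass. One shows $\max\{\theta_i:\ C_i\cap K\neq\emptyset\}=O(1)$ uniformly in $N$, by a perturbation argument based on $L^{r}$-optimality alone: a cell meeting $K$ and anomalously wide would carry $\Prob_{_Z}$-mass comparable to its width, so splitting it — reinserting a quantizer point pulled from a zone where the empirical point density of $\Gamma_N$ exceeds the asymptotically optimal one, which must exist since $\#\Gamma_N=N$ while $N\Vert Z-\widehat Z^{N}\Vert_{_r}$ stays bounded — would strictly lower the $L^{r}$-distortion, a contradiction. Hence $\sum_{i:\,C_i\cap K\neq\emptyset}\theta_i^{\,s-r}w_i\le O(1)\sum_i w_i=O(1)$.

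\emph{Tail.} On $K^{c}$ the $\theta_i$ need not be bounded, so the estimate must come from the scarcity of mass. The structural fact to prove — again from $L^{r}$-optimality, now in the degenerate regime $\varphi\downarrow 0$ — is that far out the $i$-th gap of $\Gamma_N$ is, up to constants, $N^{-1}$ times a local power $\varphi^{-1/(1+r)}$ of the density (the one-dimensional $L^{r}$-Zador point density); it then follows that $\sum_{i:\,C_i\subset K^{c}}\theta_i^{\,s-r}w_i$ is, up to constants and the singular and extreme-cell corrections, comparable to $\int_{K^{c}}\varphi^{\,1-s/(1+r)}\,d\lambda$. That this integral is finite is pure Hölder: with $\beta:=1-\tfrac{s}{1+r}=\tfrac{1+r-s}{1+r}\in(0,1)$ and $\rho:=\tfrac{s}{1+r-s}+\delta$,
\[
\int_{K^{c}}\varphi^{\beta}\,d\lambda=\int_{K^{c}}\bigl(\varphi\,\vert\xi\vert^{\rho}\bigr)^{\beta}\vert\xi\vert^{-\rho\beta}\,d\lambda\le\Bigl(\int\vert\xi\vert^{\rho}\varphi\,d\lambda\Bigr)^{\beta}\Bigl(\int_{K^{c}}\vert\xi\vert^{-\frac{\rho\beta}{1-\beta}}\,d\lambda\Bigr)^{1-\beta},
\]
where $\tfrac{\rho\beta}{1-\beta}=\rho\,\tfrac{1+r-s}{s}=1+\delta\,\tfrac{1+r-s}{s}>1$ makes the last integral converge, and $\int\vert\xi\vert^{\rho}\varphi\,d\lambda\le\Vert Z\Vert_{\rho}^{\rho}<+\infty$ by hypothesis — the $+\delta$ being exactly what secures strict convergence. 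Finally the extreme cells $C_1,C_N$: the $\Prob_{_Z}$-mass outside the outermost grid points of an $L^{r}$-optimal quantizer is $O(1/N)$, and $N^{s}\int_{C_1\cup C_N}\vert\xi-z_i\vert^{s}\,\Prob_{_Z}(d\xi)$ is again dominated by the same moment of $Z$.

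The Hölder bookkeeping above — which is routine — already shows why the threshold is $\tfrac{s}{1+r-s}$ and why $s<r+1$ is imposed (to keep that threshold finite). The genuine obstacle, which I would isolate as one standalone lemma essentially quoting the quantitative estimates of \cite{graf2008distortion} and Theorem~4.3 of \cite{pages2018improved}, is the structural control of $L^{r}$-optimal grids — the uniform mesh bound on the bulk, the gap-versus-density asymptotics in the tail, and the behaviour of the extreme cells — all delicate precisely because $\varphi$ is neither bounded below nor prevented from oscillating.
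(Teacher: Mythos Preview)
The paper does not actually prove this theorem: it is stated as a quoted result, attributed to \cite{graf2008distortion} and to Theorem~4.3 of \cite{pages2018improved}, and then invoked as a black box in the proofs of Theorems~\ref{3F:thm:markov} and~\ref{3F:thm:non_markov}. There is therefore no in-paper proof to compare your proposal against.

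That said, your sketch is faithful to the strategy of the cited references: the bulk/tail decomposition of the Vorono\"i cells, the mesh control $\theta_i=O(1)$ on compacts obtained from $L^r$-optimality, the Zador point-density heuristic $\varphi^{-1/(1+r)}$ governing the gaps in the tail, and the H\"older computation singling out the exponent $\beta=\tfrac{1+r-s}{1+r}$ and explaining why the moment threshold is $\tfrac{s}{1+r-s}$ and why the extra $+\delta$ is needed. You are also right that the substantive difficulty lies in the structural control of $L^r$-optimal grids (uniform local mesh bounds, tail gap asymptotics, extreme cells), and that you would ultimately have to quote those estimates from \cite{graf2008distortion,pages2018improved}. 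For the purposes of this paper, simply citing the result --- as the authors do --- is the intended level of detail.
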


\paragraph{Product Quantization.} Now, let $Z = (Z^{\ell})_{\ell=1:d}$ be an $\R^d$-valued random vector with distribution $\Prob_{_{Z}}$ defined on a probability space $( \Omega, \A, \Prob )$. There are two approaches if one wishes to scale to higher dimensions. Either one applies the above framework directly to the random vector $Z$ and build an optimal quantizer of $Z$, or one may consider separately each component $Z^{\ell}$ independently, build a one-dimensional optimal quantization $\widehat Z^{\ell}$, of size $N^{\ell}$, with quantizer $ \Gamma_{\ell}^{N^{\ell}} = \big\{ z_{i_{\ell}}^{\ell}, i_{\ell} \in \{ 1, \cdots, N^{\ell} \} \big\} $ and then build the product quantizer $\Gamma^{N}= \prod_{\ell=1}^d \Gamma_l^{N^{\ell}}$ of size $N = N^{1} \times \cdots \times N^{d}$ defined by
\begin{equation}
	\Gamma^{N} = \big\{ ( z_{i_1}^{1}, \cdots, z_{i_{\ell}}^{\ell}, \cdots, z_{i_d}^{d} ),\quad i_{\ell} \in \{ 1, \cdots, N_{\ell} \},\quad \ell \in \{ 1, \cdots, d \} \big\}.
\end{equation}

In our case we chose the second approach. Indeed, it is much more flexible when dealing with normal distribution, like in our case. We do not need to solve the $d$-dimensional minimization problem at each time step. We only need to load precomputed optimal quantizer of standard normal distribution $\N(0,1)$ and then take advantage of the stability of optimal quantization by rescaling in one dimension in the sense that if $\Gamma^N = \{ z_i, 1 \leq i \leq N \}$ is optimal at level $N$ for $\N(0,1)$ then $\mu+\sigma \Gamma^N$ (with obvious notations) is optimal for $\N(\mu, \sigma^2)$.

Even though it exists fast methods for building optimal quantizers in two-dimension based on deterministic methods like in the one-dimensional case, when dealing with optimal quantization of bivariate Gaussian vector, we may face numerical instability when the covariance matrix is ill-conditioned: so is the case if the variance of one coordinate is relatively high compared to the second one (which is our case in this paper). This a major drawback as we are looking for a fast numerical solution able to produce prices in a few seconds and this is possible when using product optimal quantization.

\paragraph{Quantization Tree.} Now, in place of considering a random variable $Z$, let $(Z_t)_{t\in[0,T]}$ be a stochastic process following a Stochastic Differential Equation (SDE)
\begin{equation}
	Z_t = Z_0 + \int_0^t b_s (Z_s) ds + \int_0^t \sigma(s,Z_s) dW_s
\end{equation}
with $Z_0 = z_0 \in \R^d$, $W$ a standard Brownian motion living on a probability space $(\Omega, \A, \Prob)$ and $b$ and $\sigma$ satisfy the standard assumptions in order to ensure the existence of a strong solution of the SDE.

What we call Quantization Tree is defined, for chosen time steps $t_k = Tk/n, k=0, \cdots, n$, by quantizers $\widehat Z_k$ of $Z_k$ (Product Quantizers in our case) at dates $t_k$ and the transition probabilities between date $t_k$ and date $t_{k+1}$. Although $(\widehat Z_k)_k$ is no longer a Markov process we will consider the transition probabilities $\pi_{ij}^k = \Law (\widehat Z_{k+1} \mid \widehat Z_k)$. We can apply this methodology because, with the model we consider, we know all the marginal laws of our processes at each date of interest.

In the next subsection, we present the approach based on the quantization tree previously defined that allows us to approximate the price of Bermudan options where the risk factors are driven by the 3-factor model \eqref{3F:3FModelFX}.

\subsection{Quantization tree approximation: Markov case} \label{3F:subsection:markov}

Our first idea in order to discretize \eqref{3F:BDPP_markovian} is to replace the processes by a product quantizer composed with optimal quadratic quantizers. Indeed, at each time $t_k$, we know the law of the processes $X_k$, $W_k^f$, $Y_k$ and $W_k^d$. Then we "force" in some sense the (lost) Markov property by introducing the \textit{Quantized Backward Dynamic Programming Principle} (QBDPP) defined by
\begin{equation}\label{3F:BDPP_quantif_markovian}
	\left\{
	\begin{aligned}
		 & \widehat V_n = h_n (\widehat X_n, \widehat Y_n),                                                                                                                                                        \\
		 & \widehat V_k = \max \Big( h_k (\widehat X_k, \widehat Y_k), \E \big[ \widehat V_{k+1} \mid (\widehat X_k, \widehat W_k^f, \widehat Y_k, \widehat W_k^d) \big] \Big) \mathrm{,\qquad} 0 \leq k \leq n-1,
	\end{aligned} \right.
\end{equation}
where for every $k = 0, \dots, n$, $\widehat X_k$, $\widehat W_k^f$, $\widehat Y_k$ and $\widehat W_k^d$ are quadratic optimal quantizers of $X_k$, $W_k^f$, $Y_k$ and $W_k^d$ of size $N_k^X$, $N_k^{W^f}$, $N_k^Y$ and $N_k^{W^d}$ respectively and we denote $N_k = N_k^X \times N_k^{W^f} \times N_k^Y \times N_k^{W^d}$ the size of the grid of the product quantizer.

We are interested by the error induced by the numerical algorithm defined in \eqref{3F:BDPP_quantif_markovian} and more precisely its $L^2$-error, with in mind that we "lost" the Markov property in the quantization procedure. Moreover, this can be circumvented as shown below.
\begin{theorem} \label{3F:thm:markov}
	Let the Markov transition $P f(x,u,y,v)$ defined in \eqref{3F:transition_markov_process} be locally Lipschitz in the sense of Lemma \ref{3F:KindOfLip_Markov}. Assume that all the payoff functions $(\psi_{t_k})_{k=0:n}$ are Lipschitz continuous with compactly supported (right) derivative. Then the $L^2$-error induced by the quantization approximation $(\widehat X_k, \widehat W_k^f, \widehat Y_k, \widehat W_k^d)$ is upper-bounded by
	\begin{equation}
		\begin{aligned}
			\big\Vert V_k - \widehat V_k \big\Vert_{_2} & \leq \bigg( \sum_{l=k}^{n} C_{X_l} \big\Vert X_l - \widehat X_l \big\Vert_{_{2p}}^2 + C_{Y_l} \big\Vert Y_l - \widehat Y_l \big\Vert_{_{2p}}^2 + C_{W_l^d} \big\Vert W_l^d - \widehat W_l^d \big\Vert_{_{2p}}^2 + C_{W_l^f} \big\Vert W_l^f - \widehat W_l^f \big\Vert_{_{2p}}^2 \bigg)^{1/2},
		\end{aligned}
	\end{equation}
	where $1<p<3/2$ and $q \geq 1$ such that $\frac{1}{p} + \frac{1}{q} = 1$ and
	\begin{equation} \label{3F:cst_thm_markov_1}
		\begin{aligned}
			C_{X_l} & = [\widebar \psi_l]_{_{Lip}}^2 \big\Vert \e^{ \vert Y_l \vert \vee \vert \widehat Y_l \vert } \big\Vert_{_{2q}}^2 + \widetilde A_l^2 K_l^2, \quad                                                                 & C_{W_l^d} = \widetilde B_l^2 K_l^2, \\
			C_{Y_l} & = \big( \varphi_d(t_l) \Vert \psi_{t_l} \Vert_{_{\infty}} + [\widebar \psi_l]_{_{Lip}} \big)^2 \big\Vert \e^{ \vert Y_l \vert \vee \vert \widehat Y_l \vert } \big\Vert_{_{2q}}^2 + \widetilde C_l^2 K_l^2, \quad & C_{W_l^f} = \widetilde D_l^2 K_l^2  \\
		\end{aligned}
	\end{equation}
	with
	\begin{equation} \label{3F:cst_thm_markov_2}
		K_l = \big\Vert \e^{ \vert Y_l \vert \vee \vert \widehat Y_l \vert + \widetilde b_l \vert W_l^d \vert \vee \vert \widehat W_l^d \vert} \big\Vert_{_{2q}}.
	\end{equation}
	As a consequence if $\widebar N = \min N_k$, we have
	\begin{equation}
		\lim_{\widebar N \rightarrow + \infty} \big\Vert V_k - \widehat V_k \big\Vert_{_2}^2 = 0.
	\end{equation}
\end{theorem}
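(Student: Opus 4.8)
The plan is a backward induction on $k$ from $n$ down to $0$, carried out at the level of the functional representations $V_k=v_k(X_k,W^f_k,Y_k,W^d_k)$ given by \eqref{3F:BDPP_markovian_smallv} and its quantized analogue $\widehat V_k=\widehat v_k(\widehat X_k,\widehat W^f_k,\widehat Y_k,\widehat W^d_k)$ read off from \eqref{3F:BDPP_quantif_markovian}; note that the conditional expectation in \eqref{3F:BDPP_quantif_markovian} is a genuine one, since the quantization tree is fed with the exact transition probabilities $\pi^k_{ij}=\Law(\widehat Z_{k+1}\mid\widehat Z_k)$, writing $Z_k=(X_k,W^f_k,Y_k,W^d_k)$ and $\widehat Z_k=(\widehat X_k,\widehat W^f_k,\widehat Y_k,\widehat W^d_k)$. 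With $\widehat{\F}_k=\sigma(\widehat Z_k)\subset\F_{t_k}$ and the $1$-Lipschitz property of $\max$,
\[
|V_k-\widehat V_k|\le\big|h_k(X_k,Y_k)-h_k(\widehat X_k,\widehat Y_k)\big|\vee\big|\E[V_{k+1}\mid\F_{t_k}]-\E[\widehat V_{k+1}\mid\widehat{\F}_k]\big|,
\]
and then, inserting $\E[V_{k+1}\mid\widehat{\F}_k]$ and using the tower property together with the Markov identity $\E[V_{k+1}\mid\F_{t_k}]=Pv_{k+1}(Z_k)$ (from \eqref{3F:transition_markov_process}), I would split the second term into a \emph{projection residual} $Pv_{k+1}(Z_k)-\E[Pv_{k+1}(Z_k)\mid\widehat{\F}_k]$ and a \emph{propagated error} $\E[V_{k+1}-\widehat V_{k+1}\mid\widehat{\F}_k]$.

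The propagated error has $L^2$-norm at most $\|V_{k+1}-\widehat V_{k+1}\|_2$ by contractivity of conditional expectation, which is what closes the recursion. The payoff-comparison term is controlled by Proposition \ref{3F:regularity_prdc}. For the projection residual I would use that $\E[\,\cdot\mid\widehat{\F}_k]$ is the best $\sigma(\widehat Z_k)$-measurable $L^2$-approximant, so the residual is no larger in $L^2$ than $\|Pv_{k+1}(Z_k)-Pv_{k+1}(\widehat Z_k)\|_2$, and then apply Lemma \ref{3F:KindOfLip_Markov} to the local Lipschitz bound \eqref{3F:kindoflip_valuefunction_prdc} of $v_{k+1}$ to get the local Lipschitz control of $Pv_{k+1}$ with the constants $\widetilde A_\bullet,\widetilde B_\bullet,\widetilde C_\bullet,\widetilde D_\bullet,\widetilde b_\bullet$. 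In both cases the bound is a product of a linear combination of the componentwise errors $|X_l-\widehat X_l|,|W^f_l-\widehat W^f_l|,|Y_l-\widehat Y_l|,|W^d_l-\widehat W^d_l|$ with an exponential weight in $|Y_l|\vee|\widehat Y_l|$ and $|W^d_l|\vee|\widehat W^d_l|$; I would separate the two factors by Hölder's inequality with the conjugate pair $(p,q)$, $\tfrac1p+\tfrac1q=1$, in the form $\|FG\|_2\le\|F\|_{2p}\|G\|_{2q}$, which produces the $\|{\cdot}-\widehat{\cdot}\|_{2p}^2$ factors and the constants $C_{X_l},C_{Y_l},C_{W^d_l},C_{W^f_l},K_l$ of \eqref{3F:cst_thm_markov_1}--\eqref{3F:cst_thm_markov_2}. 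Since the projection residual at level $k$ is $L^2$-orthogonal to $\widehat{\F}_k$ (and hence to the propagated error), the level-$k$ contributions and the propagated one add in quadrature, so iterating from $k=n$ (where $V_n-\widehat V_n=h_n(X_n,Y_n)-h_n(\widehat X_n,\widehat Y_n)$) gives exactly the announced $\big(\sum_{l=k}^n\cdots\big)^{1/2}$ bound.

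It then remains to check finiteness of the constants and to conclude. The marginals $X_l,W^f_l,Y_l,W^d_l$ are Gaussian, hence in every $L^\rho$ with finite exponential moments of every order; since conditional expectation is $L^\rho$-non-expansive and $\exp$ is convex, $\|\widehat Y_l\|_{2q}\le\|Y_l\|_{2q}$ and $\E[\e^{c|\widehat Y_l|}]\le\E[\e^{c|Y_l|}]<\infty$ (likewise for $\widehat W^d_l$), so $\|\e^{|Y_l|\vee|\widehat Y_l|}\|_{2q}$ and $K_l$ are finite uniformly in the quantization levels. For the limit, the restriction $1<p<3/2$ makes $s:=2p\in(2,3)=(r,r+1)$ with $r=2$, and every Gaussian marginal lies in $L^{s/(1+r-s)+\delta}=L^{2p/(3-2p)+\delta}$ for all $\delta>0$; hence the $L^r$--$L^s$ distortion mismatch Theorem \ref{3F:LrLsdistortionmismatch} applies to the quadratic-optimal quantizers and gives $\|X_l-\widehat X_l\|_{2p}=O(1/N_l^X)$, and similarly for the three other components. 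Every summand therefore tends to $0$, so $\|V_k-\widehat V_k\|_2^2\to0$ as $\widebar N=\min_k N_k\to+\infty$.

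The step I expect to be the genuine obstacle is making the projection-residual estimate rigorous jointly with the bookkeeping of the exponential weights: one must verify that the repeated application of Lemma \ref{3F:KindOfLip_Markov} along the backward recursion keeps the weights and their $L^{2q}$-norms from blowing up, and that after the Hölder splitting the residual quantization errors are measured in an $L^{2p}$-norm with $2p<3$ --- precisely the regime in which the distortion mismatch theorem is available; plain Zador (Theorem \ref{3F:zador}), which only controls the $L^2$-quantization error, would not suffice. Pinning down the exponents in \eqref{3F:cst_thm_markov_1}--\eqref{3F:cst_thm_markov_2} and the $\ell^2$ aggregation while keeping every constant finite uniformly in $N$ is the technical heart of the argument.
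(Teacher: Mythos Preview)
Your proposal is correct and follows essentially the same route as the paper: the $\max$ inequality, the two-term split of the conditional-expectation difference, Proposition~\ref{3F:regularity_prdc} and Lemma~\ref{3F:KindOfLip_Markov} for the local-Lipschitz control, H\"older with exponents $(2p,2q)$, and Theorem~\ref{3F:LrLsdistortionmismatch} for the limit. Two points where you are in fact slightly more careful than the paper: (i) you make the Pythagorean step explicit (the projection residual is orthogonal to $\sigma(\widehat Z_k)$, hence to the propagated error), which is what allows the recursion $\|V_k-\widehat V_k\|_2^2\le(\text{local}_k)+\|V_{k+1}-\widehat V_{k+1}\|_2^2$ without a factor $2$; the paper uses this silently. (ii) You bound the projection residual via the best-approximation property, $\|Pv_{k+1}(Z_k)-\E[Pv_{k+1}(Z_k)\mid\widehat Z_k]\|_2\le\|Pv_{k+1}(Z_k)-Pv_{k+1}(\widehat Z_k)\|_2$, whereas the paper writes the identity $\E[V_{k+1}\mid\widehat Z_k]=Pv_{k+1}(\widehat Z_k)$, which is not literally true but leads to the same (correct) upper bound.
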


\begin{remark}
	From the definition of the processes $X_k$, $W_k^f$, $Y_k$ and $W_k^d$, all are Gaussian random variables hence all the $L^{2q}$-norms in Equations \eqref{3F:cst_thm_markov_1} and \eqref{3F:cst_thm_markov_2} are finite. Indeed, let $Z \sim \N (0, \sigma_{_Z})$ a Gaussian random variable with variance $\sigma_{_Z}^2$ and $\widehat Z$ an optimal quantizer of $Z$ with cardinality $N$ then $\forall \lambda \in \R_+$
	\begin{equation}
		\begin{aligned}
			\big\Vert \e^{\lambda \vert Z \vert \vee \vert \widehat Z \vert } \big\Vert_{_{2q}}
			= \bigg( \E \big[ \e^{ 2 q \lambda \vert Z \vert \vee \vert \widehat Z \vert } \big] \bigg)^{\frac{1}{2q}}
			\leq \bigg( 2 \E \big[ \e^{ 2 q \lambda \vert Z \vert } \big] \bigg)^{\frac{1}{2q}}
			\leq 2^{\frac{1}{2q}} \e^{ q^2 \lambda^2 \sigma_{_Z}^2 }.
		\end{aligned}
	\end{equation}
\end{remark}

\begin{proof}
	The error between the Snell envelope and its approximation is given by
	\begin{equation}
		\begin{aligned}
			\vert V_k - \widehat V_k \vert & \leq \max \Big( \big\vert h_k (X_k, Y_k) - h_k (\widehat X_k, \widehat Y_k) \big\vert,                                                                                                                                \\
			                               & \qquad \qquad \qquad \qquad \qquad \big\vert \E \big[ V_{k+1} \mid (X_k, W_k^f, Y_k, W_k^d) \big] - \E \big[ \widehat V_{k+1} \mid (\widehat X_k, \widehat W_k^f, \widehat Y_k, \widehat W_k^d) \big] \big\vert \Big)
		\end{aligned}
	\end{equation}
	thus, using the local Lipschitz property of $h_k$ established in Proposition \ref{3F:regularity_prdc} and Hölder's inequality with $p, q \geq 1$ such that $\frac{1}{p} + \frac{1}{q} = 1$, the $L^2$-error is upper-bounded by
	\begin{equation}
		\begin{aligned}
			\big\Vert V_k - \widehat V_k \big\Vert_{_2}^2
			 & \leq \big\Vert h_k (X_k, Y_k) - h_k (\widehat X_k, \widehat Y_k) \big\Vert_{_2}^2                                                                                                                                                                                                                                                 \\
			 & \qquad \qquad \qquad + \big\Vert \E \big[ V_{k+1} \mid (X_k, W_k^f, Y_k, W_k^d) \big] - \E \big[ \widehat V_{k+1} \mid (\widehat X_k, \widehat W_k^f, \widehat Y_k, \widehat W_k^d) \big] \big\Vert_{_2}^2.                                                                                                                       \\
			 & \leq \big\Vert \e^{ \vert Y_k \vert \vee \vert \widehat Y_k \vert } \big\Vert_{_{2q}}^2 \Big( \big( \varphi_d(t_k) \Vert \psi_{t_k} \Vert_{_{\infty}} + [\widebar \psi_k]_{_{Lip}} \big)^2 \big\Vert Y_k - \widehat Y_k \big\Vert_{_{2p}}^2 + [\widebar \psi_k]_{_{Lip}}^2 \big\Vert X_k - \widehat X_k \big\Vert_{_{2p}}^2 \Big) \\
			 & \qquad \qquad + \big\Vert \E \big[ V_{k+1} \mid (X_k, W_k^f, Y_k, W_k^d) \big] - \E \big[ \widehat V_{k+1} \mid (\widehat X_k, \widehat W_k^f, \widehat Y_k, \widehat W_k^d) \big] \big\Vert_{_2}^2.
		\end{aligned}
	\end{equation}
	Looking at the last term, we have
	\begin{equation}
		\begin{aligned}
			\E \big[ V_{k+1} \mid (X_k, W_k^f, & Y_k, W_k^d) \big] - \E \big[ \widehat V_{k+1} \mid (\widehat X_k, \widehat W_k^f, \widehat Y_k, \widehat W_k^d) \big]                                                                           \\
			=                                  & \E \big[ V_{k+1} \mid (X_k, W_k^f, Y_k, W_k^d) \big] - \E \big[ V_{k+1} \mid (\widehat X_k, \widehat W_k^f, \widehat Y_k, \widehat W_k^d) \big]                                                 \\
			+                                  & \E \big[ V_{k+1} \mid (\widehat X_k, \widehat W_k^f, \widehat Y_k, \widehat W_k^d) \big] - \E \big[ \widehat V_{k+1} \mid (\widehat X_k, \widehat W_k^f, \widehat Y_k, \widehat W_k^d) \big] ).
		\end{aligned}
	\end{equation}
	Now, we inspect the $L^2$-error of each term on the right-hand side of the equality.
	\begin{itemize}[wide=0pt]
		\item For the first term, notice that
		      \begin{equation}
			      \E \big[ V_{k+1} \mid (X_k, W_k^f, Y_k, W_k^d) \big] = Pv_{k+1} (X_k, W_k^f, Y_k, W_k^d)
		      \end{equation}
		      and
		      \begin{equation}
			      \E \big[ V_{k+1} \mid (\widehat X_k, \widehat W_k^f, \widehat Y_k, \widehat W_k^d) \big] = Pv_{k+1} (\widehat X_k, \widehat W_k^f, \widehat Y_k, \widehat W_k^d)
		      \end{equation}
		      then, we directly apply Lemma \ref{3F:KindOfLip_Markov} on the function $v_{k+1}$ and obtain
		      \begin{equation}
			      \begin{aligned}
				       & \vert Pv_{k+1} (X_k, W_k^f, Y_k, W_k^d) - Pv_{k+1} (\widehat X_k, \widehat W_k^f, \widehat Y_k, \widehat W_k^d) \vert                                                                                                                                                                                                                         \\
				       & \leq \Big( \widetilde A_k \vert X_k - \widehat X_k \vert + \widetilde B_k \vert W_k^f - \widehat W_k^f \vert + \widetilde C_k \vert Y_k - \widehat Y_k \vert + \widetilde D_k \vert W_k^d - \widehat W_k^d \vert \Big) \e^{ \vert Y_k \vert \vee \vert \widehat Y_k \vert + \widetilde b_k \vert W_k^d \vert \vee \vert \widehat W_k^d \vert}
			      \end{aligned}
		      \end{equation}
		      with $\widetilde A_k$, $\widetilde B_k$, $\widetilde C_k$, $\widetilde D_k$ and $\widetilde b_k$ defined by \eqref{3F:coeffsLemme_1_prdcpayoff} and \eqref{3F:coeffsLemme_2_prdcpayoff}. Hence, using Hölder's inequality with $p,q \geq 1$ such that $\frac{1}{p} + \frac{1}{q} = 1$,
		      \begin{equation}
			      \begin{aligned}
				       & \big\Vert \E \big[ V_{k+1} \mid (X_k, W_k^f, Y_k, W_k^d) \big] - \E \big[ V_{k+1} \mid (\widehat X_k, \widehat W_k^f, \widehat Y_k, \widehat W_k^d) \big] \big\Vert_{_2}^2                                                                                                                            \\
				       & \leq \Big(\widetilde A_k^2 \big\Vert X_k - \widehat X_k \big\Vert_{_{2p}}^2 + \widetilde B_k^2 \big\Vert W_k^f - \widehat W_k^f \big\Vert_{_{2p}}^2 + \widetilde C_k^2 \big\Vert Y_k - \widehat Y_k \big\Vert_{_{2p}}^2 + \widetilde D_k^2 \big\Vert W_k^d - \widehat W_k^d \big\Vert_{_{2p}}^2 \Big) \\
				       & \qquad \qquad \times \big\Vert \e^{ \vert Y_k \vert \vee \vert \widehat Y_k \vert + \widetilde b_k \vert W_k^d \vert \vee \vert \widehat W_k^d \vert} \big\Vert_{_{2q}}^2.
			      \end{aligned}
		      \end{equation}
		\item The last one is useful for the induction, indeed
		      \begin{equation}
			      \begin{aligned}
				      \big\Vert \E \big[ V_{k+1} \mid (\widehat X_k, \widehat W_k^f, \widehat Y_k, \widehat W_k^d) \big] - \E \big[ \widehat V_{k+1} \mid (\widehat X_k, \widehat W_k^f, \widehat Y_k, \widehat W_k^d) \big] \big\Vert_{_2}^2
				       & \leq \big\Vert V_{k+1} - \widehat V_{k+1} \big\Vert_{_2}^2.
			      \end{aligned}
		      \end{equation}
	\end{itemize}

	Finally, using the $L^r$-$L^s$ mismatch theorem for the quadratic optimal quantizers $\widehat X_k$ and $\widehat Y_k$, if $1<p<3/2$, then
	\begin{equation}
		\begin{aligned}
			 & \limsup_{N^X_k} N^X_k \Vert X_k - \widehat X_k \Vert_{_{2p}} < + \infty, \qquad \qquad \limsup_{N^Y_k} N^Y_k \Vert Y_k - \widehat Y_k \Vert_{_{2p}} < + \infty,                                \\
			 & \limsup_{N^{W^f}_k} N^{W^f}_k \Vert W^f_k - \widehat W^f_k \Vert_{_{2p}} < + \infty \textrm{ ~~~~and~~~~ } \limsup_{N^{W^d}_k} N^{W^d}_k \Vert W^d_k - \widehat W^d_k \Vert_{_{2p}} < + \infty
		\end{aligned}
	\end{equation}
	this yields
	\begin{equation}
		\begin{aligned}
			 & \big\Vert V_k - \widehat V_k \big\Vert_{_2}^2                                                                                                                                                                                                                                  \\
			 & \leq \big\Vert X_k - \widehat X_k \big\Vert_{_{2p}}^2 \Big( [\widebar \psi_k]_{_{Lip}}^2 \big\Vert \e^{ \vert Y_k \vert \vee \vert \widehat Y_k \vert } \big\Vert_{_{2q}}^2 + \widetilde A_k^2 K_k^2 \Big)                                                                     \\
			 & \qquad + \big\Vert Y_k - \widehat Y_k \big\Vert_{_{2p}}^2 \Big( \big( \varphi_d(t_k) \Vert \psi_{t_k} \Vert_{_{\infty}} + [\widebar \psi_k]_{_{Lip}} \big)^2 \big\Vert \e^{ \vert Y_k \vert \vee \vert \widehat Y_k \vert } \big\Vert_{_{2q}}^2 + \widetilde C_k^2 K_k^2 \Big) \\
			 & \qquad \qquad + \widetilde B_k^2 K_k^2 \big\Vert W_k^f - \widehat W_k^f \big\Vert_{_{2p}}^2 + \widetilde D_k^2 K_k^2 \big\Vert W_k^d - \widehat W_k^d \big\Vert_{_{2p}}^2 + \big\Vert V_{k+1} - \widehat V_{k+1} \big\Vert_{_2}^2                                              \\
			 & \leq \sum_{l=k}^{n} C_{X_l} \big\Vert X_l - \widehat X_l \big\Vert_{_{2p}}^2 + C_{Y_l} \big\Vert Y_l - \widehat Y_l \big\Vert_{_{2p}}^2 + C_{W_l^d} \big\Vert W_l^d - \widehat W_l^d \big\Vert_{_{2p}}^2 + C_{W_l^f} \big\Vert W_l^f - \widehat W_l^f \big\Vert_{_{2p}}^2      \\
			 & \quad \xrightarrow{\widebar N \rightarrow +\infty } 0
		\end{aligned}
	\end{equation}
	where $K_{k} = \big\Vert \e^{ \vert Y_k \vert \vee \vert \widehat Y_k \vert + \widetilde b_k \vert W_k^d \vert \vee \vert \widehat W_k^d \vert} \big\Vert_{_{2q}}$ and $\forall k= 1, \dots, n, \quad C_{X_k}, C_{Y_k}, C_{W_k^d}, C_{W_k^f} < + \infty$.
\end{proof}

\begin{remark}
	The same result can be obtained if we relax the assumption on the payoff $\psi_k$. If we only assume the payoff Lipschitz continuous, we have the same limit with the same rate of convergence, however the constants $C_{X_l}, C_{Y_l}, C_{W_l^d}, C_{W_l^f}$ are not the same.
\end{remark}

To conclude this section, although considering product optimal quantizer in four dimensions for $(X_k, W_k^f, Y_k, W_k^d)$ seems to be natural, the computational cost associated to the resulting QBDPP is too high, of order $O (n \times (\max N_k)^2 )$. Moreover the computation of the transition probabilities needed for the evaluation of the terms $\E \big[ \widehat V_{k+1} \mid (\widehat X_k, \widehat W_k^f, \widehat Y_k, \widehat W_k^d) \big]$ are challenging. These transition probabilities cannot be computed using deterministic numerical integration methods and we have to use Monte Carlo estimators. Even though it is feasible, it is a drawback for the method since it increases drastically the computation time for calibrating the quantization tree. In the next section we provide a solution to these problems which consists in reducing the dimension of the problem at the price of adding a systematic error, which turns out to be quite small in practice.

\subsection{Quantization tree approximation: Non Markov case} \label{3F:subsection:non_markov}

In this part, we want to reduce the dimension of the problem in order to scale down the numerical complexity of the pricer. For that we discard the processes $W^d$ and $W^f$ in the tree and only keep $X$ and $Y$. Doing so, we loose the Markovian property of our original model but we drastically reduce the numerical complexity of the problem. Thence, \eqref{3F:BDPP_markovian} is approximated by
\begin{equation}\label{3F:BDPP_quantif_nonmarkovian}
	\left\{
	\begin{aligned}
		 & \widehat V_n = h_n (\widehat X_n, \widehat Y_n),                                                                                                                       \\
		 & \widehat V_k = \max \Big( h_k (\widehat X_k, \widehat Y_k), \E \big[ \widehat V_{k+1} \mid (\widehat X_k, \widehat Y_k) \big] \Big) \mathrm{,\qquad} 0 \leq k \leq n-1
	\end{aligned} \right.
\end{equation}
where for every $k = 0, \dots, n$, $\widehat X_k$ and $\widehat Y_k$ are quadratic optimal quantizers of $X_k$ and $Y_k$ of size $N_k^X$ and $N_k^Y$, respectively and we denote $N_k = N_k^X \times N_k^Y$ the size of the grid of the product quantizer.

\begin{theorem} \label{3F:thm:non_markov}
	Let the Markov transition $P f(x,u,y,v)$ be defined by \eqref{3F:transition_markov_process} be locally Lipschitz in the sense of Lemma \ref{3F:KindOfLip_Markov}. Assume that all the payoff functions $(\psi_{t_k})_{k=0:n}$ are Lipschitz continuous with compactly supported (right) derivative. Then the $L^2$-error, induced by the quantization approximation $(\widehat X_k, \widehat Y_k)$ is upper-bounded by
	\begin{equation}
		\begin{aligned}
			\big\Vert V_k - \widehat V_k \big\Vert_{_2}
			 & \leq \bigg( \sum_{l=k}^{n-1} C_{W_{l+1}^f} \big\Vert W_{l+1}^f - \E [ W_{l+1}^f \mid (X_l, Y_l) ] \big\Vert_{_{2p}}^2 + C_{W_{l+1}^d} \big\Vert W_{l+1}^d - \E [ W_{l+1}^d \mid (X_l, Y_l) ] \big\Vert_{_{2p}}^2 \\
			 & \qquad \qquad \qquad + C_{X_{l}} \big\Vert X_{l} - \widehat X_{l} \big\Vert_{_{2p}}^2 + C_{Y_{l}} \big\Vert Y_{l} - \widehat Y_{l} \big\Vert_{_{2p}}^2 \bigg)^{1/2}                                              \\
		\end{aligned}
	\end{equation}
	where $1<p<3/2$ and $q \geq 1$ such that $\frac{1}{p} + \frac{1}{q} = 1$, moreover
	\begin{equation}
		\begin{aligned}
			C_{X_l} & = [\widebar \psi_l]_{_{Lip}}^2 \big\Vert \e^{ \vert Y_l \vert \vee \vert \widehat Y_l \vert } \big\Vert_{_{2q}}^2 + \widebar A_l^2 \big\Vert \e^{\widebar b_l \vert Y_l \vert \vee \vert \widehat Y_l \vert} \big\Vert_{_{2q}}^2, \quad
			        & C_{W_{l+1}^f} = B_{l+1}^2 \big\Vert \widetilde \kappa_{k+1} \big\Vert_{_{2q}}^2,                                                                                                                                                                                                                        \\
			C_{Y_l} & = \big( \varphi_d(t_l) \Vert \psi_{t_l} \Vert_{_{\infty}} + [\widebar \psi_l]_{_{Lip}} \big)^2 \big\Vert \e^{ \vert Y_l \vert \vee \vert \widehat Y_l \vert } \big\Vert_{_{2q}}^2 + \widebar C_l^2 \big\Vert \e^{\widebar b_l \vert Y_l \vert \vee \vert \widehat Y_l \vert} \big\Vert_{_{2q}}^2, \quad
			        & C_{W_{l+1}^d} = D_{l+1}^2 \big\Vert \widetilde \kappa_{k+1} \big\Vert_{_{2q}}^2.                                                                                                                                                                                                                        \\
		\end{aligned}
	\end{equation}
	Taking the limit in $\widebar N = \min N_k$, the size of the quadratic optimal quantizers, we have
	\begin{equation}\label{3F:methodo_error_approx}
		\lim_{\widebar N \rightarrow + \infty} \big\Vert V_k - \widehat V_k \big\Vert_{_2}^2 = \sum_{l=k}^{n-1} C_{W_{l+1}^f} \big\Vert W_{l+1}^f - \E [ W_{l+1}^f \mid (X_l, Y_l) ] \big\Vert_{_{2p}}^2 + C_{W_{l+1}^d} \big\Vert W_{l+1}^d - \E [ W_{l+1}^d \mid (X_l, Y_l) ] \big\Vert_{_{2p}}^2.
	\end{equation}
\end{theorem}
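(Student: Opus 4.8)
The plan is to run the same backward induction as in the proof of Theorem~\ref{3F:thm:markov}, the only genuinely new ingredient being that, since the tree no longer carries $W^f$ and $W^d$, at each step one must replace the $u$- and $v$-arguments of the value function $v_{k+1}$ by their best $\sigma(X_k,Y_k)$-measurable surrogates; this substitution is exactly what produces the systematic, non-vanishing contribution in \eqref{3F:methodo_error_approx}. As in Theorem~\ref{3F:thm:markov}, the $1$-Lipschitz property of $\max$ together with $\max(a,b)^2\le a^2+b^2$ gives
\begin{equation*}
\big\Vert V_k-\widehat V_k\big\Vert_{_2}^2\le\big\Vert h_k(X_k,Y_k)-h_k(\widehat X_k,\widehat Y_k)\big\Vert_{_2}^2+\big\Vert \E[V_{k+1}\mid\F_{t_k}]-\E[\widehat V_{k+1}\mid(\widehat X_k,\widehat Y_k)]\big\Vert_{_2}^2 ,
\end{equation*}
and the first term is handled verbatim: Proposition~\ref{3F:regularity_prdc} and Hölder's inequality (with $1<p<3/2$, $\tfrac1p+\tfrac1q=1$) yield the $\e^{|Y_k|\vee|\widehat Y_k|}$-weighted contributions making up the first halves of $C_{X_k}$ and $C_{Y_k}$.

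The heart of the argument is the conditional-expectation term, where I would insert a telescoping chain between $\E[V_{k+1}\mid\F_{t_k}]$ and $\E[\widehat V_{k+1}\mid(\widehat X_k,\widehat Y_k)]$. Writing $V_{k+1}=v_{k+1}(X_{k+1},W_{k+1}^f,Y_{k+1},W_{k+1}^d)$ and letting $\widehat w^f=\E[W_{k+1}^f\mid(X_k,Y_k)]$, $\widehat w^d=\E[W_{k+1}^d\mid(X_k,Y_k)]$, the decomposition splits into three pieces: (a) the cost of substituting $W_{k+1}^f$ by $\widehat w^f$ and $W_{k+1}^d$ by $\widehat w^d$ inside $v_{k+1}$, bounded by the local Lipschitz property \eqref{3F:kindoflip_valuefunction_prdc} of $v_{k+1}$ in its $u$- and $v$-slots (slopes $B_{k+1}$, $D_{k+1}$, weight $\e^{|y|\vee|y'|+b_{k+1}|v|\vee|v'|}$), followed by conditional Jensen and Hölder, which produces exactly $B_{k+1}^2\Vert\widetilde\kappa_{k+1}\Vert_{_{2q}}^2\Vert W_{k+1}^f-\widehat w^f\Vert_{_{2p}}^2$ and $D_{k+1}^2\Vert\widetilde\kappa_{k+1}\Vert_{_{2q}}^2\Vert W_{k+1}^d-\widehat w^d\Vert_{_{2p}}^2$, i.e. the systematic constants $C_{W_{k+1}^f}$, $C_{W_{k+1}^d}$; (b) a pure quantization error $\E[\,\cdot\mid(X_k,Y_k)]-\E[\,\cdot\mid(\widehat X_k,\widehat Y_k)]$ for the reduced value function, controlled by the local Lipschitz property of the reduced transition, which I would obtain by re-running Lemma~\ref{3F:KindOfLip_Markov} on the $(X,Y)$-dynamics alone (this is where only a weight in $Y$ survives, with accumulated coefficients $\widebar A_k,\widebar C_k,\widebar b_k$), giving the $\widebar A_k^2\Vert\e^{\widebar b_k|Y_k|\vee|\widehat Y_k|}\Vert_{_{2q}}^2$ and $\widebar C_k^2\Vert\e^{\widebar b_k|Y_k|\vee|\widehat Y_k|}\Vert_{_{2q}}^2$ second halves of $C_{X_k}$, $C_{Y_k}$; and (c) the remainder $\E[V_{k+1}-\widehat V_{k+1}\mid(\widehat X_k,\widehat Y_k)]$, which conditional Jensen bounds by $\Vert V_{k+1}-\widehat V_{k+1}\Vert_{_2}$ and which carries the backward induction.

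Collecting the one-step bounds along the recursion from $k$ down (the terminal step contributing only the payoff term, since $v_n$ does not depend on $u,v$ and so its systematic coefficients vanish) gives the announced estimate; all the $L^{2q}$-exponential moments are finite because $X_l,Y_l,W_l^f,W_l^d$ are Gaussian, as in the remark following Theorem~\ref{3F:thm:markov}. Finally, since $1<p<3/2$, Theorem~\ref{3F:LrLsdistortionmismatch} applied to the quadratic optimal quantizers $\widehat X_l,\widehat Y_l$ yields $\limsup N_l^X\Vert X_l-\widehat X_l\Vert_{_{2p}}<\infty$ and $\limsup N_l^Y\Vert Y_l-\widehat Y_l\Vert_{_{2p}}<\infty$, hence $\Vert X_l-\widehat X_l\Vert_{_{2p}},\Vert Y_l-\widehat Y_l\Vert_{_{2p}}\to0$ as $\widebar N\to\infty$; only the $C_{W_{l+1}^f}$- and $C_{W_{l+1}^d}$-terms remain, which is \eqref{3F:methodo_error_approx} (more precisely, the explicit upper bound converges to that quantity, which also bounds the residual $\Vert V_k-\widehat V_k\Vert_{_2}$ in the limit).

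I expect the main obstacle to be the bookkeeping linking pieces (a) and (b): choosing the $\sigma(X_k,Y_k)$-measurable surrogates for $W_{k+1}^f,W_{k+1}^d$ so that the telescoping chain genuinely closes up onto the reduced conditional expectation $\E[\,\cdot\mid(X_k,Y_k)]$ and not onto some object that still depends on the discarded coordinates, while simultaneously establishing the local Lipschitz-with-exponential-weight property of the reduced value function with the right coefficients $\widebar A_l,\widebar C_l,\widebar b_l$ and keeping each exponential weight under control at precisely the moments when the variable sitting in the exponent (namely $Y$, and in the intermediate steps $W^d$) is itself being modified.
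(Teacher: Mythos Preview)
Your proposal is correct and follows essentially the same approach as the paper's proof: the same initial $\max$-splitting, the same three-term telescoping of the conditional-expectation part (systematic error from substituting $W_{k+1}^f,W_{k+1}^d$ by their $\sigma(X_k,Y_k)$-conditional expectations via the local Lipschitz bound \eqref{3F:kindoflip_valuefunction_prdc}; quantization error for the reduced value function; induction remainder), and the same appeal to the $L^r$--$L^s$ mismatch theorem for the limit. The only point where the paper is more explicit than your sketch is in piece~(b): it writes the Gaussian regression $W_k^f=\lambda_k X_k\overset{\indep}{+}\xi_k$, $W_k^d=\widetilde\lambda_k Y_k\overset{\indep}{+}\chi_k$ to realise the reduced transition $\widetilde v_k(x,y)=\E[Pv_{k+1}(x,\lambda_k x+\xi_k,y,\widetilde\lambda_k y+\chi_k)]$ concretely, and then applies Lemma~\ref{3F:KindOfLip_Markov} to read off $\widebar A_k,\widebar C_k,\widebar b_k$; this is precisely the ``$(X,Y)$-dynamics alone'' step you anticipate, and is also what resolves the bookkeeping obstacle you flag at the end.
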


\begin{proof}
	We apply the same methodology as in the proof for the Markov case. The error between the Snell envelope and its approximation is given by
	\begin{equation}
		\vert V_k - \widehat V_k \vert \leq \max \Big( \big\vert h_k (X_k, Y_k) - h_k (\widehat X_k, \widehat Y_k) \big\vert, \big\vert \E \big[ V_{k+1} \mid (X_k, W_k^f, Y_k, W_k^d) \big] - \E \big[ \widehat V_{k+1} \mid (\widehat X_k, \widehat Y_k) \big] \big\vert \Big)
	\end{equation}
	thus, using Proposition \ref{3F:regularity_prdc} and Hölder's inequality with $p,q\geq 1$ such that $\frac{1}{p} + \frac{1}{q} = 1$, the $L^2$-error is given by
	\begin{equation} \label{3F:error_markov}
		\begin{aligned}
			\big\Vert V_k - \widehat V_k \big\Vert_{_2}^2
			 & \leq \big\Vert h_k (X_k, Y_k) - h_k (\widehat X_k, \widehat Y_k) \big\Vert_{_2}^2                                                                                                                                                                                                                                                 \\
			 & \qquad \qquad \qquad + \big\Vert \E \big[ V_{k+1} \mid (X_k, W_k^f, Y_k, W_k^d) \big] - \E \big[ \widehat V_{k+1} \mid (\widehat X_k, \widehat Y_k) \big] \big\Vert_{_2}^2                                                                                                                                                        \\
			 & \leq \Big( [\widebar \psi_k]_{_{Lip}}^2 \big\Vert X_k - \widehat X_k \big\Vert_{_{2p}}^2 + \big( \varphi_d(t_k) \Vert \psi_{t_k} \Vert_{_{\infty}} + [\widebar \psi_k]_{_{Lip}} \big)^2 \big\Vert Y_k - \widehat Y_k \big\Vert_{_{2p}}^2 \Big) \big\Vert \e^{ \vert Y_k \vert \vee \vert \widehat Y_k \vert } \big\Vert_{_{2q}}^2 \\
			 & \qquad \qquad + \big\Vert \E \big[ V_{k+1} \mid (X_k, W_k^f, Y_k, W_k^d) \big] - \E \big[ \widehat V_{k+1} \mid (\widehat X_k, \widehat Y_k) \big] \big\Vert_{_2}^2.
		\end{aligned}
	\end{equation}
	The last term in Equation \eqref{3F:error_markov} can be decomposed as follows
	\begin{equation}
		\begin{aligned}
			\E \big[ V_{k+1} \mid (X_k, W_k^f, Y_k, W_k^d) \big] & - \E \big[ \widehat V_{k+1} \mid (\widehat X_k, \widehat Y_k) \big]                                                           \\
			=                                                    & \E \big[ V_{k+1} \mid (X_k, W_k^f, Y_k, W_k^d) \big] - \E \big[ V_{k+1} \mid (X_k,Y_k) \big]                                  \\
			+                                                    & \E \big[ V_{k+1} \mid (X_k,Y_k) \big] - \E \big[ V_{k+1} \mid (\widehat X_k, \widehat Y_k) \big]                              \\
			+                                                    & \E \big[ V_{k+1} \mid (\widehat X_k, \widehat Y_k) \big] - \E \big[ \widehat V_{k+1} \mid (\widehat X_k, \widehat Y_k) \big].
		\end{aligned}
	\end{equation}
	And again, each term can be upper-bounded.
	\begin{itemize}[wide=0pt]
		\item The first can be upper-bounded using what we did above on the value function $v_k$ and Hölder's inequality with $p,q \geq 1$ such that $\frac{1}{p} + \frac{1}{q} = 1$
		      \begin{equation}
			      \begin{aligned}
				      \big\Vert \E & \big[ V_{k+1} \mid (X_k, W_k^f, Y_k, W_k^d) \big] - \E \big[ V_{k+1} \mid (X_k,Y_k) \big] \big\Vert_{_2}^2                                                                                                                                               \\
				                   & \leq \big\Vert V_{k+1} - \E \big[ V_{k+1} \mid (X_k,Y_k) \big] \big\Vert_{_2}^2                                                                                                                                                                          \\
				                   & \leq \big\Vert v_{k+1} (X_{k+1}, W_{k+1}^f, Y_{k+1}, W_{k+1}^d)                                                                                                                                                                                          \\
				                   & \qquad\qquad\qquad\qquad\qquad\quad- v_{k+1} \big( X_{k+1}, \E \big[ W_{k+1}^f \mid (X_k,Y_k) \big], Y_{k+1}, \E \big[ W_{k+1}^d \mid (X_k,Y_k) \big] \big) \big\Vert_{_2}^2                                                                             \\
				                   & \leq \Big\Vert \Big( B_{k+1} \big\vert W_{k+1}^f - \E \big[ W_{k+1}^f \mid (X_k,Y_k) \big] \big\vert + D_{k+1} \big\vert W_{k+1}^d - \E \big[ W_{k+1}^d \mid (X_k,Y_k) \big] \big\vert \Big) \widetilde \kappa_{k+1} \Big\Vert_{_2}^2                    \\
				                   & \leq \big\Vert \widetilde \kappa_{k+1} \big\Vert_{_{2q}}^2 \Big( B_{k+1}^2 \big\Vert W_{k+1}^f - \E [ W_{k+1}^f \mid (X_k,Y_k) ] \big\Vert_{_{2p}}^2 + D_{k+1}^2 \big\Vert W_{k+1}^d - \E \big[ W_{k+1}^d \mid (X_k,Y_k) \big] \big\Vert_{_{2p}}^2 \Big)
			      \end{aligned}
		      \end{equation}
		      with coefficients $b_{k+1}$, $B_{k+1}$ and $D_{k+1}$ defined in \eqref{3F:usefullCoeffs1_prdc} and \eqref{3F:usefullCoeffs2_prdc} and
		      \begin{equation}
			      \widetilde \kappa_{k+1} = \e^{ \vert Y_{k+1} \vert + b_{k+1} \vert W_{k+1}^d \vert \vee \vert \E [ W_{k+1}^d \mid (X_k,Y_k) ] \vert}.
		      \end{equation}

		\item For the second, we define
		      \begin{equation}
			      \widetilde v_k (X_k,Y_k) = \E \big[ v_{k+1} (X_{k+1}, W_{k+1}^f, Y_{k+1}, W_{k+1}^d) \mid (X_k,Y_k) \big].
		      \end{equation}
		      Indeed, $\E \big[ V_{k+1} \mid (X_k,Y_k) \big]$ is only a function of $X_k$ and $Y_k$, as shown below
		      \begin{equation}
			      \begin{aligned}
				      \E \big[ V_{k+1} \mid (X_k,Y_k) \big]
				       & = \E \big[ v_{k+1} (X_{k+1}, W_{k+1}^f, Y_{k+1}, W_{k+1}^d) \mid (X_k,Y_k) \big]                                              \\
				       & = \E \Big[ \E \big[ v_{k+1} (X_{k+1}, W_{k+1}^f, Y_{k+1}, W_{k+1}^d) \mid (X_k, W_k^f, Y_k, W_k^d) \big] \mid (X_k,Y_k) \Big] \\
				       & = \E \big[ Pv_{k+1} (X_k, W_k^f, Y_k, W_k^d) \mid (X_k,Y_k) \big].
			      \end{aligned}
		      \end{equation}
		      Moreover, we can rewrite $W_k^f = \lambda_k X_k \overset{\indep}{+} \xi_k$ and $W_k^d = \widetilde \lambda_k Y_k \overset{\indep}{+} \chi_k$ where
		      \begin{equation*}
			      \lambda_k = \frac{\Cov (X_k, W_k^f)}{\V (X_k)}, \qquad \widetilde \lambda_k = \frac{\Cov (Y_k, W_k^d)}{\V (Y_k)}
		      \end{equation*}
		      and $\xi_k \sim \N (0, \sigma_{\xi_k}^2)$ and $\chi_k \sim \N (0, \sigma_{\chi_k}^2)$ with $\sigma_{\xi_k}^2 = \V (W_k^f - \lambda_k X_k)$ and $\sigma_{\chi_k}^2 = \V (W_k^d - \widetilde \lambda_k Y_k)$, then
		      \begin{equation}
			      \begin{aligned}
				       & \E \big[ Pv_{k+1} (X_k, W_k^f, Y_k, W_k^d) \mid (X_k,Y_k) = (x,y) \big]                                                                                       \\
				       & \qquad\qquad\qquad\qquad\qquad\qquad = \E \big[ Pv_{k+1}(x, \lambda_k x + \xi_k, y, \widetilde \lambda_k y + \chi_k) \big] \big\vert_{_{ (x,y) = (X_k, Y_k)}}
			      \end{aligned}
		      \end{equation}
		      yielding
		      \begin{equation}
			      \widetilde v_k (x,y) = \E \big[ Pv_{k+1}(x, \lambda_k x + \xi_k, y, \widetilde \lambda_k y + \chi_k) \big].
		      \end{equation}
		      Now, using Lemma \ref{3F:KindOfLip_Markov} on $\widetilde v_k$, we have
		      \begin{equation}
			      \begin{aligned}
				       & \big\vert \widetilde v_k (x,y) - \widetilde v_k (x', y') \big\vert                                                                                                                                                                                                                                                                                   \\
				       & = \Big\vert \E \big[ Pv_{k+1}(x, \lambda_k x + \xi_k, y, \widetilde \lambda_k y + \chi_k) - Pv_{k+1}(x', \lambda_k x' + \xi_k, y', \widetilde \lambda_k y' + \chi_k) \big] \Big\vert                                                                                                                                                                 \\
				       & \leq \E \bigg[ \Big\vert \big( ( \widetilde A_k + \widetilde B_k \vert \lambda_k \vert ) \vert x - x' \vert + ( 1 + \widetilde C_k \vert \widetilde \lambda_k \vert ) \vert y - y' \vert \big) \e^{ ( 1 + \widetilde b_k \vert \widetilde \lambda_k \vert ) \vert y \vert \vee \vert y' \vert + \widetilde b_k \vert \chi_k \vert } \Big\vert \bigg] \\
				       & \leq \Big( \widebar A_k \vert x - x' \vert + \widebar C_k \vert y - y' \vert \Big) \e^{ \widebar b_k \vert y \vert \vee \vert y' \vert}                                                                                                                                                                                                              \\
			      \end{aligned}
		      \end{equation}
		      where
		      \begin{equation}
			      \widebar A_k = ( \widetilde A_k + \widetilde B_k \vert \lambda_k \vert ) \E \big[ \e^{ \widetilde b_k \vert \chi_k \vert} \big], \qquad \widebar C_k = 1 + \widetilde C_k \vert \widetilde \lambda_k \vert,
		      \end{equation}
		      \begin{equation}
			      \widebar b_k = 1 + \widetilde b_k \vert \widetilde \lambda_k \vert
		      \end{equation}
		      with $\widetilde A_k$, $\widetilde B_k$, $\widetilde C_k$ and $\widetilde b_k$ defined in \eqref{3F:coeffsLemme_1_prdcpayoff} and \eqref{3F:coeffsLemme_2_prdcpayoff}. Hence, using Hölder's inequality with $p,q \geq 1$ such that $\frac{1}{p} + \frac{1}{q} = 1$
		      \begin{equation}
			      \begin{aligned}
				      \big\Vert \E \big[ V_{k+1} & \mid (X_k,Y_k) \big] - \E \big[ V_{k+1} \mid (\widehat X_k, \widehat Y_k) \big] \big\Vert_{_2}^2                                                                                                                                                 \\
				                                 & = \big\Vert \widetilde v_k (X_k, Y_k) - \widetilde v_k (\widehat X_k, \widehat Y_k) \big\Vert_{_2}^2                                                                                                                                             \\
				                                 & \leq \Big\Vert \Big( \widebar A_k \big\vert X_k - \widehat X_k \big\vert + \widebar C_k \big\vert Y_k - \widehat Y_k \big\vert \Big) \e^{ \widebar b_k \vert Y_k \vert \vee \vert \widehat Y_k \vert} \Big\Vert_{_{2}}^2                         \\
				                                 & \leq \big\Vert \e^{\widebar b_k \vert Y_k \vert \vee \vert \widehat Y_k \vert} \big\Vert_{_{2q}}^2 \Big(\widebar A_k^2 \big\Vert X_k - \widehat X_k \big\Vert_{_{2p}}^2 + \widebar C_k^2 \big\Vert Y_k - \widehat Y_k \big\Vert_{_{2p}}^2 \Big).
			      \end{aligned}
		      \end{equation}

		\item The last one is useful for the induction, indeed
		      \begin{equation}
			      \begin{aligned}
				      \big\Vert \E \big[ V_{k+1} \mid (\widehat X_k, \widehat Y_k) \big] - \E \big[ \widehat V_{k+1} \mid (\widehat X_k, \widehat Y_k) \big] \big\Vert_{_2}^2
				       & \leq \big\Vert V_{k+1} - \widehat V_{k+1} \big\Vert_{_2}^2.
			      \end{aligned}
		      \end{equation}
	\end{itemize}

	Finally, using the $L^r$-$L^s$ mismatch theorem on the quadratic optimal quantizers $\widehat X_k$ and $\widehat Y_k$, if $1<p<3/2$, then
	\begin{equation}
		\limsup_{N_k^X} N_k^X \Vert X_k - \widehat X_k \Vert_{_{2p}} < + \infty \textrm{ ~~~~and~~~~ } \limsup_{N_k^Y} N_k^Y \Vert Y_k - \widehat Y_k \Vert_{_{2p}} < + \infty
	\end{equation}
	and
	\begin{equation}
		\begin{aligned}
			\big\Vert V_k - & \widehat V_k \big\Vert_{_2}^2                                                                                                                                                                                                                                                                                                                                        \\
			                & \leq \big\Vert X_k - \widehat X_k \big\Vert_{_{2p}}^2 \Big( [\widebar \psi_k]_{_{Lip}}^2 \big\Vert \e^{ \vert Y_k \vert \vee \vert \widehat Y_k \vert } \big\Vert_{_{2q}}^2 + \widebar A_k^2 \big\Vert \e^{\widebar b_k \vert Y_k \vert \vee \vert \widehat Y_k \vert} \big\Vert_{_{2q}}^2 \Big)                                                                     \\
			                & \qquad + \big\Vert Y_k - \widehat Y_k \big\Vert_{_{2p}}^2 \Big( \big( \varphi_d(t_k) \Vert \psi_{t_k} \Vert_{_{\infty}} + [\widebar \psi_k]_{_{Lip}} \big)^2 \big\Vert \e^{ \vert Y_k \vert \vee \vert \widehat Y_k \vert } \big\Vert_{_{2q}}^2 + \widebar C_k^2 \big\Vert \e^{\widebar b_k \vert Y_k \vert \vee \vert \widehat Y_k \vert} \big\Vert_{_{2q}}^2 \Big) \\
			                & \qquad \qquad + B_{k+1}^2 \big\Vert \widetilde \kappa_{k+1} \big\Vert_{_{2q}}^2 \big\Vert W_{k+1}^f - \E [ W_{k+1}^f \mid (X_k,Y_k) ] \big\Vert_{_{2p}}^2                                                                                                                                                                                                            \\
			                & \qquad \qquad \qquad + D_{k+1}^2 \big\Vert \widetilde \kappa_{k+1} \big\Vert_{_{2q}}^2 \big\Vert W_{k+1}^d - \E \big[ W_{k+1}^d \mid (X_k,Y_k) \big] \big\Vert_{_{2p}}^2 + \big\Vert V_{k+1} - \widehat V_{k+1} \big\Vert_{_2}^2                                                                                                                                     \\
			                & \leq \sum_{l=k}^{n-1} C_{W_{l+1}^f} \big\Vert W_{l+1}^f - \E [ W_{l+1}^f \mid (X_l, Y_l) ] \big\Vert_{_{2p}}^2 + C_{W_{l+1}^d} \big\Vert W_{l+1}^d - \E [ W_{l+1}^d \mid (X_l, Y_l) ] \big\Vert_{_{2p}}^2                                                                                                                                                            \\
			                & \qquad \qquad \qquad \qquad \qquad \qquad + C_{X_{l}} \big\Vert X_{l} - \widehat X_{l} \big\Vert_{_{2p}}^2 + C_{Y_{l}} \big\Vert Y_{l} - \widehat Y_{l} \big\Vert_{_{2p}}^2                                                                                                                                                                                          \\
			                & \xrightarrow{\widebar N \rightarrow +\infty } \sum_{l=k}^{n-1} C_{W_{l+1}^f} \big\Vert W_{l+1}^f - \E [ W_{l+1}^f \mid (X_l, Y_l) ] \big\Vert_{_{2p}}^2 + C_{W_{l+1}^d} \big\Vert W_{l+1}^d - \E [ W_{l+1}^d \mid (X_l, Y_l) ] \big\Vert_{_{2p}}^2.
		\end{aligned}
	\end{equation}
\end{proof}

\paragraph{\textit{Practitioner's corner.}}
Market implied values of $\sigma_f$, $\sigma_d$ and $\sigma_S$ used for the numerical computations are usually of order
\begin{equation}
	\sigma_f \approx 0.005, \qquad \sigma_d \approx 0.005, \qquad \sigma_S \approx 0.5
\end{equation}
and in the most extreme cases, we compute Bermudan options on foreign exchange with maturity 20 years ($T = 20$). Thus, we can estimate the order of the induced systematic error. First, we recall the expression of the related coefficients which depends of
\begin{equation}
	\begin{aligned}
		B_k & = \sigma_f \frac{T}{n} \sum_{l=k+1}^{n} \max_{ l \leq i \leq n} \bigg( [\widebar \psi_{i}]_{_{Lip}} \prod_{j=k+1}^{i} \E [ \kappa_j ] \bigg) ,                                                 \\
		D_k & = \sigma_d \frac{T}{n} \sum_{l=k+1}^{n} \max_{ l \leq i \leq n} \bigg( (\varphi_d(t_i) \Vert \psi_i \Vert_{_{\infty}} + [\widebar \psi_{i}]_{_{Lip}}) \prod_{j=k+1}^{i} \E [ \kappa_j ] \bigg)
	\end{aligned}
\end{equation}
with
\begin{equation}
	\kappa_j = \e^{\vert G^3_j \vert + b_j \vert G^4_j \vert}, \qquad \widetilde \kappa_{l+1} = \e^{ \vert Y_{l+1} \vert + b_{l+1} \vert W_{l+1}^d \vert \vee \vert \E [ W_{l+1}^d \mid (X_l,Y_l) ] \vert}
\end{equation}
and
\begin{equation}
	b_k = \sigma_d T \bigg(1-\frac{k-1}{n}\bigg).
\end{equation}

Now, considering the case where the payoffs are the same at each exercise date, the Lipschitz constants can be upper-bounded by $[\widebar \psi]_{_{Lip}}$:
\begin{equation}
	[\widebar \psi_k]_{_{Lip}} = [\psi_{t_k}]_{_{Lip}} S_{0} \varphi_f(t_k) \e^{- \sigma^2_S t_{k} / 2 } \Vert \psi_{t_k}' \Vert_{_{\infty}} \e^c \leq S_{0} [\psi_{t_k}]_{_{Lip}} \Vert \psi_{t_k}' \Vert_{_{\infty}} \e^c =: [\widebar \psi]_{_{Lip}}
\end{equation}
and let $\kappa$ defined by
\begin{equation}
	\begin{aligned}
		\kappa = \max_{k} \E [ \kappa_k ]
		 & = \E \big[ \e^{\vert G^3_0 \vert + b_0 \vert G^4_0 \vert} \big] \leq \frac{1}{2} \E \big[ \e^{2\vert G^3_0 \vert} + \e^{2 b_0 \vert G^4_0 \vert } \big] \\
	\end{aligned}
\end{equation}
moreover, if $Z \sim \N(0,\sigma^2)$ then $\E \big[ \e^{\lambda \vert Z \vert} \big] = \e^{\lambda^2 \sigma^2 /2}$, thence we can upper-bound $\kappa$
\begin{equation}
	\begin{aligned}
		\kappa & \leq \frac{1}{2} \E \big[ \e^{ \sigma^2_3 / 2 } + \e^{ b_0^2 / 2 } \big] = \frac{1}{2} \E \big[ \e^{ \sigma_d^2 / 96 } + \e^{ \sigma_d^2 T^2 / 2 } \big] \approx 1.
	\end{aligned}
\end{equation}
$\kappa$ being bounded, we notice that the main constants $B_k^2$ and $D_k^2$ in the remaining error are of order $\sigma_d^2$ or $\sigma_f^2$, indeed
\begin{equation}
	\begin{aligned}
		B_k & \leq \sigma_f \frac{T}{n} [\widebar \psi]_{_{Lip}} (n-k)\kappa^{n-k} \approx \sigma_f \frac{T}{n} [\widebar \psi]_{_{Lip}} (n-k),                                                                                                      \\
		D_k & \leq \sigma_d \frac{T}{n} \big(\max_{l} \varphi_d(t_l) \Vert \psi \Vert_{_{\infty}} + [\widebar \psi]_{_{Lip}} \big) (n-k) \kappa^{n-k} \approx \sigma_d \frac{T}{n} ( \Vert \psi \Vert_{_{\infty}} + [\widebar \psi]_{_{Lip}}) (n-k).
	\end{aligned}
\end{equation}
Furthermore
\begin{equation}
	\begin{aligned}
		\E \big[ \widetilde \kappa_{k+1}^{2q} \big]
		 & = \E \Big[ \e^{ 2q \vert Y_{k+1} \vert + 2q b_{k+1} \vert W_{k+1}^d \vert \vee \vert \E [ W_{k+1}^d \mid (X_k,Y_k) ] \vert} \Big]                                                             \\
		 & \leq \frac{1}{2} \bigg( \E \Big[ \e^{ 4q \vert Y_{k+1} \vert } \Big] + \E \Big[ \e^{ 4q b_{k+1} \vert W_{k+1}^d \vert \vee \vert \E [ W_{k+1}^d \mid (X_k,Y_k) ] \vert} \Big] \bigg)          \\
		 & \leq \frac{1}{2} \bigg( \E \Big[ \e^{ 4q \vert Y_{k+1} \vert } \Big] + \E \Big[ \e^{ 4q \sigma_d (T-t_k) \vert W_{k+1}^d \vert \vee \vert \E [ W_{k+1}^d \mid (X_k,Y_k) ] \vert} \Big] \bigg) \\
		 & \leq \frac{1}{2} \bigg( \e^{ 8 q^2 \sigma_d^2 T^{3}/3 } + 2 \e^{ 8 q^2 \sigma_d^2 (T-t_k)^2 t_{k+1} } \bigg)                                                                                  \\
	\end{aligned}
\end{equation}
and from elementary inequality $(a+b)^{1/q} \leq a^{1/q} + b^{1/q}, \, a,b \geq 0, \, q \geq 1$
\begin{equation}
	\begin{aligned}
		\big\Vert \widetilde \kappa_{k+1} \big\Vert_{_{2q}}^2 = \E \big[ \widetilde \kappa_{k+1}^{2q} \big]^{\frac{1}{q}}
		 & \leq \bigg( \frac{1}{2} \e^{ 8 q^2 \sigma_d^2 T^{3}/3 } + \e^{ 8 q^2 \sigma_d^2 (T-t_k)^2 t_{k+1} } \bigg)^{\frac{1}{q}}                             \\
		 & \leq \bigg( \frac{1}{2} \e^{ 8 q^2 \sigma_d^2 T^{3}/3 } \bigg)^{\frac{1}{q}} + \bigg( \e^{ 8 q^2 \sigma_d^2 (T-t_k)^2 t_{k+1} } \bigg)^{\frac{1}{q}} \\
		 & \leq \frac{1}{2^{1/q}} \e^{ 8 q \sigma_d^2 T^{3}/3 } + \e^{ 8 q \sigma_d^2 (T-t_k)^2 t_{k+1} }.
	\end{aligned}
\end{equation}
The two terms on the right-hand side of the inequality do not explode. Indeed, the function $g:t \mapsto (T-t)^2 t$, defined for $t\in [0,T]$ with $T = 20$, attains its maximum on $t=20/3$ and $g(20/3) \approx 1185$, hence for the considered values
\begin{equation}
	\forall k = 1, \dots, n, \qquad \big\Vert \widetilde \kappa_{k+1} \big\Vert_{_{2q}}^2 \leq C_{\widetilde \kappa} \approx 6.
\end{equation}
Finally, rewriting the obtained systematic error induced by the approximation with this new informations in \eqref{3F:methodo_error_approx} we have
\begin{equation}
	\begin{aligned}
		\big\Vert V_k - \widehat V_k \big\Vert_{_2}^2
		\xrightarrow{N \rightarrow +\infty }
		     & \sum_{l=k}^{n-1} B_{l+1}^2 \big\Vert \widetilde \kappa_{l+1} \big\Vert_{_{2q}}^2 \big\Vert W_{l+1}^d - \E [ W_{l+1}^d \mid (X_l, Y_l) ] \big\Vert_{_{2p}}^2                                                                                                                       \\
		     & \qquad \qquad \qquad + D_{l+1}^2 \big\Vert \widetilde \kappa_{l+1} \big\Vert_{_{2q}}^2 \big\Vert W_{l+1}^f - \E [ W_{l+1}^f \mid (X_l, Y_l) ] \big\Vert_{_{2p}}^2                                                                                                                 \\
		\leq & \sigma_f^2 \Big( \frac{T}{n} \Big)^2 [\widebar \psi]_{_{Lip}}^2 \sum_{l=k}^{n-1} (n-l)^2 \kappa^{2(n-l)} C_{\widetilde \kappa} \big\Vert W_{l+1}^d - \E [ W_{l+1}^d \mid (X_l, Y_l) ] \big\Vert_{_{2p}}^2                                                                         \\
		     & + \sigma_d^2 \Big( \frac{T}{n} \Big)^2 \big(\max_{l} \varphi_d(t_l) \Vert \psi \Vert_{_{\infty}} + [\widebar \psi]_{_{Lip}} \big)^2                                                                                                                                               \\
		     & \qquad \qquad \qquad \times \sum_{l=k}^{n-1} (n-l)^2 \kappa^{2(n-l)} C_{\widetilde \kappa} \big\Vert W_{l+1}^f - \E [ W_{l+1}^f \mid (X_l, Y_l) ] \big\Vert_{_{2p}}^2                                                                                                             \\
		\leq & 2 \sigma_f^2 \Big( \frac{T}{n} \Big)^2 [\widebar \psi]_{_{Lip}}^2 \sum_{l=k}^{n-1} (n-l)^2 \kappa^{2(n-l)} C_{\widetilde \kappa} \big\Vert W_{l+1}^d \big\Vert_{_{2p}}^2                                                                                                          \\
		     & + 2 \sigma_d^2 \Big( \frac{T}{n} \Big)^2 \big(\max_{l} \varphi_d(t_l) \Vert \psi \Vert_{_{\infty}} + [\widebar \psi]_{_{Lip}} \big)^2 \sum_{l=k}^{n-1} (n-l)^2 \kappa^{2(n-l)} C_{\widetilde \kappa}  \big\Vert W_{l+1}^f \big\Vert_{_{2p}}^2                                     \\
		\leq & \Big( \sigma_f^2 [\widebar \psi]_{_{Lip}}^2 + \sigma_d^2 \big(\max_{l} \varphi_d(t_l) \Vert \psi \Vert_{_{\infty}} + [\widebar \psi]_{_{Lip}} \big)^2 \Big) 4 \frac{C_{\widetilde \kappa}}{\pi^{1/3}} \Big( \frac{T}{n} \Big)^2 \sum_{l=k}^{n-1} t_{l+1} (n-l)^2 \kappa^{2(n-l)}.
	\end{aligned}
\end{equation}
Hence, the systematic error is upper-bounded by the squared volatilities $\sigma_d^2$ and $\sigma_f^2$. These parameters being of order $5 \times 10^{-3}$ at most, the systematic error is negligible as long as these volatilities stay reasonably small.

\begin{remark}
	As in the Markov case, we can extend this result to the case where the payoffs $(\psi_k)_k$ are Lipschitz continuous, however the residual error can not be as easily estimated and controlled.
\end{remark}

\section{Numerical experiments} \label{3F:section:numerics}

In this section, we illustrate the theoretical results found in Section \ref{3F:section:bermudan_evaluation_quantization} regarding the pricing of Bermudan options in the 3-factor model described in Section \ref{3F:section:diffusion_models}. First, we detail both algorithms and how to compute the quantities that appear in them (conditional expectation, conditional probabilities, ...). Then, we test our two numerical solutions for the pricing of European options, whose price is known in closed form. European options are Bermudan options with only one date of exercise, hence when using the non-Markovian approximate we do not introduce the systematic error shown in Theorem \ref{3F:thm:non_markov} but pricing these kind of options is a good benchmark in order to test our methodologies. Finally, we evaluate Bermudan options and compare our two solutions, the Markovian and the non-Markovian approximation.

We have to keep in mind that the computation time is crucial because these pricers are only a small block in the complex computation of xVA's. Indeed, they will be called hundreds of thousands of time each time these risks measures are needed.

All the numerical tests have been carried out in C++ on a laptop with a 2,4 GHz 8-Core Intel Core i9 CPU. The computations of the transition probabilities and the computations of the conditional expectations are parallelized on the CPU.

\begin{remark}
	The computation times given below measure the time needed for loading the pre-computed optimal grids from files, rescaling the optimal quantizers in order to get the right variance, computing the conditional probabilities (the part that demands the most in term of computing power) and finally computing the expectations for the pricing.
	One has to keep in mind that the complexity is linear in function of $n$, the number of exercise dates. Indeed, if we double the number of exercise dates, we double the number of conditional probability matrices and expectations to compute.
\end{remark}

\paragraph{Characterisation of the Quantization Tree.}

In what follows, we describe the choice of parameters we made when building the quantization tree: the time discretisation and the size of each grid at each time.

\begin{itemize}
	\item The time discretisation is an easy choice because it is decided by the characteristics of the financial product. Indeed, we take only one date (and today's date) in the tree if we want to evaluate European options and if we want to evaluate Bermudan options we take as many discretisation dates (plus today's date) in the tree as there are exercise dates in the description of the product.
	\item Then, we have to decide the size of each grid at each date in the tree. In our case, we consider grids of same size at each date hence $N_k = N, \, k = 1 \dots, n$ and then we take $N^X = 10 N^Y$ for both trees. This choice seems to be reasonable because the risk factor $X_k$ is prominent, due to the value of $\sigma_S$ compare to $\sigma_d$. Now, in the Markovian case, we take $N^X = 4 N^{W_f}$ and $N^Y = 4 N^{W_d}$, indeed the two Brownian Motions are important only when we compute the conditional expectation but not when we want to evaluate the payoffs, hence we want to give as much as possible of the budget $N$ to $N^X$ and $N^Y$.
\end{itemize}

\paragraph{The algorithm: Markovian Case.}

Let $(x_{i_1}^k)_{i_1=1:N^X}$, $(u_{i_2}^k)_{i_2=1:N^{W^f}}$, $(y_{i_3}^k)_{i_3=1:N^Y}$ and $(v_{i_4}^k)_{i_4=1:N^{W^d}}$ be the associated centroids of $\widehat X_k$, $\widehat W_k^f$, $\widehat Y_k$ and $\widehat W_k^d$ respectively, at a given time $t_k$ with $0 \leq k \leq n$. Using the discrete property of the optimal quantizers, the conditional expectation appearing in \eqref{3F:BDPP_quantif_markovian} can be rewritten as
\begin{equation}
	\begin{aligned}
		 & \E \big[ \widehat V_{k+1} \mid (\widehat X_k, \widehat W_k^f, \widehat Y_k, \widehat W_k^d) = \big( x_{i_1}^k, u_{i_2}^k, y_{i_3}^k, v_{i_4}^k \big) \big]                                                                                               \\
		 & \qquad \qquad = \E \big[ \widehat{v}_{k+1} (\widehat X_{k+1}, \widehat W_{k+1}^f, \widehat Y_{k+1}, \widehat W_{k+1}^d) \mid (\widehat X_k, \widehat W_k^f, \widehat Y_k, \widehat W_k^d) = \big( x_{i_1}^k, u_{i_2}^k, y_{i_3}^k, v_{i_4}^k \big) \big] \\
		 & \qquad \qquad = \sum_{j_1,j_2,j_3,j_4} \pi_{i,j}^{\textsc{(m)},k} \, \widehat{v}_{k+1} \big( x_{j_1}^{k+1}, u_{j_2}^{k+1}, y_{j_3}^{k+1}, v_{j_4}^{k+1} \big)
	\end{aligned}
\end{equation}
where $\pi_{i,j}^{\textsc{(m)},k}$, with $i= (i_1, i_2, i_3, i_4)$ and $j=(j_1, j_2, j_3, j_4)$, is the conditional probability defined by
\begin{equation*}
	\begin{aligned}
		\pi_{i,j}^{\textsc{(m)},k}
		 & = \Prob \Big( \big(\widehat X_{k+1}, \widehat W^f_{k+1}, \widehat Y_{k+1}, \widehat W^d_{k+1}\big) = \big( x_{j_1}^{k+1}, u_{j_2}^{k+1}, y_{j_3}^{k+1}, v_{j_4}^{k+1} \big)                \\
		 & \qquad \qquad \qquad \qquad \qquad \qquad \qquad \mid \big(\widehat X_k, \widehat W^f_k, \widehat Y_k, \widehat W^d_k\big) = \big( x_{i_1}^k, u_{i_2}^k, y_{i_3}^k, v_{i_4}^k \big) \Big).
	\end{aligned}
\end{equation*}
Due to the dimension of the problem (4 in this case), we cannot compute these probabilities using deterministic methods, hence one has to simulate trajectories of the processes in order to evaluate them. We refer the reader to \cite{printems2005quantization,bally2003quantization,pages2004optimal} for details on the methodology.

A way to reduce the complexity of the problem is to approximate these probabilities by $\widetilde \pi_{i,j}^{\textsc{(m)},k}$, where the conditional part $\big\{ \big(\widehat X_k, \widehat W^f_k, \widehat Y_k, \widehat W^d_k\big) = \big(x_{i_1}^k, u_{i_2}^k, y_{i_3}^k, v_{i_4}^k \big) \big\}$ is replaced by $\big\{ (X_k, W^f_k, Y_k, W^d_k) = \big(x_{i_1}^k, u_{i_2}^k, y_{i_3}^k, v_{i_4}^k \big) \big\}$, yielding
\begin{equation} \label{3F:proba_markovian}
	\begin{aligned}
		\widetilde \pi_{i,j}^{\textsc{(m)},k}
		 & = \Prob \Big( \big(\widehat X_{k+1}, \widehat W^f_{k+1}, \widehat Y_{k+1}, \widehat W^d_{k+1}\big) = \big( x_{j_1}^{k+1}, u_{j_2}^{k+1}, y_{j_3}^{k+1}, v_{j_4}^{k+1} \big) \\
		 & \qquad \qquad \qquad \qquad \qquad \qquad \qquad \mid (X_k, W^f_k, Y_k, W^d_k) = \big( x_{i_1}^k, y_{i_2}^k, u_{i_3}^k, v_{i_4}^k \big) \Big).
	\end{aligned}
\end{equation}
The reason for replacing $\big\{ \big(\widehat X_k, \widehat W^f_k, \widehat Y_k, \widehat W^d_k\big) = \big(x_{i_1}^k, u_{i_2}^k, y_{i_3}^k, v_{i_4}^k \big) \big\}$ by $\big\{ (X_k, W^f_k, Y_k, W^d_k) = \big(x_{i_1}^k, u_{i_2}^k, y_{i_3}^k, v_{i_4}^k \big) \big\}$ is explained in the next paragraph dealing with the Non-Markovian case with lighter notations (see Equation \eqref{3F:proba_2D_not_approximated} and \eqref{3F:proba_2D_approximated}).
Although, these probabilities are easier to calculate, one still has to devise a Monte Carlo simulation in order to evaluate them. This simplification will be useful later in the uncorrelated case.

These remarks allow us to rewritte the QBDPP in the Markovian case \eqref{3F:BDPP_quantif_markovian} as
\begin{equation}
	\left\{
	\begin{aligned}
		 & \widehat{v}_n \big( x_{i_1}^n, u_{i_2}^n, y_{i_3}^n, v_{i_4}^n \big) = h_n \big( x_{i_1}^n, y_{i_3}^n \big),                                                                                                                                                                              \\
		 & \widehat{v}_k \big( x_{i_1}^k, u_{i_2}^k, y_{i_3}^k, v_{i_4}^k \big) = \max \bigg( h_k \big( x_{i_1}^k, y_{i_3}^k \big), \sum_{j_1,j_2,j_3,j_4} \widetilde \pi_{i,j}^{\textsc{(m)},k} \, \widehat{v}_{k+1} \big( x_{j_1}^{k+1}, u_{j_2}^{k+1}, y_{j_3}^{k+1}, v_{j_4}^{k+1} \big) \bigg).
	\end{aligned}
	\right.
\end{equation}

\paragraph{The algorithm: Non-Markovian case.}

Let $(x_{i_1}^k)_{i_1=1:N^X}$ and $(y_{i_3}^k)_{i_3=1:N^Y}$ be the associated centroids of $\widehat X_k$ and $\widehat Y_k$ respectively, at a given time $t_k$ with $0 \leq k \leq n$. Again, as in the Markovian case, using the discrete property of the optimal quantizers, the conditional expectation appearing in \eqref{3F:BDPP_quantif_nonmarkovian} can be rewritten as
\begin{equation}
	\begin{aligned}
		\E \big[ \widehat V_{k+1} \mid (\widehat X_k, \widehat Y_k) = \big(x_{i_1}^k, y_{i_2}^k \big) \big]
		 & = \E \big[ \widehat{v}_{k+1} (\widehat X_{k+1}, \widehat Y_{k+1}) \mid (\widehat X_k, \widehat Y_k) = \big(x_{i_1}^k, y_{i_2}^k \big) \big] \\
		 & = \sum_{j_1,j_2} \pi_{i,j}^{\textsc{(nm)},k} \, \widehat{v}_{k+1} \big( x_{j_1}^{k+1}, y_{j_2}^{k+1} \big)
	\end{aligned}
\end{equation}
where $\pi_{i,j}^{\textsc{(nm)},k}$, with $i= (i_1, i_2)$ and $j=(j_1, j_2)$, is the conditional probability defined by
\begin{equation*}
	\pi_{i,j}^{\textsc{(nm)},k} = \Prob \Big( \big(\widehat X_{k+1}, \widehat Y_{k+1}\big) = \big( x_{j_1}^{k+1}, y_{j_2}^{k+1} \big) \mid \big(\widehat X_k, \widehat Y_k\big) = \big(x_{i_1}^k, y_{i_2}^k \big) \Big).
\end{equation*}
This probability can be computed by numerical integration, ie
\begin{equation} \label{3F:proba_2D_not_approximated}
	\begin{aligned}
		\pi_{i,j}^{\textsc{(nm)},k}
		 & = \Prob \Big( \big(\widehat X_{k+1}, \widehat Y_{k+1}\big) = \big(x_{j_1}^{k+1}, y_{j_2}^{k+1} \big) \mid \big(\widehat X_k, \widehat Y_k\big) = \big(x_{i_1}^k, y_{i_2}^k \big) \Big)                                                  \\
		 & = \Prob \Big( \big(\widehat X_{k+1}, \widehat Y_{k+1}\big) = \big( x_{j_1}^{k+1}, y_{j_2}^{k+1} \big) \mid X_k \in \big( x_{i_1-1/2}^k, x_{i_1+1/2}^k \big), Y_k \in \big( y_{i_2-1/2}^k, y_{i_2+1/2}^k \big) \Big)                     \\
		 & = \int_{x_{i_1-1/2}^k}^{x_{i_1+1/2}^k} \int_{y_{i_2-1/2}^k}^{y_{i_2+1/2}^k} \Prob \Big( \big(\widehat X_{k+1}, \widehat Y_{k+1}\big) = \big( x_{j_1}^{k+1}, y_{j_2}^{k+1} \big) \mid (X_k, Y_k) = (x, y) \Big) f_{\Sigma}(x,y) dx \, dy \\
	\end{aligned}
\end{equation}
where $f_{\Sigma}(x,y)$ is the joint density of a centered bivariate Gaussian vector with covariance matrix $\Sigma$ given by
\begin{equation}
	\Sigma =
	\begin{pmatrix}
		\V (X_k)        & \Cov (X_k, Y_k) \\
		\Cov (X_k, Y_k) & \V (Y_k)        \\
	\end{pmatrix}.
\end{equation}
However, computing the probability in Equation \eqref{3F:proba_2D_not_approximated} can be too time consuming, hence once again, we approximate this probability by $\widetilde \pi_{i,j}^{\textsc{(nm)},k}$, where the conditional part $\big\{ \big(\widehat X_k, \widehat Y_k\big) = \big( x_{i_1}^k, y_{i_2}^k \big) \big\}$ is replaced by $\big\{ (X_k, Y_k) = \big( x_{i_1}^k, y_{i_2}^k \big) \big\}$, yielding
\begin{equation}\label{3F:proba_2D_approximated}
	\widetilde \pi_{i,j}^{\textsc{(nm)},k}
	= \Prob \Big( \big(\widehat X_{k+1}, \widehat Y_{k+1}\big) = \big( x_{j_1}^{k+1}, y_{j_2}^{k+1} \big) \mid (X_k, Y_k) = \big(x_{i_1}^k, y_{i_2}^k \big) \Big).
\end{equation}

From the definition of an optimal quantizer and Equation \eqref{3F:eq:markov_property_tuple4}, this probability can be rewritten as the probability that a correlated bivariate normal distribution lies in a rectangular domain
\begin{equation}\label{3F:proba_nonmar}
	\begin{aligned}
		\widetilde \pi_{i,j}^{\textsc{(nm)},k}
		 & = \Prob \Big( \widehat X_{k+1} = x_{j_1}^{k+1}, \widehat Y_{k+1} = y_{j_2}^{k+1} \mid X_k = x_{i_1}^k, Y_k = y_{i_2}^k \Big)                                                                                               \\
		 & = \Prob \Big( X_{k+1} \in \big( x_{j_1-1/2}^{k+1}, x_{j_1+1/2}^{k+1} \big), Y_{k+1} \in \big( y_{j_2-1/2}^{k+1}, y_{j_2+1/2}^{k+1} \big) \mid X_k = x_{i_1}^k, Y_k = y_{i_2}^k \Big)                                       \\
		 & = \Prob \Big( x_{i_1}^k + \sigma_f \delta W_k^f + G_{k+1}^1 \in \big( x_{j_1-1/2}^{k+1}, x_{j_1+1/2}^{k+1} \big), y_{i_2}^k - \sigma_d \delta W_k^d + G_{k+1}^3 \in \big( y_{j_2-1/2}^{k+1}, y_{j_2+1/2}^{k+1} \big) \Big) \\
		 & = \Prob \Big( Z^1 \in \big( x_{j_1-1/2}^{k+1} - x_{i_1}^k, x_{j_1+1/2}^{k+1}- x_{i_1}^k \big), Z^2 \in \big( y_{j_2-1/2}^{k+1} - y_{i_2}^k, y_{j_2+1/2}^{k+1} - y_{i_2}^k \big) \Big)                                      \\
	\end{aligned}
\end{equation}
where
\begin{equation}
	\begin{pmatrix}
		Z^1 \\ Z^2
	\end{pmatrix}
	\sim
	\N
	\left(
	\begin{pmatrix}
		0 \\ 0
	\end{pmatrix}
	,
	\begin{pmatrix}
		\sigma_{_{Z^1}}^2                                  & \rho_{_{Z^1, Z^2}} \sigma_{_{Z^1}} \sigma_{_{Z^2}} \\
		\rho_{_{Z^1, Z^2}} \sigma_{_{Z^1}} \sigma_{_{Z^2}} & \sigma_{_{Z^2}}^2                                  \\
	\end{pmatrix}
	\right)
\end{equation}
with $\sigma_{_{Z^1}}^2 = \V (\sigma_f \delta W_k^f + G_{k+1}^1)$, $\sigma_{_{Z^2}}^2 = \V (- \sigma_d \delta W_k^d + G_{k+1}^3)$ and $\rho_{_{Z^1, Z^2}} = \Corr (\sigma_f \delta W_k^f + G_{k+1}^1, - \sigma_d \delta W_k^d + G_{k+1}^3)$.

The advantage of expressing \eqref{3F:proba_nonmar} as the probability that a bivariate Gaussian vector lies in a rectangular domain is that it can be rewritten as a linear combination of bivariate cumulative distribution functions.

\begin{wrapfigure}{l}{0.30\textwidth}
	\centering
	\includegraphics[width=0.30\textwidth]{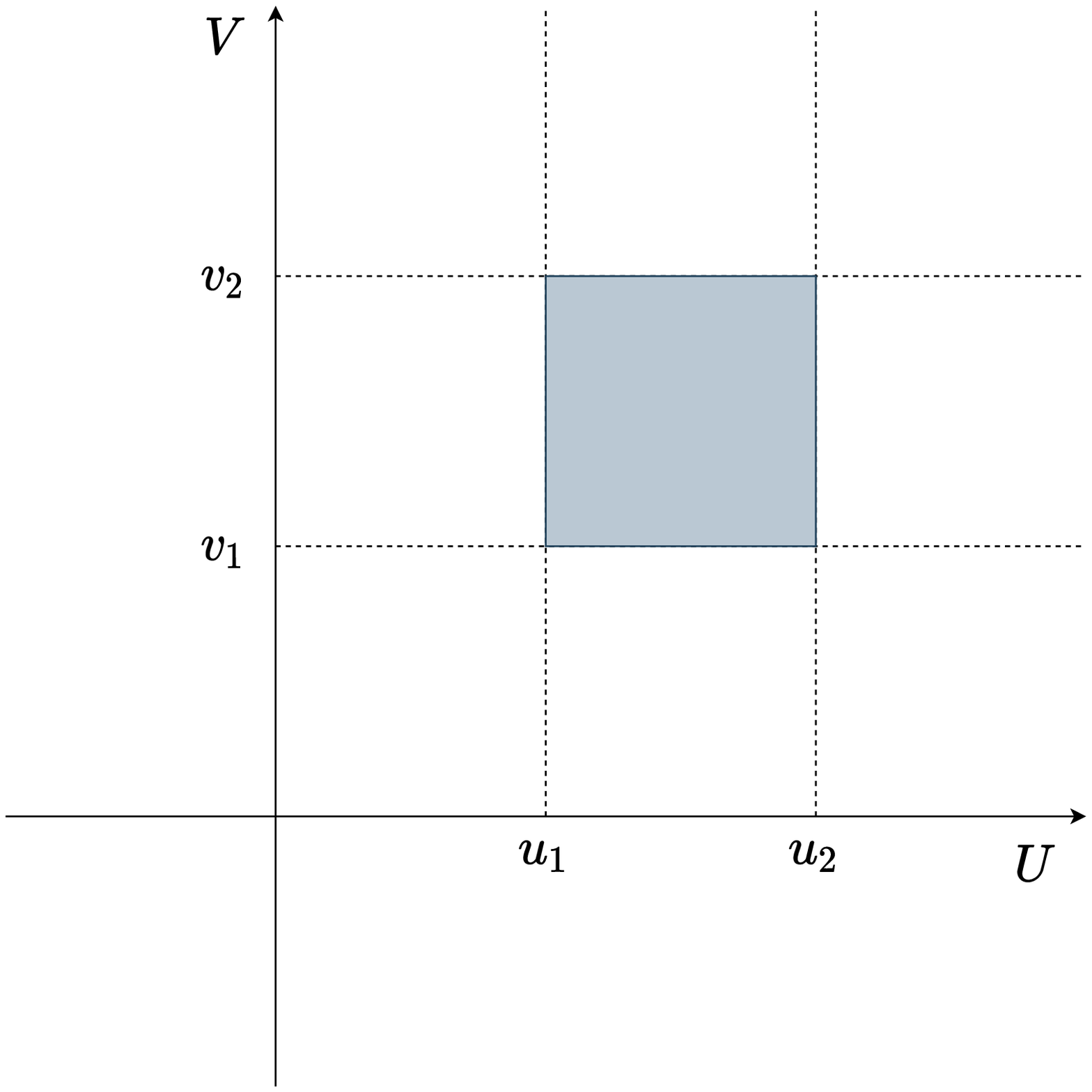}
	\caption[Domain of integration for probabilities of correlated two-dimensional Gaussian random vector.]{}
	\label{3F:fig:domain_integral}
\end{wrapfigure}
Indeed, let $(U,V)$ a two-dimensional correlated and standardized normal distribution with correlation $\rho$ and cumulative distribution function (CDF) given by $F_{U,V}^{\rho} (u,v) = \Prob ( U \leq u, V \leq v)$. Fast and efficient numerical implementation of such function exists (for example, a C++ implementation of the upper right tail of a correlated bivariate normal distribution can be found in John Burkardt's website, see \cite{cdf_bi_variate}, which is based on the work of \cite{donnelly1973algorithm} and \cite{owen1958tables}. In our case, we are interested in the computation of probabilities of the form
\begin{equation} \label{3F:proba_bivariate}
	\Prob \big( U \in ( u_1, u_2 ), V \in ( v_1, v_2 ) \big).
\end{equation}
This probability is represented graphically as the integral of the two-dimensional density over the rectangular domain in grey in Figure \ref{3F:fig:domain_integral}. Now, using $F_{U,V}^{\rho} (u,v)$, the probability \eqref{3F:proba_bivariate} is given by
\begin{equation}\label{3F:decomposition_bivar_proba}
	\begin{aligned}
		\Prob \big( U \in ( u_1, u_2 ), V \in ( v_1, v_2 ) \big)
		 & = F_{U,V}^{\rho} (u_2,v_2) - F_{U,V}^{\rho} (u_1,v_2) - F_{U,V}^{\rho} (u_2,v_1) + F_{U,V}^{\rho} (u_1,v_1).
	\end{aligned}
\end{equation}

This remark will allow us to reduce drastically the computation time induced by the evaluation of the conditional probabilities and so, of the conditional expectations.

Now, going back to our problem, the QBDPP in the non-Markovian case rewrites \eqref{3F:BDPP_quantif_nonmarkovian}
\begin{equation}
	\left\{
	\begin{aligned}
		 & \widehat{v}_n \big( x_{i_1}^n, y_{i_2}^n \big) = h_n \big( x_{i_1}^n, y_{i_2}^n \big), \qquad 1 \leq i_1 \leq N_n^X, \qquad 1 \leq i_2 \leq N_n^Y,                                                                         \\
		 & \widehat{v}_k \big( x_{i_1}^k, y_{i_2}^k \big) = \max \bigg( h_k \big( x_{i_1}^k, y_{i_2}^k \big), \sum_{j_1, j_2} \pi_{i,j}^{\textsc{(nm)},k} \, \widehat{v}_{k+1} \big( x_{j_1}^{k+1}, y_{j_2}^{k+1} \big) \bigg). \quad
	\end{aligned}
	\right.
\end{equation}

In order to test numerically the two methods, we will evaluate PRDC European and Bermudan options with maturities $2Y$, $5Y$  and $10Y$. We describe below the market and products parameters we consider. The volatilities of the domestic and the foreign interest rates are not detailed below because we investigate the behaviour of the methods with respect to $\sigma_d$ and $\sigma_f$.
\begin{table}[H]
	\centering
	\begin{tabular}{||c | c || c | c || c | c ||}
		\hline
		$P_d(0,t)$ & $\exp ( - r_d t )$ & $r_d$      & $0.015$ & $\rho_{Sd}$ & $0$ \\
		$P_f(0,t)$ & $\exp ( - r_f t )$ & $r_f$      & $0.01$  & $\rho_{Sf}$ & $0$ \\
		$S_0$      & $88.17$            & $\sigma_S$ & $0.5$   & $\rho_{df}$ & $0$ \\
		\hline
	\end{tabular}
	\caption[Market values for the three factors model.]{\textit{Market values.}}
	\label{3F:table:market_values}
\end{table}

\begin{table}[H]
	\centering
	\begin{tabular}{||l | c || l | c ||}
		\hline
		$\forall k \in 1, \dots, n, \quad C_d(t_k)$          & $15 \%$   & $\forall k \in 1, \dots, n, \quad C_f(t_k)$            & $18.9 \%$ \\
		$\forall k \in 1, \dots, n, \quad \textrm{Cap}(t_k)$ & $5.55 \%$ & $\forall k \in 1, \dots, n, \quad \textrm{Floor}(t_k)$ & $0 \%$    \\
		Exercise date (EU): $t_n$                            & $T$       & Exercise dates (US): $t_k$                             & $Tk/n$    \\
		\hline
	\end{tabular}
	\caption[PRDC product description.]{\textit{Product description.}}
	\label{3F:table:product_description}
\end{table}

\begin{remark}
	When the correlations $\rho_{df}$ and $\rho_{Sd}$ are equal to zero, the numerical computation of probabilities $\widetilde \pi_{i,j}^{\textsc{(m)},k}$ and $\widetilde \pi_{i,j}^{\textsc{(nm)},k}$ can be accelerated. Indeed, in the Markovian case, \eqref{3F:proba_markovian} can be rewritten as
	\begin{equation}
		\begin{aligned}
			\widetilde \pi_{i,j}^{\textsc{(m)},k}
			 & = \Prob \Big( (\widehat X_{k+1}, \widehat W^f_{k+1}) = \big( x_{j_1}^{k+1}, u_{j_2}^{k+1} \big) \mid (X_k, W^f_k) = \big(x_{i_1}^k, u_{i_2}^k \big) \Big)                      \\
			 & \qquad \qquad \times \Prob \Big( (\widehat Y_{k+1}, \widehat W^d_{k+1}) = \big( y_{j_3}^{k+1}, v_{j_4}^{k+1} \big) \mid (Y_k, W^d_k) = \big( y_{i_3}^k, v_{i_4}^k \big) \Big).
		\end{aligned}
	\end{equation}
	In that case, we can use the CDF of a correlated bivariate normal distribution, as detailed above for the non-Markovian case in \eqref{3F:decomposition_bivar_proba}, for computing these probabilities in a very effective and faster way rather than performing a Monte Carlo simulation.

	In the non-Markovian case, \eqref{3F:proba_nonmar} can be rewritten as
	\begin{equation}
		\begin{aligned}
			\widetilde \pi_{i,j}^{\textsc{(nm)},k}
			 & = \Prob \Big( Z^1 \in \big( x_{j_1-1/2}^{k+1} - x_{i_1}^k, x_{j_1+1/2}^{k+1}- x_{i_1}^k \big) \Big) \Prob \Big( Z^2 \in \big( y_{j_2-1/2}^{k+1} - y_{i_2}^k, y_{j_2+1/2}^{k+1} - y_{i_2}^k \big) \Big)                                                                                                                              \\
			 & = \bigg( F_{_Z} \Big( \frac{x_{j_1+1/2}^{k+1}- x_{i_1}^k}{\sigma_{_{Z^1}}} \Big) - F_{_Z} \Big( \frac{x_{j_1-1/2}^{k+1}- x_{i_1}^k}{\sigma_{_{Z^1}}} \Big) \bigg) \bigg( F_{_Z} \Big( \frac{y_{j_2+1/2}^{k+1} - y_{i_2}^k}{\sigma_{_{Z^2}}} \Big) - F_{_Z} \Big( \frac{y_{j_2-1/2}^{k+1} - y_{i_2}^k}{\sigma_{_{Z^2}}} \Big) \bigg)
		\end{aligned}
	\end{equation}
	where $F_{_Z} (\cdot)$ is the CDF of a one-dimensional normal distribution, $\sigma_{_{Z^1}}$ is the standard deviation of $Z^1$ and $\sigma_{_{Z^2}}$ is the standard deviation of $Z^2$. This remark allows us to drastically reduce the computation time of the conditional probabilities in the case of zero correlations.
\end{remark}

\subsection{European Option}

First of all, we compare the asymptotic behaviour of the Markovian and the non-Markovian approaches when pricing European PRDC Options with different volatilities and maturities. In this case, we consider only two dates in the tree: $t_0=0$ an $t_n=T$, the algorithm is a regular cubature formula and no systematic error is induced by the non-markovianity of the couple $(X_k, Y_k)$. These numerical tests confirm that both approaches give the same value, however the non-Markovian approach converges much faster due to the dimension of the product quantization, $2$ for the first one and $4$ for the last one. Indeed, the complexity of the 2 dimensional pricer is of order of $N= N^X \times N^Y$ while the complexity of the 4 dimensional pricer is of order $N=N^X \times N^Y \times N^{W^d} \times N^{W^f}$. $N$ being the size of the product quantizer at each date (in two dimensions: $N = N^X\times N^Y$ and in four dimensions $N = N^X \times N^{W^f} \times N^Y \times N^{W^d}$).

\medskip

In the case of the European options, we have a closed-form formula for the price of \eqref{3F:payoffPRDC}. The benchmark price is computed using the rewriting of \eqref{3F:payoffPRDC} as a sum of Calls: at a time $t_k$, the payoff can be expressed as
\begin{equation*}
	\begin{aligned}
		\psi_{t_k} ( S_{t_k} ) & = \min \bigg( \max \bigg( \dfrac{C_f(t_k)}{S_0} S_{t_k} - C_d(t_k), \textrm{Floor}(t_k) \bigg),\textrm{Cap}(t_k) \bigg) \\
		                       & = \textrm{Floor}(t_k) - a_k ( S_{t_k} - K_k^1 )_+ + a_k ( S_{t_k} - K_k^2 )_+
	\end{aligned}
\end{equation*}
with $a_k = \dfrac{C_f(t_k)}{S_0}$, $K_k^1 = \dfrac{\textrm{Cap}(t_k) + C_d(t_k)}{C_f(t_k)} \times S_0$ and $K_k^2 = \dfrac{\textrm{Floor}(t_k) + C_d(t_k)}{C_f(t_k)} \times S_0$ and the closed-form formula for the price of a Call is detailed in Appendix \ref{3F:benchmark_european_call}. The prices given by the closed-form formula of the European options we consider (different values of volatilities and different maturities) are given in Table \ref{3F:tab:prices_EU}.

\begin{table}[H]
	\centering
	\begin{tabular}{c||cc}
		\toprule
		                             & \multicolumn{2}{c}{ Exact price }                \\ \midrule
		\backslashbox{$T$}{$\sigma$} & $50bp$                            & $500bp$      \\ \midrule \midrule
		2Y                           & 2.171945242                       & 2.159404007  \\ \midrule
		5Y                           & 1.630435483                       & 1.539295559  \\ \midrule
		10Y                          & 1.127330259                       & 0.8013151892 \\ \bottomrule
	\end{tabular}
	\caption[Prices given by closed-form formula of European options with zero correlations.]{\textit{Prices given by closed-form formula of European options with zero correlations. ($\sigma_d = \sigma_f = \sigma$) }}
	\label{3F:tab:prices_EU}
\end{table}

The difference of speed of convergence between the two methods is illustrated in Figures \ref{3F:fig:EU_50bp_rel_error} and \ref{3F:fig:EU_500bp_rel_error} for the relative errors for both methods compared to the benchmark. $N$ in the label of each graphic represents the size of the product quantizer ($N^X \times N^{W^f} \times N^Y \times N^{W^d}$ in the Markovian case and $N^X \times N^Y$ in the other case), hence the complexity of both trees are the same.

\begin{figure}[H]
	\centering
	\begin{subfigure}[b]{0.48\textwidth}
		\centering
		\includegraphics[width=\textwidth]{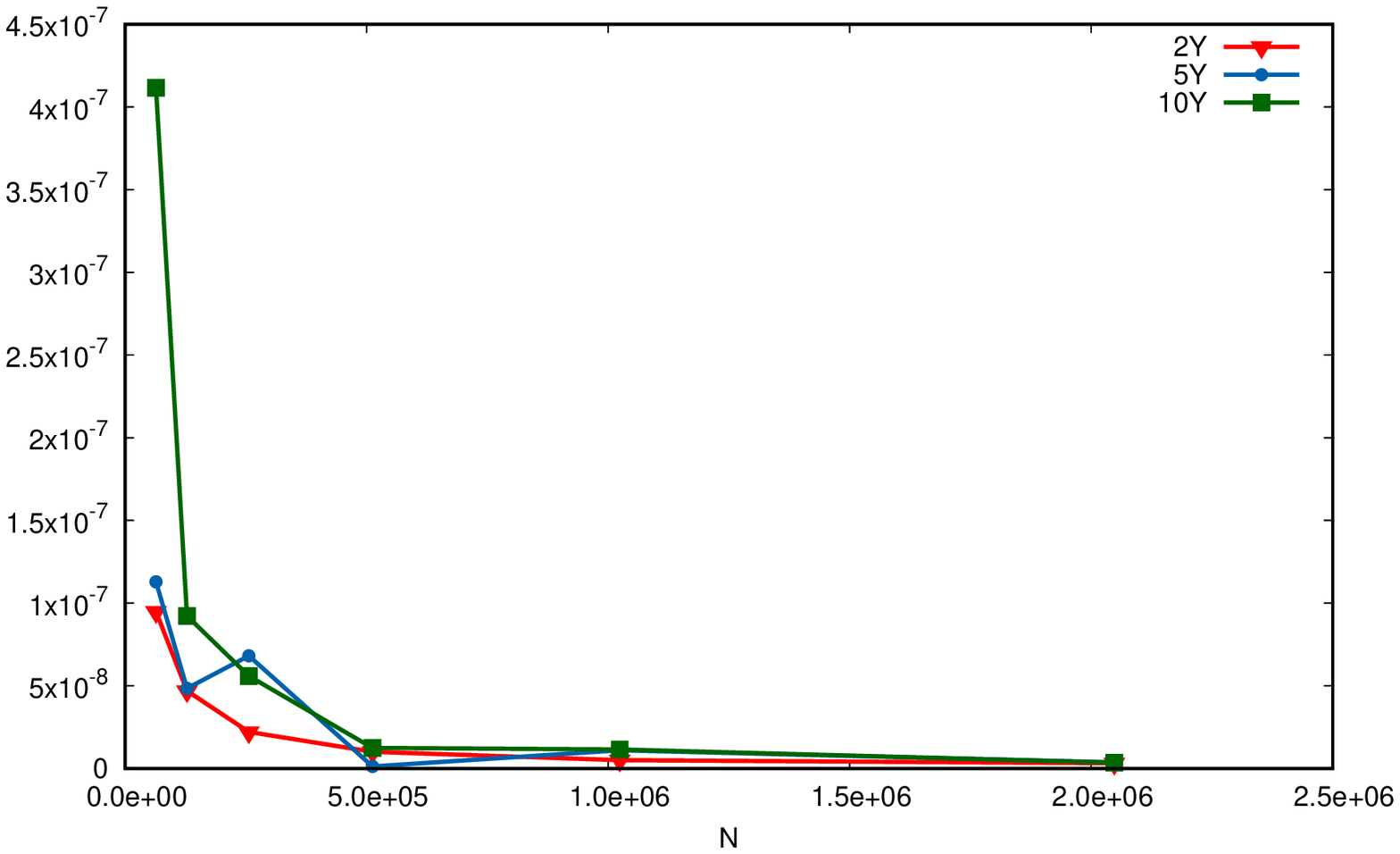}
		\caption{\textit{Non-Markovian -- 2d}}
		\label{3F:fig:EU_50bp_2d_rel_error}
	\end{subfigure}
	~
	\begin{subfigure}[b]{0.48\textwidth}
		\centering
		\includegraphics[width=\textwidth]{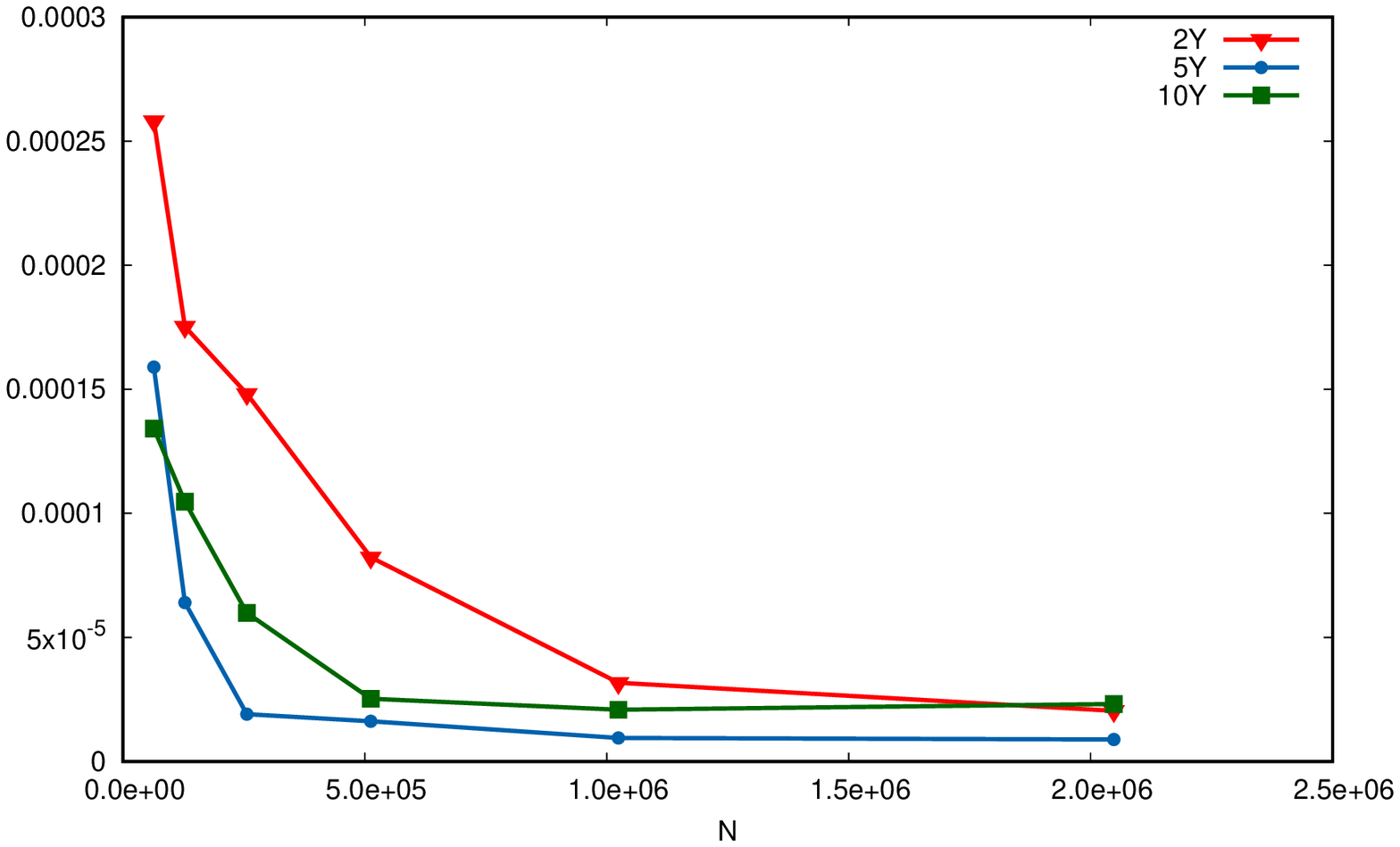}
		\caption{\textit{Markovian -- 4d}}
		\label{3F:fig:EU_50bp_4d_rel_error}
	\end{subfigure}
	\caption[Relative errors for both methods based on product quantization for 2Y, 5Y and 10Y European options pricing (with zero correlations and $\sigma_d = \sigma_f = 50bp$).]{\textit{$\sigma_d = \sigma_f = 50bp$ -- Relative errors for both methods for 2Y, 5Y and 10Y European options pricing (with zero correlations).}}
	\label{3F:fig:EU_50bp_rel_error}
\end{figure}

\begin{figure}[H]
	\centering
	\begin{subfigure}[b]{0.48\textwidth}
		\centering
		\includegraphics[width=\textwidth]{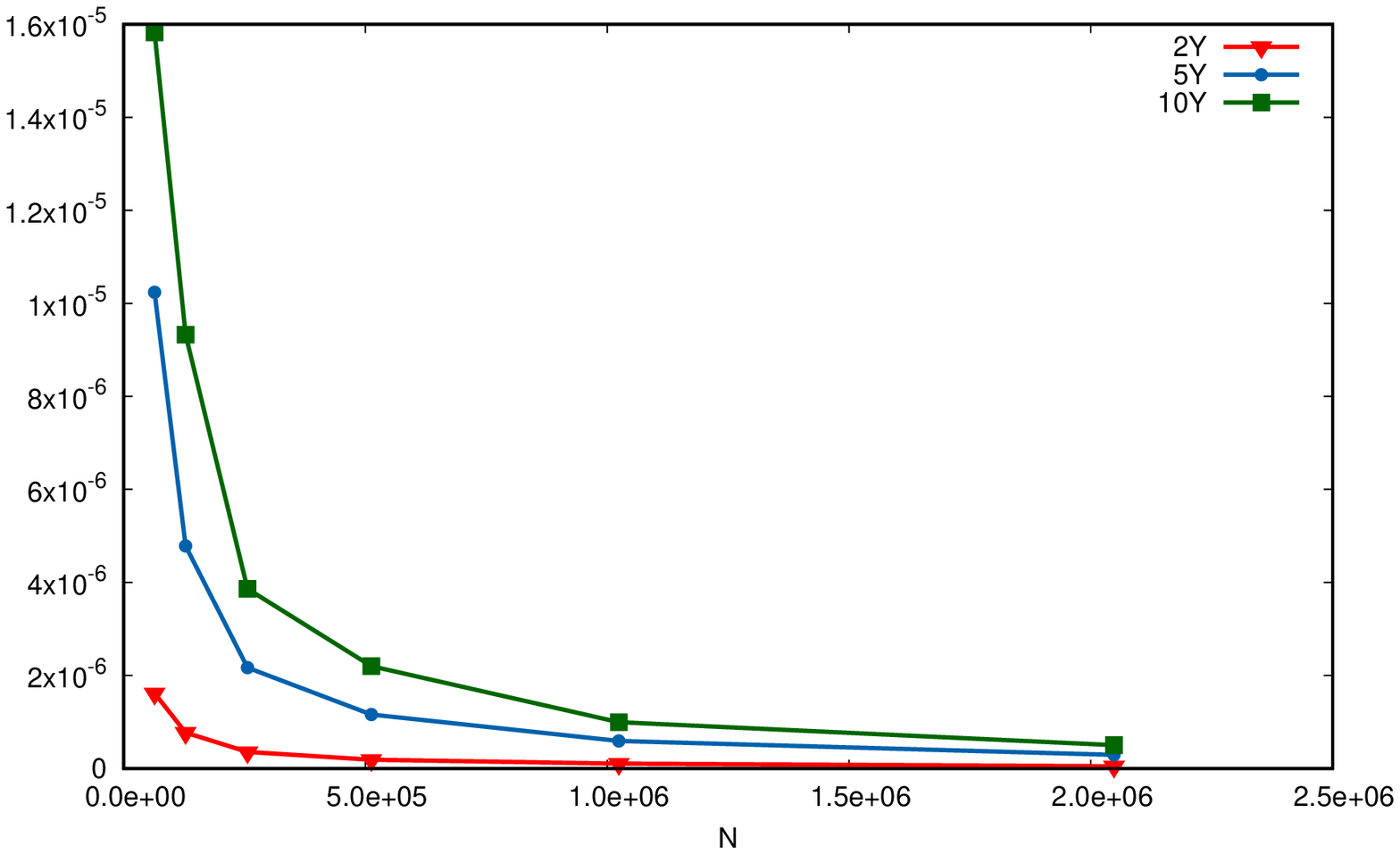}
		\caption{\textit{Non-Markovian -- 2d}}
		\label{3F:fig:EU_500bp_2d_rel_error}
	\end{subfigure}
	~
	\begin{subfigure}[b]{0.48\textwidth}
		\centering
		\includegraphics[width=\textwidth]{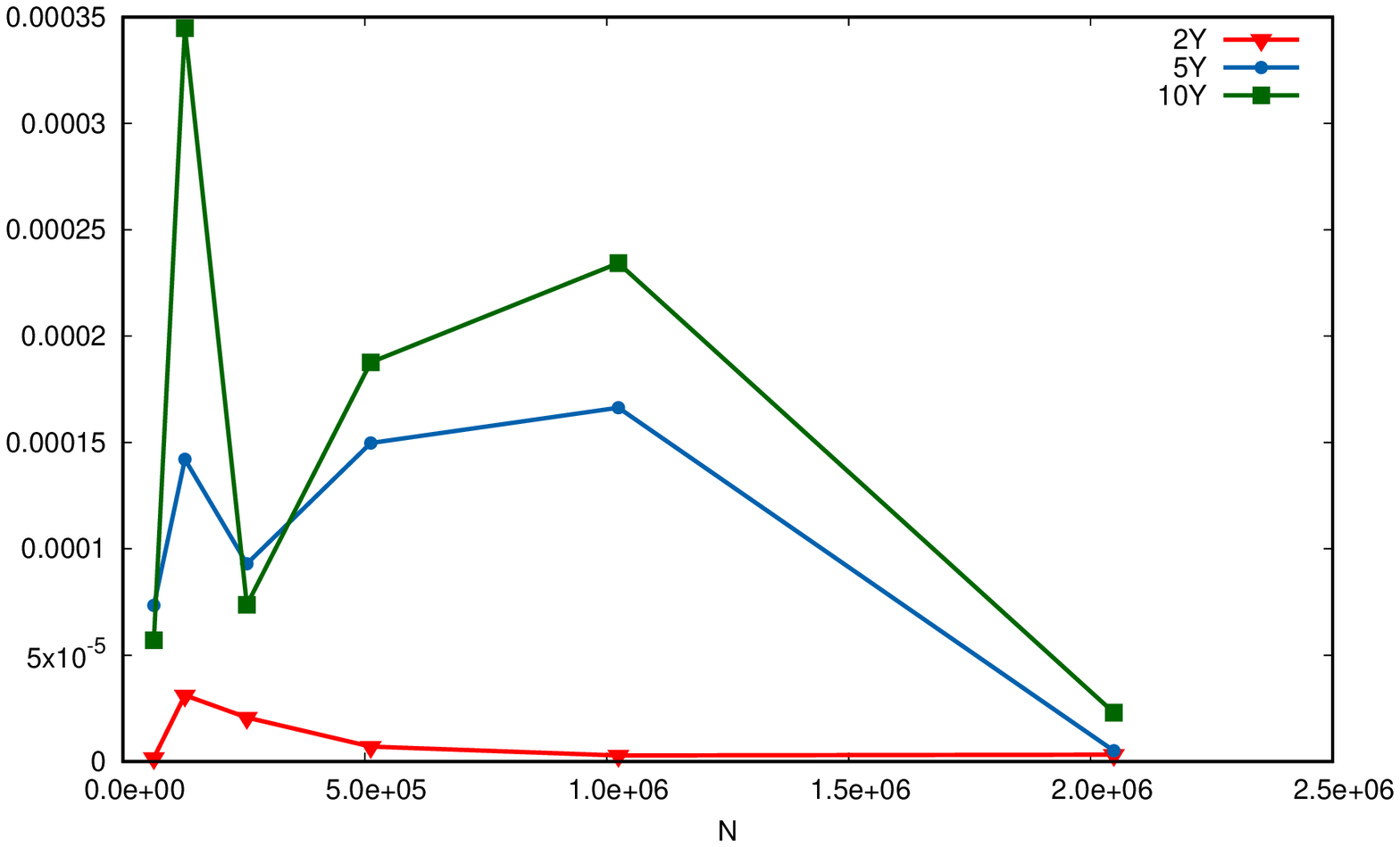}
		\caption{\textit{Markovian -- 4d}}
		\label{3F:fig:EU_500bp_4d_rel_error}
	\end{subfigure}
	\caption[Relative errors for both methods based on product quantization for 2Y, 5Y and 10Y European options pricing (with zero correlations and $\sigma_d = \sigma_f = 500bp$).]{\textit{$\sigma_d = \sigma_f = 500bp$ -- Relative errors for both methods for 2Y, 5Y and 10Y European options pricing (with zero correlations).}}
	\label{3F:fig:EU_500bp_rel_error}
\end{figure}

For both methods, a relative error of $1bp$ is quickly reached, even for high values of $\sigma_d$ and $\sigma_f$. Indeed, the time needed in order to achieve a $1bp$ precision for building a quantization tree with $2$ dates, computing the probabilities and pricing a European option is at most 6 ms for the non-Markovian method and at most 85ms for the Markovian one when the correlations are equal to zero. The computation times needed for a $1bp$ relative error are summarised in Table \ref{3F:tab:time_eu_nocorrel}.

\begin{table}[H]
	\centering
	\begin{tabular}{c||cc|cc}
		\toprule
		                             & \multicolumn{2}{c|}{ Non-Markovian -- 2d } & \multicolumn{2}{c}{ Markovian -- 4d }                                    \\ \midrule
		\backslashbox{$T$}{$\sigma$} & $50bp$                                     & $500bp$                               & $50bp$         & $500bp$         \\ \midrule \midrule
		2Y                           & 1 ms (32000)                               & 4 ms (32000)                          & 24 ms (512000) & 4 ms (64000)    \\ \midrule
		5Y                           & 4 ms (32000)                               & 6 ms (32000)                          & 4 ms (64000)   & 85 ms (2048000) \\ \midrule
		10Y                          & 4 ms (32000)                               & 3 ms (32000)                          & 14 ms (256000) & 83 ms (2048000) \\ \bottomrule
	\end{tabular}
	\caption[Computation times for European options pricing with zero correlations using both methods based on product quantization.]{\textit{Times in milliseconds needed for reaching a $1bp$ precision for European options pricing with zero correlations using both methods with, in parenthesis, the size $N$ of the grid at each time step. ($\sigma_d = \sigma_f = \sigma$) }}
	\label{3F:tab:time_eu_nocorrel}
\end{table}

\begin{remark}
	Of course, the pricers can be used even when we consider non-zero correlations. We choose to show only the asymptotic behaviour of the non-Markovian method because it converges much faster and the computations of the probabilities can be made deterministically using the CDF of a correlated bivariate normal distribution. However, if we want to use the Markovian approach, we need to compute the transition probabilities using a Monte Carlo simulation, which is a drawback for the method as it increases its computation time. We consider the following correlations
	\begin{equation*}
		\rho_{Sf} = -0.0272, \qquad \rho_{Sd} = 0.1574, \qquad \rho_{df} =0.6558.
	\end{equation*}
	Table \ref{3F:tab:prices_EU_withcorrel} summarises the prices given by the closed-form formula.
	\begin{table}[H]
		\centering
		\begin{tabular}{c||cc}
			\toprule
			                             & \multicolumn{2}{c}{ Exact price }               \\ \midrule
			\backslashbox{$T$}{$\sigma$} & $50bp$                            & $500bp$     \\ \midrule \midrule
			2Y                           & 2.173803852                       & 2.185536786 \\ \midrule
			5Y                           & 1.636518082                       & 1.652226813 \\ \midrule
			10Y                          & 1.141944391                       & 1.103531914 \\ \bottomrule
		\end{tabular}
		\caption[Prices given by closed-form formula of European options with correlations.]{\textit{Prices given by closed-form formula of European options with correlations. ($\sigma_d = \sigma_f = \sigma$) }}
		\label{3F:tab:prices_EU_withcorrel}
	\end{table}

	Figures \ref{3F:fig:EU_withcorrel_50bp_rel_error} and \ref{3F:fig:EU_withcorrel_500bp_rel_error} display the relative error induced by the numerical method as a function of $N$. And in Table \ref{3F:tab:time_eu_withcorrel}, we summarise the computation needed in order to reach a $1bp$ relative error.

	\begin{figure}[H]
		\centering
		\begin{subfigure}[b]{0.48\textwidth}
			\centering
			\includegraphics[width=\textwidth]{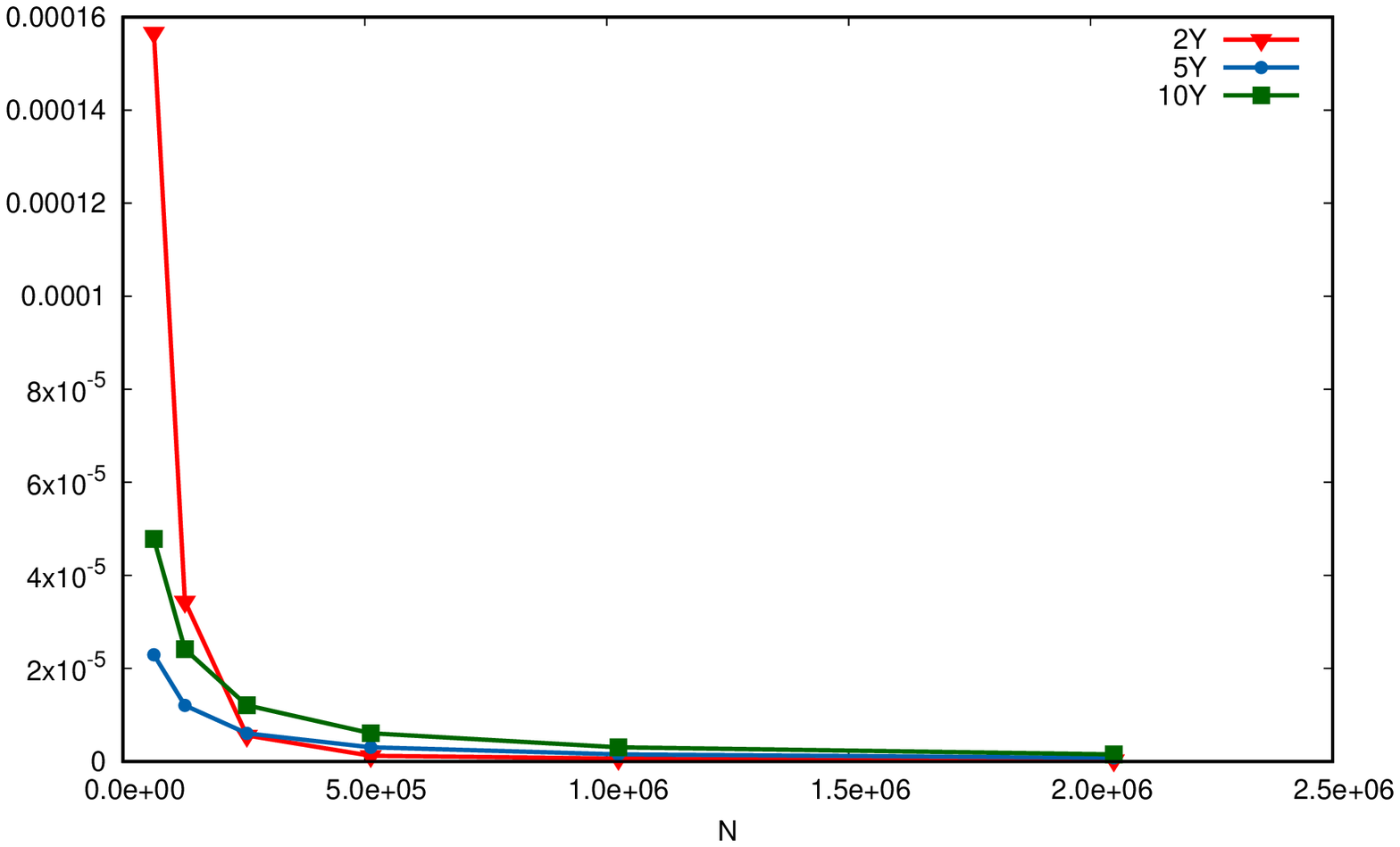}
			\caption{\textit{$\sigma_d = \sigma_f = 50bp$}}
			\label{3F:fig:EU_withcorrel_50bp_rel_error}
		\end{subfigure}
		~
		\begin{subfigure}[b]{0.48\textwidth}
			\centering
			\includegraphics[width=\textwidth]{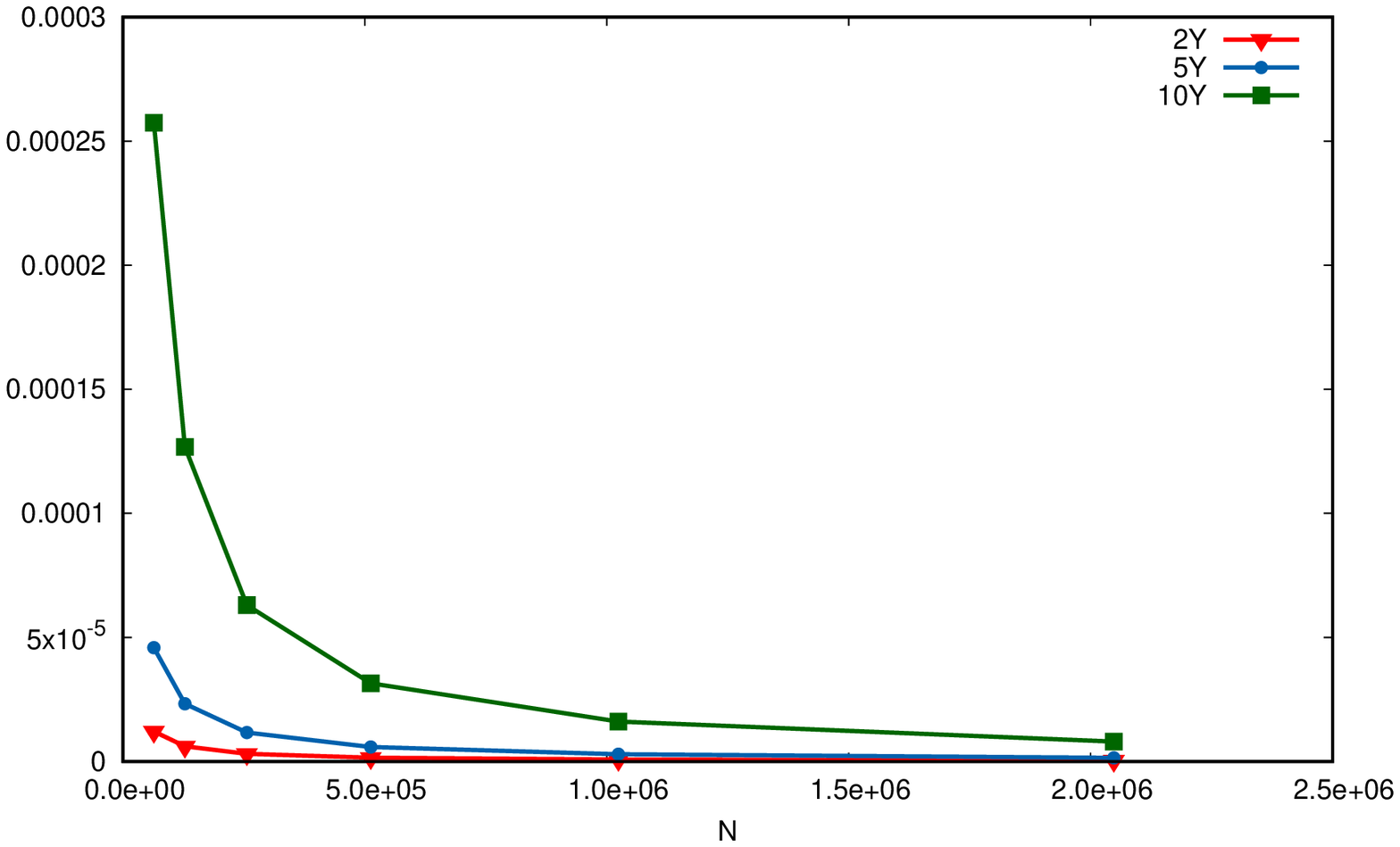}
			\caption{\textit{$\sigma_d = \sigma_f = 500bp$}}
			\label{3F:fig:EU_withcorrel_500bp_rel_error}
		\end{subfigure}
		\caption[Relative errors for the non-Markovian method for 2Y, 5Y and 10Y European options pricing (with correlations).]{\textit{Relative errors for the non-Markovian method for 2Y, 5Y and 10Y European options pricing (with correlations).}}
		\label{3F:fig:EU_withcorrel_rel_error}
	\end{figure}

	\begin{table}[H]
		\centering
		\begin{tabular}{c||cc}
			\toprule
			                             & \multicolumn{2}{c}{ Non-Markovian -- 2d }                   \\ \midrule
			\backslashbox{$T$}{$\sigma$} & $50bp$                                    & $500bp$         \\ \midrule \midrule
			2Y                           & 71 ms (64000)                             & 34 ms (32000)   \\ \midrule
			5Y                           & 31 ms (32000)                             & 31 ms (32000)   \\ \midrule
			10Y                          & 32 ms (32000)                             & 139 ms (128000) \\ \bottomrule
		\end{tabular}
		\caption[Computation times for European options pricing with correlations using the non-Markovian method.]{\textit{Times in milliseconds needed for reaching a $1bp$ relative error of European options pricing with correlations using the non-Markovian method with, in parenthesis, the size $N$ of the grid at each time step. ($\sigma_d = \sigma_f = \sigma$) }}
		\label{3F:tab:time_eu_withcorrel}
	\end{table}

\end{remark}

It is clear that one should prefer the non-Markovian methodology to the Markovian one for the evaluation of European options as it is a fast and accurate method for producing prices in the 3-factor model.

\subsection{Bermudan option}

Now, we compare the asymptotic behaviour of both approaches when pricing true Bermudan PRDC options. The following figures represent the price and the rescaled difference of the prices given by the two approaches as a function of $N$, which is the size of the product quantizer at each date (in two dimensions: $N = N^X\times N^Y$ and in four dimensions $N = N^X \times N^{W^f} \times N^Y \times N^{W^d}$). The financial products we consider are yearly exercisable Bermudan options with different values for the maturity date ($2$ years, $5$ years and $10$ years) and the domestic/foreign volatilities ($50bp$ and $500bp$).

When using domestic and foreign volatilities close to market values, we observe numerically that the non-Markovian method converges a lot faster than the Markovian one for a given complexity. However both methods do not converge to the same value (see Figures \ref{3F:fig:US_50bp_tree2_2Y}, \ref{3F:fig:US_50bp_tree5_5Y}, \ref{3F:fig:US_50bp_tree10_10Y}), which is consistent with the results we found in Theorems \ref{3F:thm:markov} and \ref{3F:thm:non_markov}. As in the European case, $N$ in the label of each graph represents the size of the product quantizer ($N^X \times N^{W^f} \times N^Y \times N^{W^d}$ in the Markovian case and $N^X \times N^Y$ in the other case), hence the complexity of both trees are the same.
\begin{figure}[H]
	\centering
	\begin{subfigure}[b]{0.48\textwidth}
		\centering
		\includegraphics[width=\textwidth]{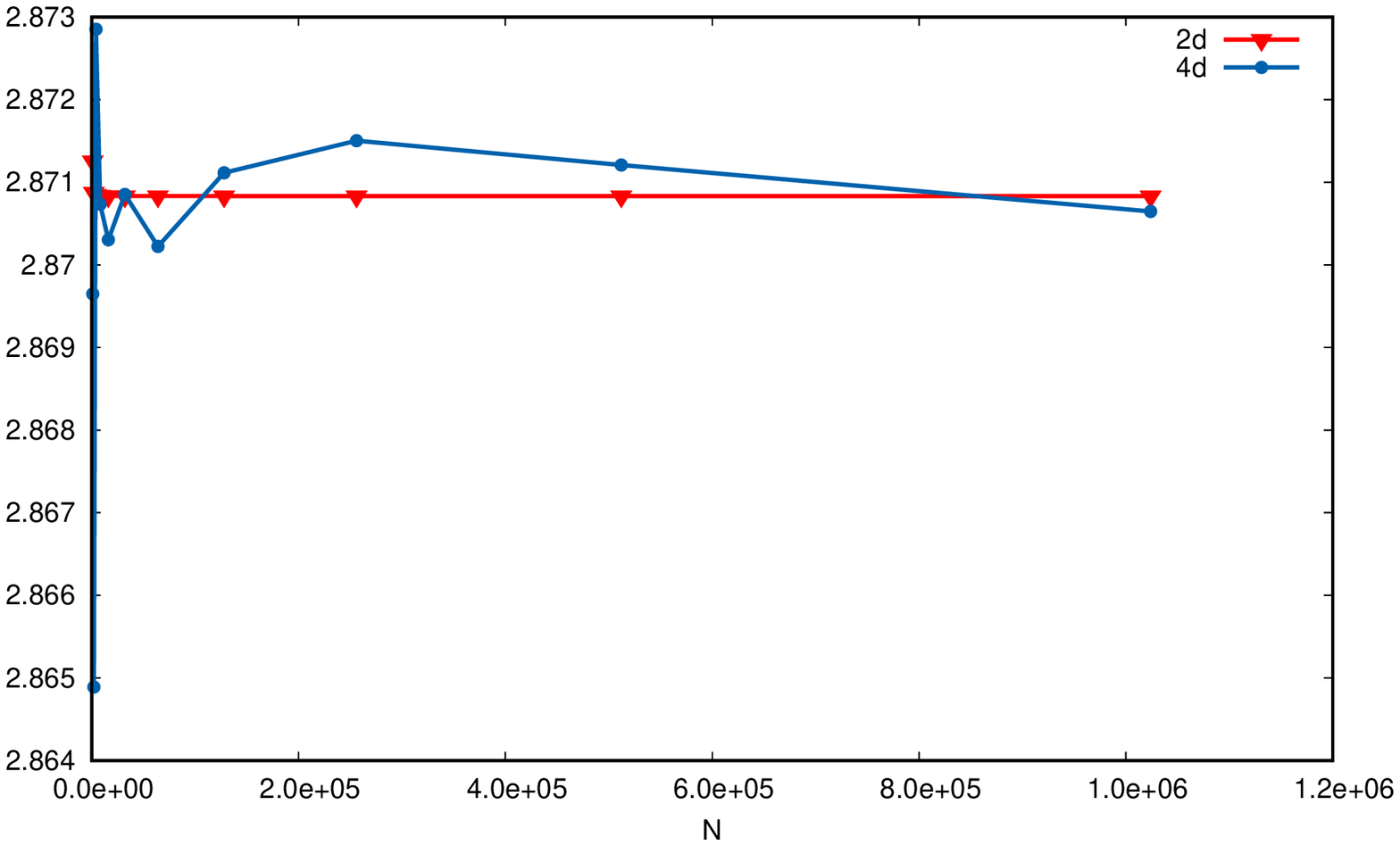}
		\caption{\textit{2Y}}
		\label{3F:fig:US_50bp_tree2_2Y}
	\end{subfigure}
	~
	\begin{subfigure}[b]{0.48\textwidth}
		\centering
		\includegraphics[width=\textwidth]{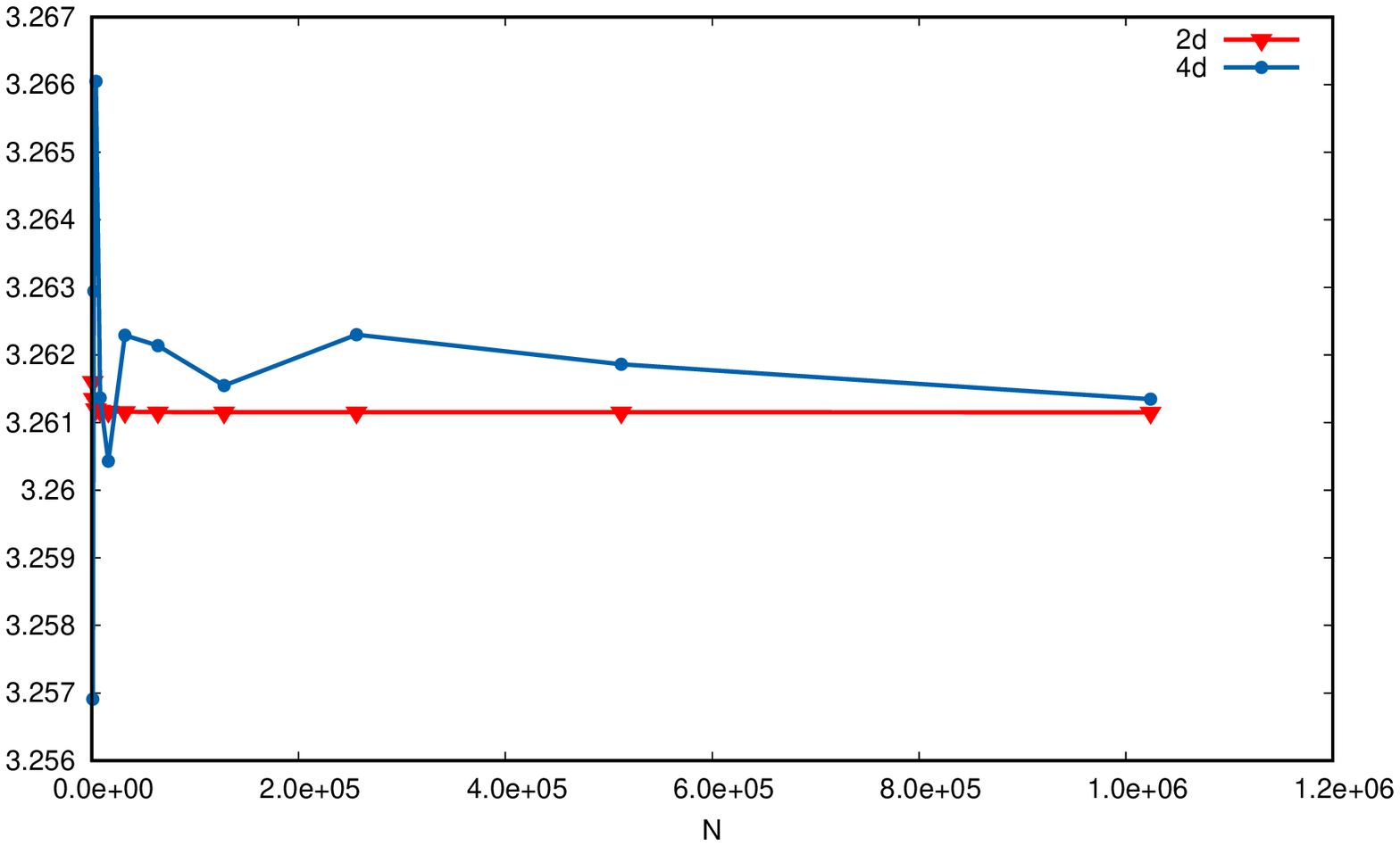}
		\caption{\textit{5Y}}
		\label{3F:fig:US_50bp_tree5_5Y}
	\end{subfigure}
	~
	\begin{subfigure}[b]{0.48\textwidth}
		\centering
		\includegraphics[width=\textwidth]{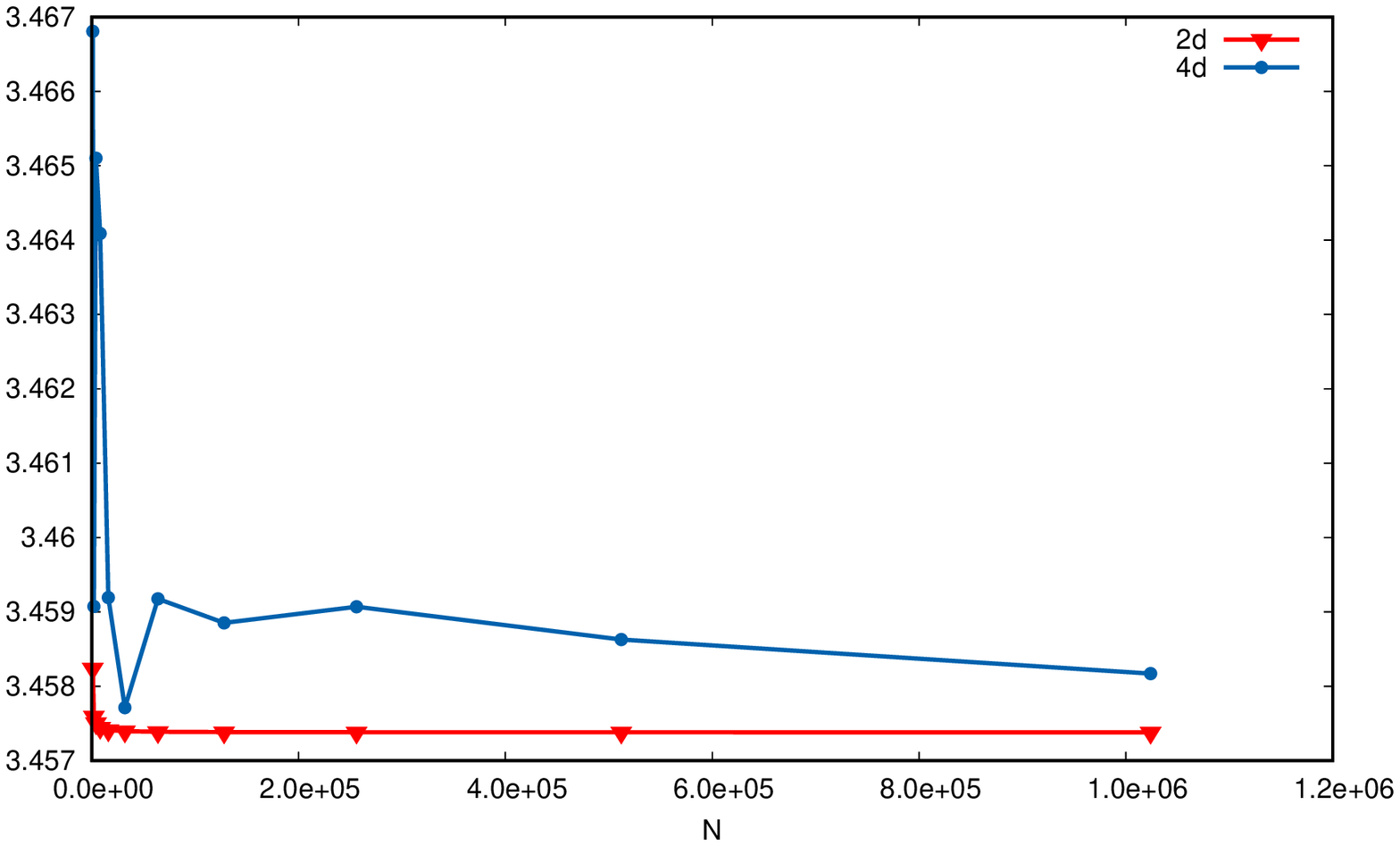}
		\caption{\textit{10Y}}
		\label{3F:fig:US_50bp_tree10_10Y}
	\end{subfigure}
	\caption[Price with the two methods based on product quantization for 2Y, 5Y and 10Y yearly exercisable Bermudan options (with zero correlations and $\sigma_d = \sigma_f = 50bp$).]{\textit{$\sigma_d = \sigma_f = 50bp$ -- Price with the two methods for 2Y, 5Y and 10Y yearly exercisable Bermudan options (with zero correlations).}}
	\label{3F:fig:US_50bp_prices}
\end{figure}

However, the relative systematic error induced by the non-Markovian methodology is negligible as can be seen in Figure \ref{3F:fig:US_50bp_relerrors}, at most $5bp$ for a 10-year annual Bermudan option. Hence, one should prefer, again, the non-Markovian methodology when considering to evaluate Bermudan options.

\begin{figure}[H]
	\centering
	\includegraphics[width=.6\textwidth]{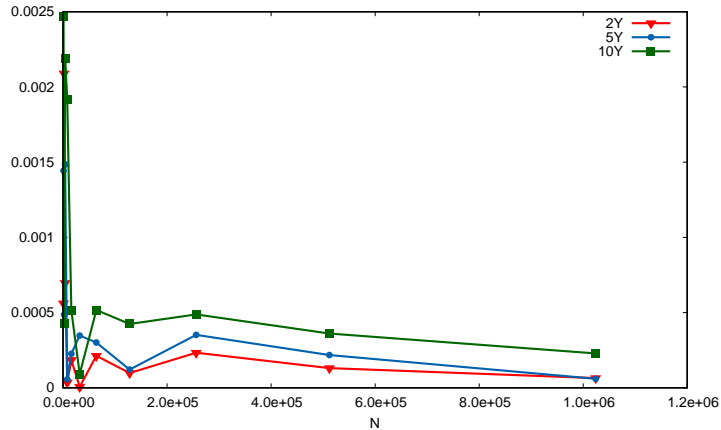}
	\caption[Relative differences between the two methods based on product quantization for 2Y, 5Y and 10Y yearly exercisable Bermudan options (with zero correlations and $\sigma_d = \sigma_f = 50bp$).]{\textit{$\sigma_d = \sigma_f = 50bp$ -- Relative differences between the two methods for 2Y, 5Y and 10Y yearly exercisable Bermudan options (with zero correlations).}}
	\label{3F:fig:US_50bp_relerrors}
\end{figure}

\begin{remark}
	If we consider more exercise dates for the Bermudan option, the systematic errors increase, as shown in Figure \ref{3F:fig:US_50bp_relerrors_biannual} where we considered Bermudan options exercisable every $6$ months and the same parameters as before with zero correlations and $\sigma_d = \sigma_f = 50bp$. However, even-though the error is higher for small $N$, when the non-Markovian pricer has converged, the relative difference between both methods is still acceptable (lower than $5bp$).
	\begin{figure}[H]
		\centering
		\includegraphics[width=.6\textwidth]{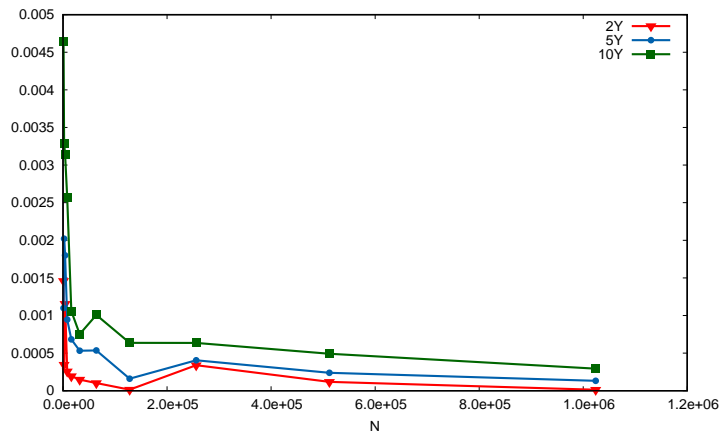}
		\caption[Relative differences between the two methods based on product quantization for 2Y, 5Y and 10Y bi-annual exercisable Bermudan options (with zero correlations and $\sigma_d = \sigma_f = 50bp$).]{\textit{$\sigma_d = \sigma_f = 50bp$ -- Relative differences between the two methods for 2Y, 5Y and 10Y bi-annual exercisable Bermudan options (with zero correlations).}}
		\label{3F:fig:US_50bp_relerrors_biannual}
	\end{figure}
\end{remark}

When we consider higher values the volatilities, $\sigma_d = \sigma_f = 500bp$, as expected the prices produced by the non-Markovian methodology produce a systematic error bigger than the case where $\sigma_d = \sigma_f = 50bp$ (see Figures \ref{3F:fig:US_500bp_tree2_2Y}, \ref{3F:fig:US_500bp_tree5_5Y}, \ref{3F:fig:US_500bp_tree10_10Y} and \ref{3F:fig:US_500bp_relerrors}). However, the relative difference between the two methods after convergence is reasonable: less than $0.1\%$ for expiry 2 years, $0.4\%$ for 5 years and around $1.1\%$ for $10$ years.

\begin{figure}[H]
	\centering
	\begin{subfigure}[b]{0.48\textwidth}
		\centering
		\includegraphics[width=\textwidth]{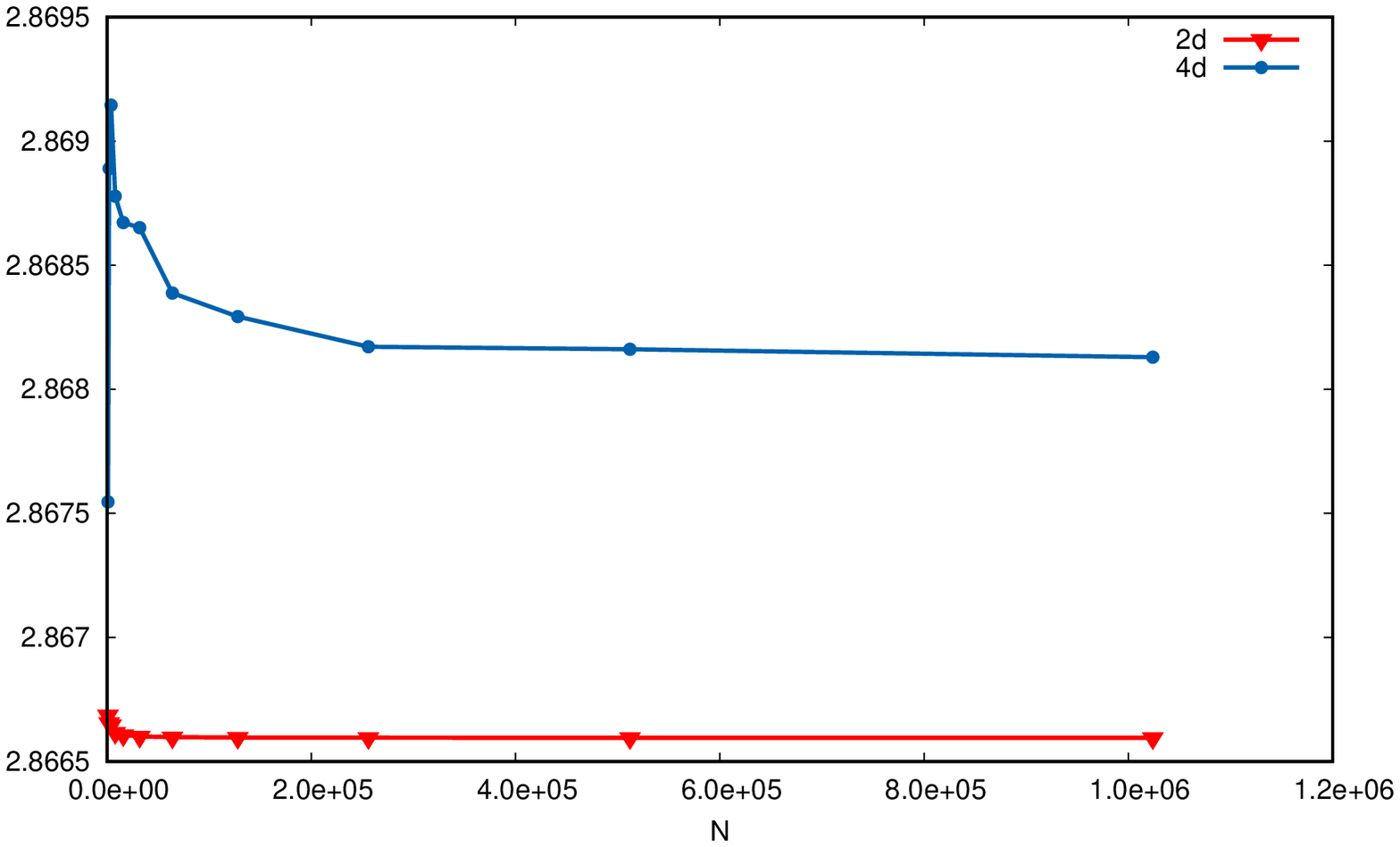}
		\caption{\textit{2Y}}
		\label{3F:fig:US_500bp_tree2_2Y}
	\end{subfigure}
	~
	\begin{subfigure}[b]{0.48\textwidth}
		\centering
		\includegraphics[width=\textwidth]{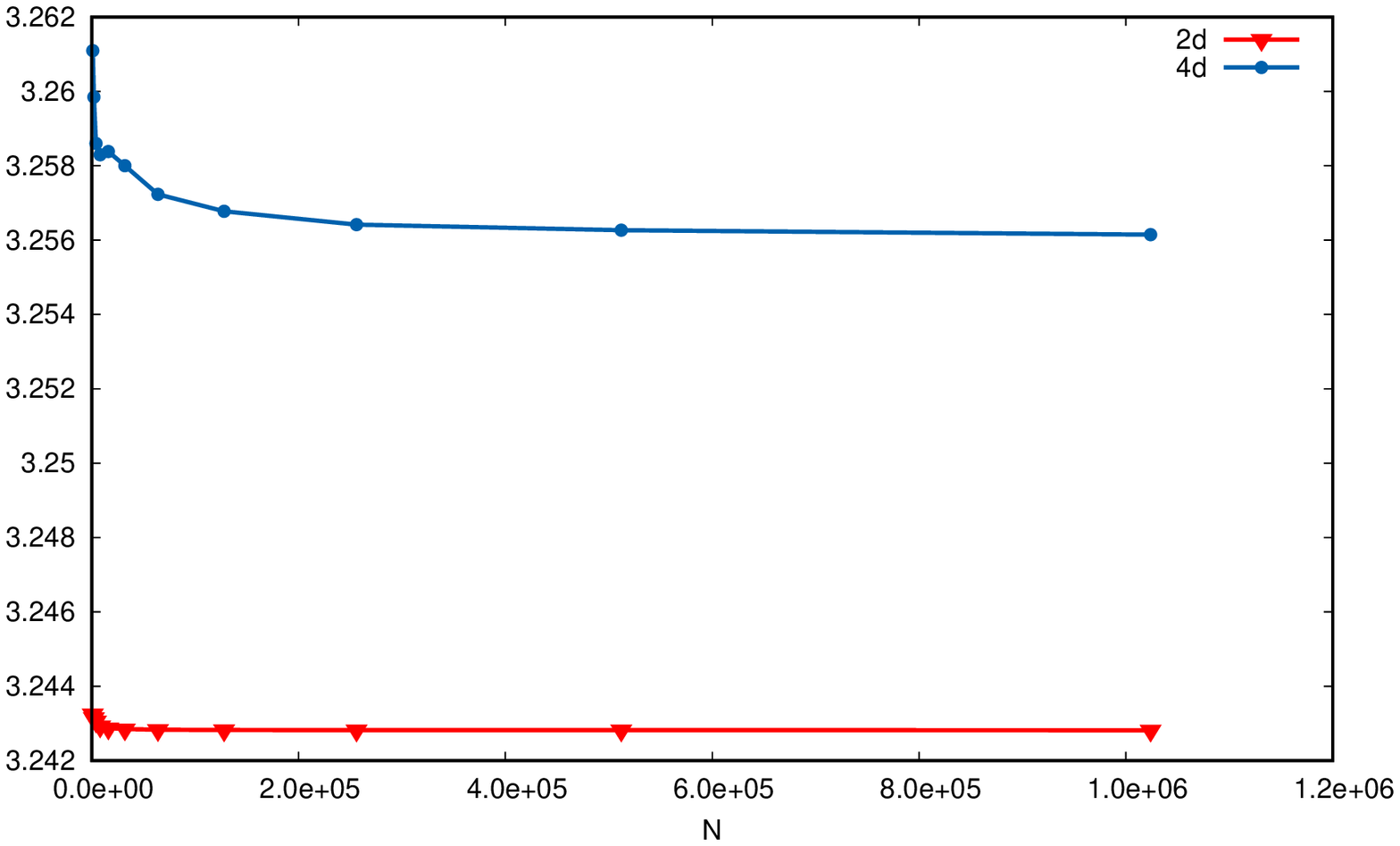}
		\caption{\textit{5Y}}
		\label{3F:fig:US_500bp_tree5_5Y}
	\end{subfigure}
	~
	\begin{subfigure}[b]{0.48\textwidth}
		\centering
		\includegraphics[width=\textwidth]{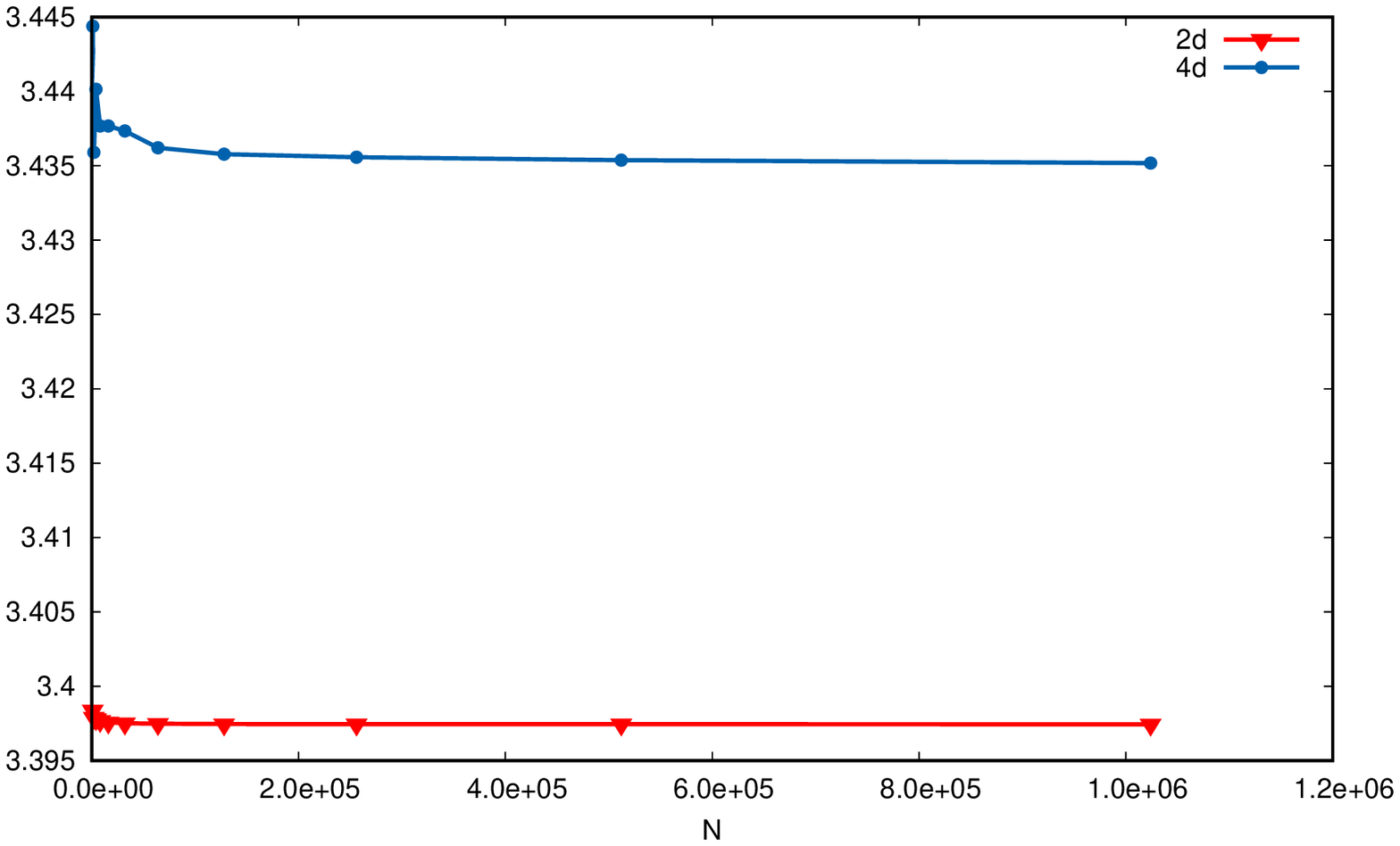}
		\caption{\textit{10Y}}
		\label{3F:fig:US_500bp_tree10_10Y}
	\end{subfigure}
	\caption[Price with the two methods based on product quantization for 2Y, 5Y and 10Y yearly exercisable Bermudan options (with zero correlations and $\sigma_d = \sigma_f = 500bp$).]{\textit{$\sigma_d = \sigma_f = 500bp$ -- Price with the two methods for 2Y, 5Y and 10Y yearly exercisable Bermudan options (with zero correlations).}}
	\label{3F:fig:US_500bp_prices}
\end{figure}

\begin{figure}[H]
	\centering
	\includegraphics[width=.6\textwidth]{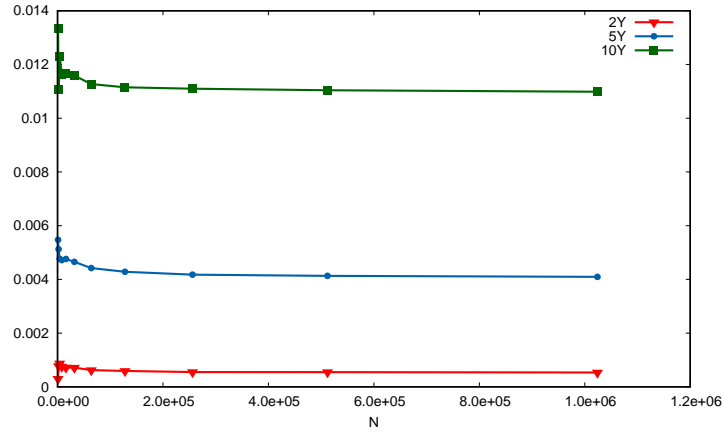}
	\caption[Relative differences between the two methods based on product quantization for 2Y, 5Y and 10Y yearly exercisable Bermudan options (with zero correlations and $\sigma_d = \sigma_f = 500bp$).]{\textit{$\sigma_d = \sigma_f = 500bp$ -- Relative differences between the two methods for 2Y, 5Y and 10Y yearly exercisable Bermudan options (with zero correlations).}}
	\label{3F:fig:US_500bp_relerrors}
\end{figure}

In Figure \ref{3F:tab:time_us_nocorrel}, we reference the time needed for reaching a $5bp$ relative precision (we compare the price given by grids of size $N$ to the "asymptotic", which is the price given by the same method with  a very large $N$) for the pricing of Bermudan options in a scenario of zero correlations. The non-Markovian method attains better  precision than a relative precision of $5bp$ in a few milliseconds, at most 7 ms where the Markovian one can need 4 seconds for reaching that precision. Hence, the 2 dimensional approximation seems again to be the better choice.

\begin{table}[H]
	\centering
	\begin{tabular}{c||cc|cc}
		\toprule
		                             & \multicolumn{2}{c|}{ Non-Markovian -- 2d } & \multicolumn{2}{c}{ Markovian -- 4d }                                    \\ \midrule
		\backslashbox{$T$}{$\sigma$} & $50bp$                                     & $500bp$                               & $50bp$         & $500bp$         \\
		\midrule \midrule
		2Y                           & 1 ms (1000)                                & 1 ms (1000)                           & 25 ms (8000)   & 4 ms (1000)     \\ \midrule
		5Y                           & 3 ms (1000)                                & 4 ms (1000)                           & 98 ms (8000)   & 1903 ms (64000) \\ \midrule
		10Y                          & 7 ms (1000)                                & 7 ms (1000)                           & 468 ms (16000) & 3850 ms (64000) \\ \bottomrule
	\end{tabular}
	\caption[Computation times for Bermudan yearly exercisable options pricing with zero correlations using both methods based on product quantization.]{\textit{Times in milliseconds needed for reaching a $5bp$ relative precision for Bermudan options pricing using both methods with zero correlation and, in parenthesis, the size $N$ of the grid at each time step. ($\sigma_d = \sigma_f = \sigma$) }}
	\label{3F:tab:time_us_nocorrel}
\end{table}

\begin{remark}
	Again, the pricers can even be used when we consider non-zero correlations and we choose to show only the asymptotic behaviour of the non-Markovian method, for the same reasons as the European case. We consider the same correlations as in the European case
	\begin{equation*}
		\rho_{Sf} = -0.0272, \qquad \rho_{Sd} = 0.1574, \qquad \rho_{df} =0.6558.
	\end{equation*}
	Figures \ref{3F:fig:US_withcorrel_50bp_tree2_2Y}, \ref{3F:fig:US_withcorrel_50bp_tree5_5Y} and \ref{3F:fig:US_withcorrel_50bp_tree10_10Y} display the price given by the numerical method as a function of $N$ and Table \ref{3F:tab:time_us_withcorrel} summarises the computation time needed in order to do better than a $3bp$ precision.

	\begin{figure}[H]
		\centering
		\begin{subfigure}[b]{0.48\textwidth}
			\centering
			\includegraphics[width=\textwidth]{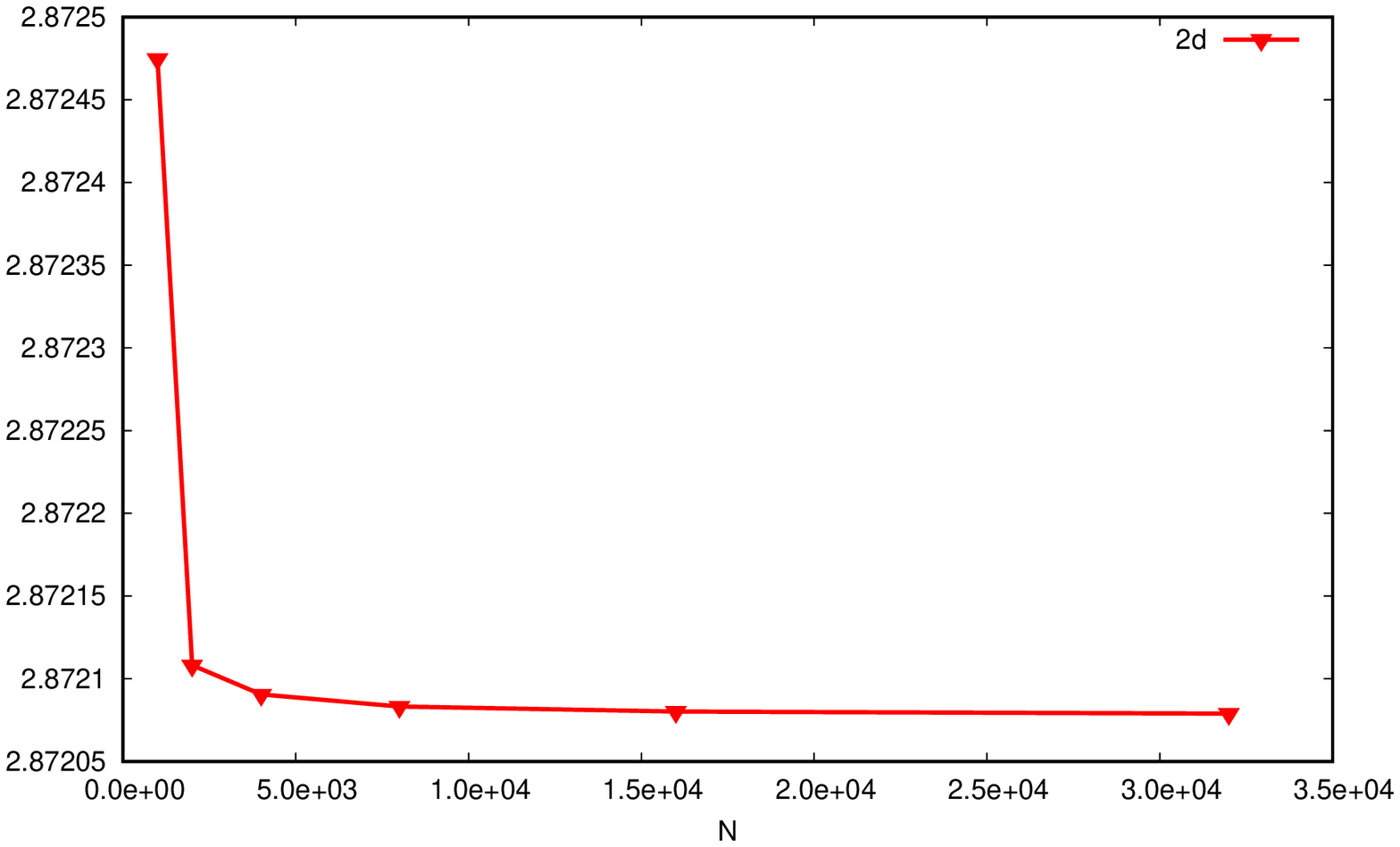}
			\caption{\textit{2Y}}
			\label{3F:fig:US_withcorrel_50bp_tree2_2Y}
		\end{subfigure}
		~
		\begin{subfigure}[b]{0.48\textwidth}
			\centering
			\includegraphics[width=\textwidth]{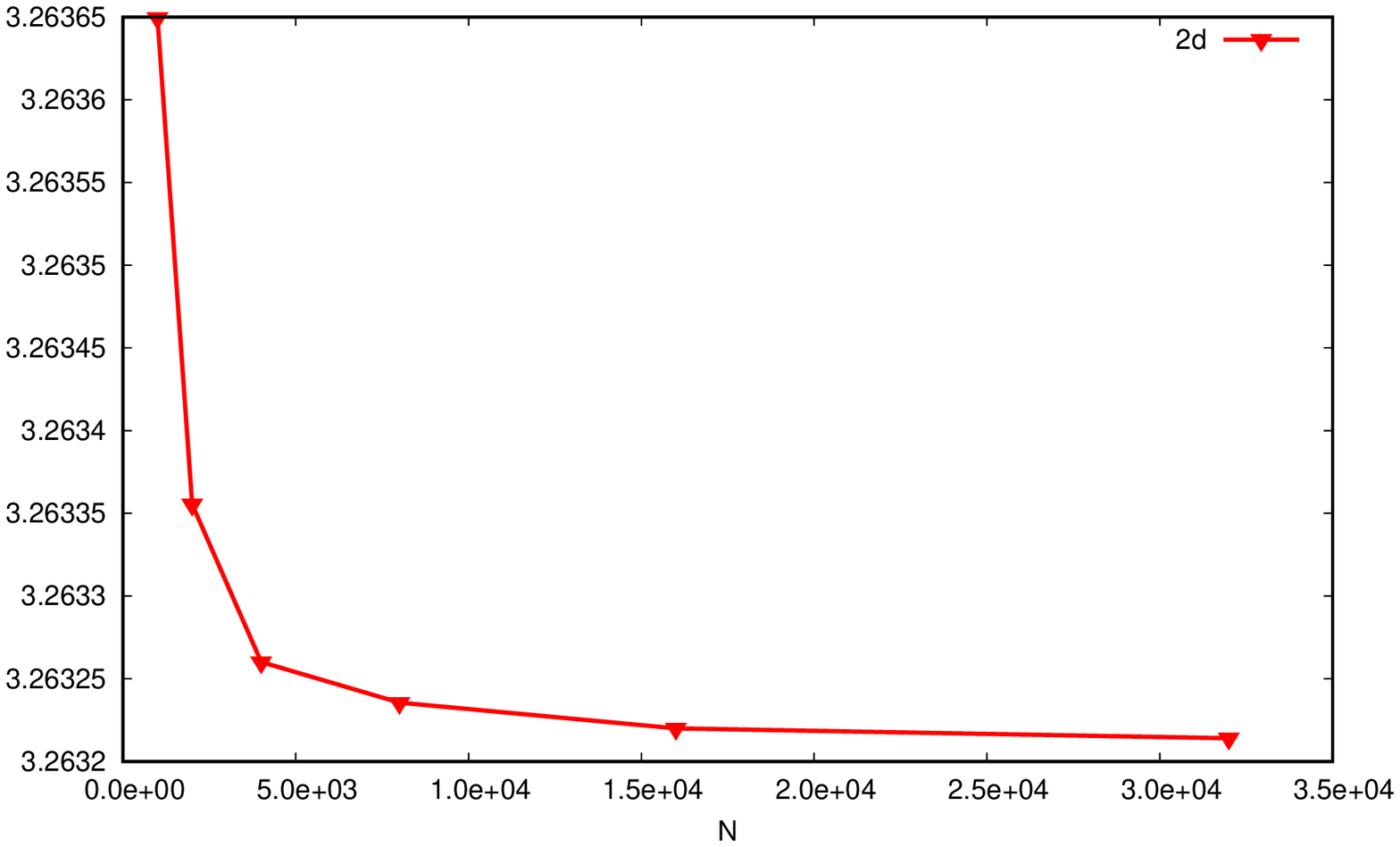}
			\caption{\textit{5Y}}
			\label{3F:fig:US_withcorrel_50bp_tree5_5Y}
		\end{subfigure}
		~
		\begin{subfigure}[b]{0.48\textwidth}
			\centering
			\includegraphics[width=\textwidth]{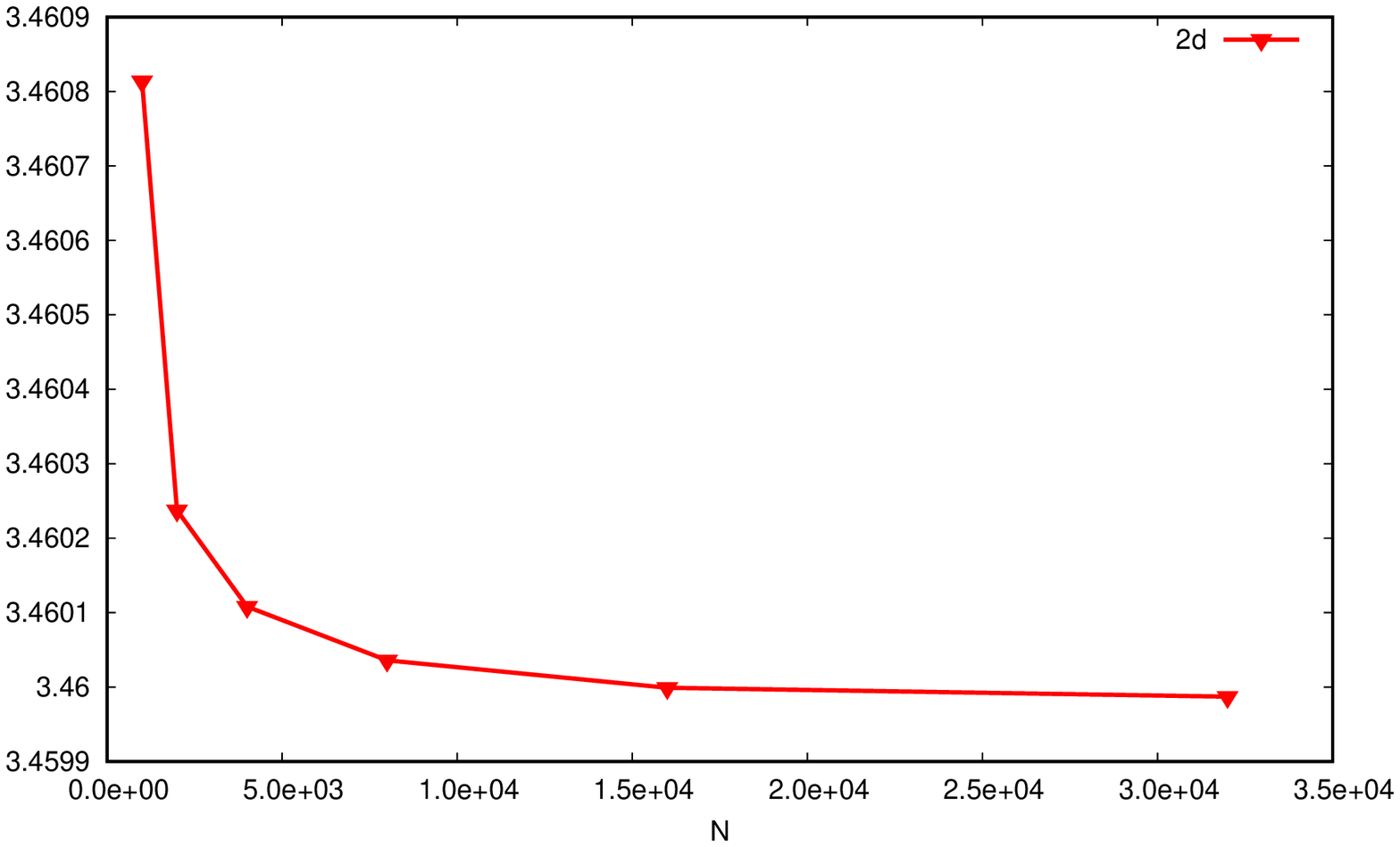}
			\caption{\textit{10Y}}
			\label{3F:fig:US_withcorrel_50bp_tree10_10Y}
		\end{subfigure}
		\caption[Price of 2Y, 5Y and 10Y yearly exercisable Bermudan options using the non-Markovian method (with zero correlations and $\sigma_d = \sigma_f = 50bp$).]{\textit{$\sigma_d = \sigma_f = 50bp$ -- Price of 2Y, 5Y and 10Y yearly exercisable Bermudan options using the non-Markovian method (with correlations).}}
		\label{3F:fig:US_withcorrel_50bp}
	\end{figure}

	\begin{table}[H]
		\centering
		\begin{tabular}{c||c}
			\toprule
			                             & Non-Markovian -- 2d \\ \midrule
			\backslashbox{$T$}{$\sigma$} & $50bp$              \\ \midrule \midrule
			2Y                           & 122 ms (1000)       \\ \midrule
			5Y                           & 553 ms (1000)       \\ \midrule
			10Y                          & 1283 ms (1000)      \\ \bottomrule
		\end{tabular}
		\caption[Computation times for Bermudan yearly exercisable options pricing with correlations using the non-Markovian method.]{\textit{Times in milliseconds needed for reaching a $3bp$ relative precision for Bermudan yearly exercisable options pricing with correlations using the non-Markovian method with, in parenthesis, the size $N$ of the grid at each time step. ($\sigma_d = \sigma_f = \sigma$) }}
		\label{3F:tab:time_us_withcorrel}
	\end{table}

\end{remark}

\section*{Conclusion}

We were looking for a numerical method able to produce accurate prices of Bermudan PRDC options with a 3-factor model in a very short time because the pricing of such products arises in a more complex framework: the computation of counterparty risk measures, also called xVA's.
We proposed two numerical methods based on product optimal quantization with a preference for the non-Markovian one. Indeed, even if we introduce a systematic error with our approximation, the error is controlled, as long as the volatilities of the domestic and foreign interest rates stay reasonable. Moreover, the numerical tests we conducted confirmed that idea: we are able to produce prices of Bermudan options in the 3-factor model in a fast and accurate way.

\section*{Declaration of Interest}
The author reports no conflicts of interest. The author alone is responsible for the content and writing of the paper.

\section*{Acknowledgments}
The PhD thesis of Thibaut Montes is funded by a CIFRE grand from The Independent Calculation Agent (ICA) and French ANRT.

\nocite{*}
\bibliography{bibli}
\bibliographystyle{alpha}

\newpage

\begin{appendices}

	\section{$W^f$ is a Brownian motion under the domestic risk-neutral measure} \label{3F:section:proof_foreignBM}

	Let $(\widetilde W^f)$ a $\widetilde \Prob$-Brownian motion. In this section, we show that the process $W^f$ defined by
	\begin{equation}
		d W^f_s = d \widetilde W^f_s + \rho_{Sf} \sigma_S ds
	\end{equation}
	is a $\Prob$-Brownian motion.

	First, we define the following change of numéraire, where $\widetilde \Prob$ is the foreign risk-neutral probability and $\Prob$ is the domestic risk-neutral probability,r
	\begin{equation*}
		\begin{aligned}
			d \widetilde \Prob
			 & = \frac{S_T}{ S_0 } \exp \bigg(- \int_0^T r_s^d ds \bigg) \exp \bigg( \int_0^T r_s^f ds \bigg) d\Prob \\
			 & = \exp \bigg( \sigma_S W_T^S - \frac{\sigma_S^2}{2} T \bigg) d\Prob                                   \\
		\end{aligned}
	\end{equation*}
	or equivalently
	\begin{equation}
		\begin{aligned}
			d \Prob
			 & = \exp \bigg( - \sigma_S W_T^S + \frac{\sigma_S^2}{2} T \bigg) d \widetilde \Prob                  \\
			 & = \exp \bigg( - \sigma_S ( W_T^S - \sigma_S T ) - \frac{\sigma_S^2}{2} T \bigg) d \widetilde \Prob \\
			 & = \exp \bigg( - \sigma_S \widetilde W_T^S - \frac{\sigma_S^2}{2} T \bigg) d\widetilde \Prob
		\end{aligned}
	\end{equation}
	where $\widetilde W^S$ is a $\widetilde \Prob$-Brownian motion defined by $d \widetilde W_t^S = d W_t^S - \sigma_S dt$. More details concerning the definition of the foreign risk-neutral probability can be found in the Chapter 9 of \cite{shreve2004stochastic}.

	Now, we are looking for $q \in \R$ such that $d W^f_s = d \widetilde W^f_s - q dt$ is a $\Prob$-Brownian motion. Let $\lambda \in \R$ and $\forall t>s$
	\begin{equation}
		\begin{aligned}
			\E \Big[ \e^{\lambda \big( ( \widetilde W^f_t - q t) - ( \widetilde W^f_s - q s) \big) } \mid \F_s \Big]
			 & = \widetilde \E \Big[ \e^{\lambda \big( ( \widetilde W^f_t - q t) - ( \widetilde W^f_s - q s) \big) - \sigma_S ( \widetilde W_T^S - \widetilde W_s^S) - \frac{\sigma_S^2}{2} (T-s) } \mid \F_s \Big]  \\
			 & = \widetilde \E \Big[ \e^{\lambda \big( ( \widetilde W^f_t - q t) - ( \widetilde W^f_s - q s) \big) - \sigma_S ( \widetilde W_t^S - \widetilde W_s^S) - \frac{\sigma_S^2}{2} (t-s) } \mid \F_s \Big]  \\
			 & = \e^{ - \lambda q (t-s) - \frac{\sigma_S^2}{2} (t-s) } \widetilde \E \Big[ \e^{ \lambda ( \widetilde W^f_t - \widetilde W^f_s ) - \sigma_S ( \widetilde W_t^S - \widetilde W_s^S ) } \mid \F_s \Big] \\
			 & = \e^{ - \lambda q (t-s) - \frac{\sigma_S^2}{2} (t-s) } \e^{ \frac{\lambda^2 }{2}(t-s) - \lambda \sigma_S \rho_{Sf} (t-s) + \frac{\sigma_S^2}{2}(t-s) }                                               \\
			 & = \e^{ \frac{\lambda^2 }{2}(t-s) } \e^{ - \lambda q (t-s) - \lambda \sigma_S \rho_{Sf} (t-s)}                                                                                                         \\
			 & = \e^{ \frac{\lambda^2 }{2}(t-s) }                                                                                                                                                                    \\
		\end{aligned}
	\end{equation}
	the last equality is ensured if and only if
	\begin{equation}
		\begin{aligned}
			0 = - \lambda q (t-s) - \lambda \sigma_S \rho_{Sf} (t-s) \quad \iff \quad q = - \sigma_S \rho_{Sf}.
		\end{aligned}
	\end{equation}
	Hence, $W^f$ defined by
	\begin{equation*}
		d W^f_s = d \widetilde W^f_s + \rho_{Sf} \sigma_S ds
	\end{equation*}
	is a $\Prob$-Brownian motion.

	\section{FX Derivatives - European Call} \label{3F:benchmark_european_call} \label{3F:section:closed_form_EU_price}
	The payoff at maturity $t$ of a European Call on $FX$ rate is given by
	\begin{equation*}
		( S_t - K )_+
	\end{equation*}
	with $K$ the strike and $S_t$ the $FX$ rate at time $t$.

	Our aim will be to evaluate $V_{0}$
	\begin{equation*}
		V_{0} = \E \Big[ \e^{ - \int_0^t r_s^d ds } ( S_t - K )_+ \Big].
	\end{equation*}

	\begin{proposition}
		If we consider a 3-factor model on Foreign Exchange and Zero-coupon as defined in \eqref{3F:3FModelFX}, $V_{0}$ is given by\footnote{We ignore the settlements details in the present paper in order to alleviate the notations but the formula can easily be extended to take them into account.}
		\begin{equation*}
			V_{0}
			= S_{0} P^f(0,t) \N \Bigg( \frac{\log \Big( \frac{S_{0} P^f(0,t)}{K P^d(0,t)} \Big) + \mu(0,t) }{ \sigma(0,t) } \Bigg) - K P^d(0,t) \N \Bigg( \frac{\log \Big( \frac{S_{0} P^f(0,t)}{K P^d(0,t)} \Big) - \mu(0,t) }{\sigma(0,t) } \Bigg)
		\end{equation*}
		with
		\begin{equation*}
			\begin{aligned}
				\mu(0,t)
				 & = \int_0^t \frac{1}{2} \big(\sigma^2_S(s) + \sigma_f^2 (s,t) + \sigma_d^2 (s,t)\big) ds                                                               \\
				 & \qquad + \int_0^t \big(\rho_{Sf} \sigma_S(s) \sigma_f (s,t) - \rho_{Sd} \sigma_S(s) \sigma_d(s,t) ds - \rho_{fd} \sigma_f (s,t) \sigma_d(s,t) \big)ds \\
			\end{aligned}
		\end{equation*}
		and
		\begin{equation*}
			\sigma^2(0,t) = 2 \mu(0,t).
		\end{equation*}
	\end{proposition}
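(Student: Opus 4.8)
The plan is to price by a double change of numéraire, reducing $V_{0}$ to a combination of two Gaussian tail probabilities, exactly as in the classical Black--Scholes derivation. First I would split the payoff,
\begin{equation*}
	V_{0} = \E\big[\e^{-\int_0^t r_s^d ds}\,S_t\,\1_{\{S_t > K\}}\big] - K\,\E\big[\e^{-\int_0^t r_s^d ds}\,\1_{\{S_t>K\}}\big].
\end{equation*}
For the second term I introduce the domestic $t$-forward measure $\Q^d$ with $\frac{d\Q^d}{d\Prob} = \e^{-\int_0^t r_s^d ds}/P^d(0,t)$, so that it equals $K\,P^d(0,t)\,\Q^d(S_t>K)$. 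For the first term I use that $\e^{-\int_0^\cdot r_s^d ds}\,S_\cdot\,P^f(\cdot,t)$ is a $\Prob$-martingale (the domestic-discounted price, in domestic currency, of the foreign zero-coupon bond maturing at $t$) together with $P^f(t,t)=1$, which gives $\E[\e^{-\int_0^t r_s^d ds}S_t] = S_0 P^f(0,t)$; this lets me define $\Q^f$ with $\frac{d\Q^f}{d\Prob} = \e^{-\int_0^t r_s^d ds}S_t/(S_0 P^f(0,t))$ and rewrite the first term as $S_0 P^f(0,t)\,\Q^f(S_t>K)$. The martingale property and this normalisation can equivalently be checked by hand from the closed forms of $S_t$, $\e^{-\int_0^t r_s^d ds}$ and $\e^{-\int_0^t r_s^f ds}$ derived in Section~\ref{3F:section:diffusion_models}. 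Thus $V_{0} = S_0 P^f(0,t)\,\Q^f(S_t>K) - K\,P^d(0,t)\,\Q^d(S_t>K)$.

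Next I would identify the law of $\log S_t$. From the explicit solution, $\log S_t = \log\!\big(S_0\varphi_f(t)/\varphi_d(t)\big) - \tfrac12\int_0^t\sigma_S^2(s)\,ds + H$, where $H := \sigma_S W_t^S + \sigma_f\int_0^t(t-s)dW_s^f - \sigma_d\int_0^t(t-s)dW_s^d$ (and its time-dependent-volatility analogue with $\sigma_f(s,t)=\sigma_f(t-s)$, $\sigma_d(s,t)=\sigma_d(t-s)$) is a centered Gaussian under $\Prob$. Computing $\V(H)$ from the quadratic covariations $\langle W^S,W^d\rangle_t = \rho_{Sd}t$, $\langle W^S,W^f\rangle_t = \rho_{Sf}t$, $\langle W^d,W^f\rangle_t = \rho_{df}t$ and the deterministic integrals $\int_0^t(t-s)\,ds = t^2/2$, $\int_0^t(t-s)^2\,ds = t^3/3$, one checks $\V(H) = \sigma^2(0,t) = 2\mu(0,t)$ with $\mu(0,t)$ as in the statement. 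This bookkeeping of the three cross terms (and of the time dependence in the general case) is the one genuinely laborious step, though entirely routine.

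It then remains to transfer this to $\Q^d$ and $\Q^f$. Both $\frac{d\Q^d}{d\Prob}$ and $\frac{d\Q^f}{d\Prob}$ are exponentials of affine functionals of the Gaussian process driving the model, so by the Cameron--Martin theorem $\log S_t$ remains Gaussian under each, with unchanged variance $\sigma^2(0,t)$; only its mean shifts. The means are pinned down with no further computation: under $\Q^d$ one has $\E^{\Q^d}[S_t] = \E[\e^{-\int_0^t r_s^d ds}S_t]/P^d(0,t) = S_0 P^f(0,t)/P^d(0,t)$, hence the Gaussian $\log S_t$ has mean $m_d = \log\!\big(S_0 P^f(0,t)/P^d(0,t)\big) - \tfrac12\sigma^2(0,t)$; and since $\frac{d\Q^f}{d\Q^d} = S_t P^d(0,t)/(S_0 P^f(0,t))$ is proportional to $\e^{\log S_t}$, the elementary exponential-tilting identity for Gaussians yields $m_f = m_d + \sigma^2(0,t) = \log\!\big(S_0 P^f(0,t)/P^d(0,t)\big) + \tfrac12\sigma^2(0,t)$.

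Finally I would assemble the pieces. Using $\mu(0,t) = \tfrac12\sigma^2(0,t)$,
\begin{equation*}
	\Q^d(S_t>K) = \N\!\bigg(\frac{m_d - \log K}{\sigma(0,t)}\bigg) = \N\!\bigg(\frac{\log\!\big(\tfrac{S_0 P^f(0,t)}{K P^d(0,t)}\big) - \mu(0,t)}{\sigma(0,t)}\bigg),
\end{equation*}
and likewise $\Q^f(S_t>K) = \N\!\big((\log(S_0P^f(0,t)/(KP^d(0,t)))+\mu(0,t))/\sigma(0,t)\big)$; substituting these into $V_{0} = S_0 P^f(0,t)\,\Q^f(S_t>K) - K P^d(0,t)\,\Q^d(S_t>K)$ gives the claimed formula. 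The main obstacle is the variance computation of the second step; once $\V(H)=2\mu(0,t)$ is in hand, the change-of-numéraire arithmetic is purely mechanical.
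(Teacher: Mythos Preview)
Your argument is correct. Both you and the paper begin by splitting the payoff and passing to the domestic $t$-forward measure for the cash-or-nothing term $K\,\E[\e^{-\int_0^t r_s^d ds}\1_{\{S_t>K\}}]$; the computations there are essentially identical.

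Where you diverge is in the asset-or-nothing term $\E[\e^{-\int_0^t r_s^d ds}S_t\1_{\{S_t>K\}}]$. The paper stays under the single measure $\widetilde\Q$ (the domestic $t$-forward measure), writes this term as $P^d(0,t)\,\E^{\widetilde\Q}[S_t\1_{\{S_t\geq K\}}]$, observes that $S_t$ is log-normal under $\widetilde\Q$, and invokes the closed-form partial moment $\E[X\1_{\{X\geq x\}}]=\e^{\mu+\sigma^2/2}\N\big((\mu+\sigma^2-\log x)/\sigma\big)$ for $X\sim\textrm{Log-}\N(\mu,\sigma^2)$. You instead perform a second change of numéraire to $\Q^f$ (with density proportional to $\e^{-\int_0^t r_s^d ds}S_t$), so that the term becomes $S_0P^f(0,t)\,\Q^f(S_t>K)$, and then determine the $\Q^f$-mean of $\log S_t$ by the exponential-tilting identity relative to $\Q^d$. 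Your route is more symmetric and avoids quoting the partial-moment formula as a black box; the paper's route needs only one Girsanov computation but hides the second measure change inside the log-normal identity. They are, of course, two presentations of the same mechanism.
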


	\begin{proof}

		In this part, we want to evaluate
		\begin{equation*}
			V_{0} = \E \Big[ \e^{ - \int_0^t r_s^d ds } ( S_t - K )_+ \Big].
		\end{equation*}
		If we consider a 3-factor model on Foreign Exchange and Zero-coupon as defined in \eqref{3F:3FModelFX}, we have
		\begin{equation*}
			\begin{aligned}
				V_0
				 & = \E \Big[ \e^{ - \int_0^t r_s^d ds } ( S_t - K )_+ \Big]                                                                                     \\
				 & = \E \Big[ \big( \e^{ - \int_0^t r_s^d ds } S_t - \e^{ - \int_0^t r_s^d ds } K \big)_+ \Big]                                                  \\
				 & = \E \Big[ \big( \e^{ - \int_0^t r_s^d ds } S_t - \e^{ - \int_0^t r_s^d ds } K \big) \1_{ \{ S_t \geq K \} } \Big]                            \\
				 & = \E \Big[ \e^{ - \int_0^t r_s^d ds } S_t \1_{ \{ S_t \geq K \} } \Big] - K \E \Big[ \e^{ - \int_0^t r_s^d ds }\1_{ \{ S_t \geq K \} } \Big]. \\
			\end{aligned}
		\end{equation*}
		We focus on the first term
		\begin{equation}\label{3F:firstterm}
			K \E \Big[ \e^{ - \int_0^t r_s^d ds }\1_{ \{ S_t \geq K \} } \Big].
		\end{equation}
		We do the following change of numéraire:
		\begin{equation*}
			\frac{d \widetilde \Q }{ d \Prob} = \frac{\widetilde Z_{t}}{\widetilde Z_{0}}
		\end{equation*}
		with
		\begin{equation*} \left\{
			\begin{aligned}
				\widetilde Z_{t} & = \exp \Big( \widetilde Y_{t} - \frac{1}{2} < \widetilde Y, \widetilde Y>_{t} \Big), \\
				\widetilde Z_{0} & = 1                                                                                  \\
			\end{aligned}\right.
		\end{equation*}
		where $\widetilde Y_{t} = \int_0^t \sigma_d (s,t) dW_s^d$ and $< \widetilde Y, \widetilde Y>_{t} = \int_0^t \sigma_d^2 (s,t)ds$.

		Hence, we can define the following Brownian Motions $\widetilde{W}^d$, $\widetilde{W}^f$, $\widetilde{W}^S$ under $\widetilde \Q$:
		\begin{equation*}
			\begin{aligned}
				d \widetilde{W}^d_s & = d W^d_s - d< Y,W^d >_s & = & ~ d W^d_s - \sigma_d(s,t) ds,           \\
				d \widetilde{W}^f_s & = d W^f_s - d< Y,W^f >_s & = & ~ d W^f_s - \rho_{fd} \sigma_d(s,t) ds, \\
				d \widetilde{W}^S_s & = d W^S_s - d< Y,W^S >_s & = & ~ d W^S_s - \rho_{Sd} \sigma_d(s,t) ds
			\end{aligned}
		\end{equation*}
		and $S_t$ becomes
		\begin{equation*}
			\begin{aligned}
				S_t
				 & = S_{0} \exp \bigg( \int_0^t \bigg( r^d_s - r^f_s - \frac{\sigma^2_S(s)}{2} \bigg) ds + \int_0^t \sigma_S(s) dW^S_s \bigg)                                                                        \\
				 & = \frac{S_{0} P^f(0,t)}{P^d(0,t)} \exp \bigg( \int_0^t - \frac{1}{2} \big(\sigma^2_S(s) + \sigma_f^2 (s,t) - \sigma_d^2 (s,t)\big) - \rho_{Sf} \sigma_S(s) \sigma_f (s,t) ~ ds \bigg)             \\
				 & \qquad \times \exp \bigg( \int_0^t \sigma_S(s) dW^S_s + \int_0^t \sigma_f (s,t) dW_s^f - \int_0^t \sigma_d (s,t) dW_s^d \bigg)                                                                    \\
				 & = \frac{S_{0} P^f(0,t)}{P^d(0,t)} \exp \bigg( - \int_0^t \frac{1}{2} \big(\sigma^2_S(s) + \sigma_f^2 (s,t) + \sigma_d^2 (s,t)\big) ~ ds \bigg)                                                    \\
				 & \qquad \times \exp \bigg( - \int_0^t \big(\rho_{Sf} \sigma_S(s) \sigma_f (s,t) - \rho_{Sd} \sigma_S(s) \sigma_d(s,t) - \rho_{fd} \sigma_f (s,t) \sigma_d(s,t) \big) ds \bigg)                     \\
				 & \qquad \qquad \times \exp \bigg( \int_0^t \sigma_S(s) d\widetilde W^S_s + \int_0^t \sigma_f (s,t) d\widetilde W_s^f - \int_0^t \sigma_d (s,t) d\widetilde W_s^d \bigg)                            \\
				 & = \frac{S_{0} P^f(0,t)}{P^d(0,t)} \exp \bigg( - \mu(0,t) + \int_0^t \sigma_S(s) d\widetilde W^S_s + \int_0^t \sigma_f (s,t) d\widetilde W_s^f - \int_0^t \sigma_d (s,t) d\widetilde W_s^d \bigg).
			\end{aligned}
		\end{equation*}

		Hence, as $\exp \Big( - \int_0^t r_s^d ds \Big) = P^d(0,t) \times \widetilde Z_{t}$, \eqref{3F:firstterm} becomes
		\begin{equation*}
			\begin{aligned}
				K \E \Big[ \e^{ - \int_0^t r_s^d ds }\1_{ \{ S_t \geq K \} } \Big]
				 & = K P^d(0,t) \E^{\widetilde \Q} \Big[ \1_{ \{ S_t \geq K \} } \Big]                                                                 \\
				 & = K P^d(0,t) \widetilde \Q ( S_t \geq K )                                                                                           \\
				 & = K P^d(0,t) \widetilde \Q \Bigg( Z \geq \frac{\log \Big( \frac{K P^d(0,t)}{S_{0} P^f(0,t)} \Big) + \mu(0,t) }{\sigma(0,t) } \Bigg) \\
				 & = K P^d(0,t) \widetilde \Q \Bigg( Z \leq \frac{\log \Big( \frac{S_{0} P^f(0,t)}{K P^d(0,t)} \Big) - \mu(0,t) }{\sigma(0,t) } \Bigg) \\
				 & = K P^d(0,t) \N \Bigg( \frac{\log \Big( \frac{S_{0} P^f(0,t)}{K P^d(0,t)} \Big) - \mu(0,t) }{\sigma(0,t) } \Bigg)                   \\
			\end{aligned}
		\end{equation*}
		where $Z \sim \N (0,1)$ with
		\begin{equation*}
			\begin{aligned}
				\mu(0,t)
				 & = \int_0^t \frac{1}{2} \big( \sigma^2_S(s) + \sigma_f^2 (s,t) + \sigma_d^2 (s,t) \big) ds                                                                                                                               \\
				 & \qquad + \int_0^t \big( \rho_{Sf} \sigma_S(s) \sigma_f (s,t) - \rho_{Sd} \sigma_S(s) \sigma_d(s,t) ds - \rho_{fd} \sigma_f (s,t) \sigma_d(s,t) \big) ds,                                                                \\
				\sigma^2(0,t)
				 & = \V \bigg( \int_0^t \sigma_S(s) d\widetilde W^S_s + \int_0^t \sigma_f (s,t) d\widetilde W_s^f - \int_0^t \sigma_d (s,t) d\widetilde W_s^d \bigg)                                                                       \\
				 & = \V \bigg( \int_0^t \sigma_S(s) d\widetilde W^S_s \bigg) + \V \bigg( \int_0^t \sigma_f (s,t) d\widetilde W_s^f \bigg) + \V \bigg( \int_0^t \sigma_d (s,t) d\widetilde W_s^d \bigg)                                     \\
				 & \qquad + 2 \Cov \bigg( \int_0^t \sigma_S(s) d\widetilde W^S_s, \int_0^t \sigma_f (s,t) d\widetilde W_s^f\bigg) - 2 \Cov \bigg( \int_0^t \sigma_S(s) d\widetilde W^S_s, \int_0^t \sigma_d (s,t) d\widetilde W_s^d \bigg) \\
				 & \qquad \qquad - 2 \Cov \bigg(\int_0^t \sigma_f (s,t) d\widetilde W_s^f, \int_0^t \sigma_d (s,t) d\widetilde W_s^d \bigg)                                                                                                \\
				 & = \int_0^t \big( \sigma_S^2(s) + \sigma_f^2 (s,t) + \sigma_d^2(s,t) \big) ds                                                                                                                                            \\
				 & \qquad + 2 \int_0^t \big(\rho_{Sf} \sigma_S(s) \sigma_f (s,t) - \rho_{Sd} \sigma_S(s) \sigma_d (s,t) - \rho_{fd} \sigma_f (s,t) \sigma_d (s,t) \big) ds.
			\end{aligned}
		\end{equation*}

		Now, we deal with the term
		\begin{equation}\label{3F:secondterm}
			\E \Big[ \e^{ - \int_0^t r_s^d ds } S_t \1_{ \{ S_t \geq K \} } \Big] = P^d(0,t) \E^{\widetilde \Q} \big[ S_t \1_{ \{ S_t \geq K \} } \big]
		\end{equation}
		using directly the formula of the first partial moment of a log-normal random variable. Let $X \sim \textrm{Log-}\N(\mu, \sigma^2)$, then
		\begin{equation*}
			\begin{aligned}
				\E \big[ X \1_{ \{ X \geq x \} } \big] = \e^{\mu + \frac{\sigma^2}{2} } \N \bigg( \frac{ \mu + \sigma^2 - \log(x)}{ \sigma } \bigg).
			\end{aligned}
		\end{equation*}
		Finally, as $S_t = \frac{S_{0} P^f(0,t)}{P^d(0,t)} X$ with $X \overset{\widetilde \Q}{\sim} \textrm{Log-}\N(-\mu(0,t), \sigma^2(0,t))$, we get

		\begin{equation*}
			\begin{aligned}
				\eqref{3F:secondterm}
				 & = S_{0} P^f(0,t) \E^{\widetilde \Q} \Bigg[ X \1_{ \Big\{ X \geq \frac{K P^d(0,t)}{S_{0} P^f(0,t)} \Big\} } \Bigg]                                                                   \\
				 & = S_{0} P^f(0,t) \e^{- \mu(0,t) + \frac{\sigma^2(0,t)}{2} } \N \Bigg( \frac{ - \mu(0,t) + \sigma^2(0,t) - \log\Big( \frac{K P^d(0,t)}{S_{0} P^f(0,t)} \Big) }{ \sigma(0,t) } \Bigg) \\
				 & = S_{0} P^f(0,t) \N \Bigg( \frac{ \log\Big(\frac{S_{0} P^f(0,t) }{ K P^d(0,t) }\Big) + \mu(0,t) }{ \sigma(0,t) } \Bigg)                                                             \\
			\end{aligned}
		\end{equation*}
		noticing that $\mu(0,t) = \frac{\sigma^2(0,t)}{2}$.

		Finally, we get
		\begin{equation*}
			\begin{aligned}
				V_{0}
				 & = \E \Big[ \e^{ - \int_0^t r_s^d ds } ( S_t - K )_+ \Big]                                                                                                                                                                                 \\
				 & = \E \Big[ \e^{ - \int_0^t r_s^d ds }S_t \1_{ \{ S_t \geq K \} } \Big] - K \E \Big[ \e^{ - \int_0^t r_s^d ds }\1_{ \{ S_t \geq K \} } \Big]                                                                                               \\
				 & = S_{0} P^f(0,t) \N \Bigg( \frac{\log \Big( \frac{S_{0} P^f(0,t)}{K P^d(0,t)} \Big) + \mu(0,t) }{ \sigma(0,t) } \Bigg) - K P^d(0,t) \N \Bigg( \frac{\log \Big( \frac{S_{0} P^f(0,t)}{K P^d(0,t)} \Big) - \mu(0,t) }{\sigma(0,t) } \Bigg). \\
			\end{aligned}
		\end{equation*}

		Special case of constant volatility: $\sigma_S(s) = \sigma_S$, $ \sigma_d (s,t) = \sigma_d \times (t-s) $ $ \sigma_f (s,t) = \sigma_f \times (t-s)$

		\begin{equation*}
			\begin{aligned}
				\mu(0,t)
				 & = \int_0^t \frac{1}{2} \big( \sigma^2_S(s) + \sigma_f^2 (s,t) + \sigma_d^2 (s,t) \big) ds                                                                                                                                           \\
				 & \qquad + \int_0^t \big( \rho_{Sf} \sigma_S(s) \sigma_f (s,t) - \rho_{Sd} \sigma_S(s) \sigma_d(s,t) - \rho_{fd} \sigma_f (s,t) \sigma_d(s,t) \big) ds                                                                                \\
				 & = \int_0^t \frac{1}{2} \big( \sigma_S^2 + \sigma_f^2 (t-s)^2 + \sigma_d^2 (t-s)^2 \big) ds                                                                                                                                          \\
				 & \qquad + \int_0^t \rho_{Sf} \sigma_S \sigma_f (t-s) - \rho_{Sd} \sigma_S \sigma_d (t-s) - \rho_{fd} \sigma_f \sigma_d (t-s)^2 ds                                                                                                    \\
				 & = \frac{1}{2} \bigg( \sigma_S^2 t + \sigma_f^2 \frac{t^3}{3} + \sigma_d^2 \frac{t^3}{3} \bigg) + \rho_{Sf} \sigma_S \sigma_f \frac{t^2}{2} - \rho_{Sd} \sigma_S \sigma_d \frac{t^2}{2} - \rho_{fd} \sigma_f \sigma_d \frac{t^3}{3}, \\
				\sigma^2(0,t)
				 & = \int_0^t \big( \sigma_S^2(s) + \sigma_f^2 (s,t) + \sigma_d^2 (s,t) \big) ds                                                                                                                                                       \\
				 & \qquad + 2 \int_0^t \big(\rho_{Sf} \sigma_S(s) \sigma_f (s,t) - \rho_{Sd} \sigma_S(s) \sigma_d (s,t) - \rho_{fd} \sigma_f (s,t) \sigma_d (s,t) \big) ds                                                                             \\
				 & = 2 \mu(0, t).                                                                                                                                                                                                                      \\
			\end{aligned}
		\end{equation*}
	\end{proof}

\end{appendices}

\end{document}